\newsavebox{\tablebox}
\def\dprog{\mathit{DProg}}
\def\dspec{\mathit{Spec^e}}
\def\nprog{\mathit{NProg}}
\def\nspec{\mathit{Spec^s}}
\def\spec{\mathit{Spec}}
\def\dtsat{\models_{\tot}}
\def\dpsat{\models_{\pal}}
\def\ntsat{\models_{\tot}}
\def\npsat{\models_{\pal}}
\def\atsat{\models_{\tot}}
\def\apsat{\models_{\pal}}
\def\dtrefine{\le_T^e}
\def\dprefine{\le_P^e}
\def\ntrefine{\le_T^s}
\def\nprefine{\le_P^s}
\def\atrefine{\le_T^p}
\def\aprefine{\le_P^p}
\begin{document}
	
	%% Title information
	\title{Refinement orders for quantum programs}
	
	\author{Yuan Feng}
	\affiliation{
		\institution{Department of Computer Science and Technology, Tsinghua University}
		\city{Beijing}
		\country{China}
	}
	\email{yuan_feng@tsinghua.edu.cn}
	\author{Li Zhou}
	\affiliation{
		\institution{State Key Laboratory of Computer Science, Institute of Software, Chinese Academy of Sciences}
		\city{Beijing}
		\country{China}
	}
	\email{zhouli@ios.ac.cn}
	
	\begin{abstract}

Refinement is a fundamental technique in the verification and systematic development of computer programs. It supports a disciplined approach to software construction through stepwise refinement, whereby an abstract specification is gradually transformed into a concrete implementation that satisfies the desired requirements. Central to this methodology is the notion of a refinement order, which guarantees that each refinement step preserves program correctness.

This paper presents the first comprehensive study of refinement orders for quantum programs, covering both deterministic and nondeterministic settings under total and partial correctness criteria. We investigate three natural classes of quantum predicates: projectors, representing qualitative properties; effects, capturing quantitative properties; and sets of effects, modeling demonic nondeterminism. For deterministic quantum programs, we show that refinement with respect to effect-based and set-of-effects based specifications coincides with the standard complete-positivity order on superoperators, whereas refinement induced by projector-based specifications can be characterized by the linear span of Kraus operators. For nondeterministic quantum programs with set-of-effects based specifications, we establish precise correspondences with classical domain-theoretic notions: the Smyth order characterizes refinement under total correctness, while the Hoare order characterizes refinement under partial correctness. Moreover, effect-based and projector-based specifications lead to strictly weaker refinement orders. 

From a theoretical perspective, our results provide a solid semantic foundation for the development of quantum refinement calculi. From a practical standpoint, they offer concrete guidance for quantum program designers in selecting appropriate predicate classes and correctness notions to support their intended refinement goals.

	\end{abstract}

	%% 2012 ACM Computing Classification System (CSS) concepts
	%% Generate at 'http://dl.acm.org/ccs/ccs.cfm'.
%	\begin{CCSXML}
%		<ccs2012>
%		<concept>
%		<concept_id>10011007.10011006.10011008</concept_id>
%		<concept_desc>Software and its engineering~General programming languages</concept_desc>
%		<concept_significance>500</concept_significance>
%		</concept>
%		<concept>
%		<concept_id>10003456.10003457.10003521.10003525</concept_id>
%		<concept_desc>Social and professional topics~History of programming languages</concept_desc>
%		<concept_significance>300</concept_significance>
%		</concept>
%		</ccs2012>
%	\end{CCSXML}
%	
%	\ccsdesc[500]{Software and its engineering~General programming languages}
%	\ccsdesc[300]{Social and professional topics~History of programming languages}
	%% End of generated code

	%% Keywords
	%% comma separated list
	%\emphords{Hoare logic, quantum programming, verification}  %% \emphords are mandatory in final camera-ready submission

	%% \maketitle
	%% Note: \maketitle command must come after title commands, author
	%% commands, abstract environment, Computing Classification System
	%% environment and commands, and keywords command.
	\maketitle
	
	\newcommand \sker[1] {\mathcal{N}\left(#1\right)}
\newcommand \assert[1] {\mathbf{assert}\ #1}
\newcommand \alert[1] {{\color{red} #1}}
\newcommand {\lfp} {\mathbf{lfp}\ }
\newcommand {\gfp} {\mathbf{gfp}\ }
\newcommand {\E} {\supoprset}
\newcommand {\F} {\mathbb{F}}
\newcommand {\tot} {\mathit{tot}}
\newcommand {\pal} {\mathit{par}}
\newcommand{\conf}[2]{\left\<#1, #2\right\>}
\newcommand{\cqs}[2]{\left\<#1, #2\right\>}
\newcommand{\spdomain}[1]{\p(#1)}
\newcommand{\upcl}{\,\uparrow\!}
\newcommand{\downcl}{\,\downarrow\!}
\newcommand{\cccl}{cc.}
\newcommand {\op} {\mathit{op}}
\newcommand{\vecspace}{V}
\newcommand{\refine}{\le}

\newcommand{\compstate}{\mathrm{Val}}
\newcommand{\compts}{\mathrm{Comp}}

\newcommand{\pcom}[1]{\ {}_{#1}\!\!\oplus}

\newcommand {\nondet} {\ \square\ }

\newcommand {\empstr} {\Lambda}

\newcommand {\qcf}[1] {{\sf{#1}}}

\newcommand {\qc}[1] {{\sf{#1}}}
\def\>{\ensuremath{\rangle}}
\def\<{\ensuremath{\langle}}
\def\sl {\ensuremath{\llparenthesis}}
\def\sr{\ensuremath{\rrparenthesis}}
\def\-{\ensuremath{\textrm{-}}}
\def\ott{t}
\def\otu{u}
\def\ots{s}
\def\apply{\mathrel{*\!\!=}}
\def\dhall{\d(\h_{\QVar})}

\def\comm{\ensuremath{\leftrightarrow^*}}
\def\reach{\ensuremath{\rightarrow^*}}

\def\ctp{P}
\def\ctq{Q}

\def\change{\ensuremath{\mathit{change}}}

\def\qVar{\ensuremath{\mathit{qv}}}
\def\qv{\ensuremath{\mathit{qv}}}
\def\cVar{\ensuremath{\mathit{cv}}}
\def\QVar{\ensuremath{\mathcal{V}}}
\def\CVar{\ensuremath{\mathit{cVar}}}
\def\Var{\ensuremath{\mathit{var}}}
\def\Chan{\mathit{chan}}
\def\cChan{\mathit{cChan}}
\def\qChan{\mathit{qChan}}
\def\BExp{\mathit{BExp}}
\def\Exp{\mathit{Exp}}

\def\fdmu{\Delta}
\def\fdnu{\dnu}
\def\fdomega{\domega}

\def\dmu{\mu}
\def\dnu{\nu}
\def\domega{\omega}
\def\expect{\mathbb{E}}
\def\preexpect{\mathrm{pre}\mathbb{E}}

\def\rassign{:=_{\$}}
\def\fpi{\widehat{\pi}}
\def\h{\ensuremath{\mathcal{H}}}
\def\p{\ensuremath{\mathcal{P}}}
\def\l{\ensuremath{\mathcal{L}}}
\def\g{\ensuremath{\mathcal{G}}}
\def\lh{\ensuremath{\mathcal{L(H)}}}
\def\dh{\ensuremath{\mathcal{D(H})}}
\def\dhv{\ensuremath{\d(\h_{\QVar})}}
\def\q{\bold Q}
\def\Q{\ensuremath{\mathbb Q}}
\def\P{\ensuremath{\mathbb P}}
\def\SO{\ensuremath{\mathcal{SO}}}
\def\HP{\ensuremath{\mathcal{HP}}}
\def\hpe{\ensuremath{\mathcal{\e}}}

\def\r{\ensuremath{\mathcal{R}}}
\def\R{\ensuremath{\mathbb{R}}}
\def\m{\ensuremath{\mathcal{M}}}
\def\u{\ensuremath{\mathcal{U}}}
\def\k{\ensuremath{\mathcal{K}}}
\def\K{\ensuremath{\mathfrak{K}}}
\def\S{\ensuremath{\mathfrak{S}}}
\def\s{\ensuremath{\mathcal{S}}}
\def\t{\ensuremath{\mathcal{T}}}
\def\u{\ensuremath{\mathcal{U}}}
\def\U{\ensuremath{\mathfrak{U}}}
\def\L{\ensuremath{\mathfrak{L}}}
\def\x{\ensuremath{\mathcal{X}}}
\def\y{\ensuremath{\mathcal{Y}}}
\def\z{\ensuremath{\mathcal{Z}}}
\def\v{\ensuremath{\mathcal{V}}}

\def\st{\ensuremath{\mathfrak{t}}}
\def\su{\ensuremath{\mathfrak{u}}}
\def\ss{\ensuremath{\mathfrak{s}}}

\def\ra{\ensuremath{\rightarrow}}
\def\a{\ensuremath{\mathcal{A}}}
\def\b{\ensuremath{\mathcal{B}}}
\def\c{\ensuremath{\mathcal{C}}}

\def\e{\ensuremath{\mathcal{E}}}
\def\f{\ensuremath{\mathcal{F}}}
\def\l{\ensuremath{\mathcal{L}}}
\def\X{\mbox{\bf{X}}}
\def\N{\mathbb{N}}
\def\sreal{\mathbb{R}}
\def\Z{\mathbb{Z}}

\def\qzz{\ensuremath{|0\>_q\<0|}}
\def\qoo{\ensuremath{|1\>_q\<1|}}
\def\qzo{\ensuremath{|0\>_q\<1|}}
\def\qoz{\ensuremath{|1\>_q\<0|}}
\def\qii{\ensuremath{|i\>_q\<i|}}
\def\qiz{\ensuremath{|i\>_q\<0|}}
\def\qzi{\ensuremath{|0\>_q\<i|}}

\def\quzz{\ensuremath{|0\>_{\bar{q}}\<0|}}
\def\quoo{\ensuremath{|1\>_{\bar{q}}\<1|}}
\def\quzo{\ensuremath{|0\>_{\bar{q}}\<1|}}
\def\quoz{\ensuremath{|1\>_{\bar{q}}\<0|}}
\def\quii{\ensuremath{|i\>_{\bar{q}}\<i|}}
\def\quiz{\ensuremath{|i\>_{\bar{q}}\<0|}}
\def\quzi{\ensuremath{|0\>_{\bar{q}}\<i|}}

\DeclarePairedDelimiter{\ceil}{\lceil}{\rceil}

\def\d{\ensuremath{\mathcal{D}}}
\def\dh{\ensuremath{\mathcal{D(H)}}}
\def\lh{\ensuremath{\mathcal{L(H)}}}
\def\le{\ensuremath{\sqsubseteq}}
\def\ge{\ensuremath{\sqsupseteq}}
\def\eval{\ensuremath{{\psi}}}
\def\aeq{\ensuremath{{\ \equiv\ }}}
\def\osnt{\ensuremath{\sl \ott, \e\sr}}
\def\snt{\st}
\def\snti{\ensuremath{\sl \ott_i, \e_i\sr}}
\def\osnu{\ensuremath{\sl \otu, \f\sr}}
\def\osns{\ensuremath{\sl s, \g\sr}}
\def\snu{\su}
\def\fdist{\ensuremath{\d ist_\h}}
\def\dist{\ensuremath{Dist}}
\def\wtx{\ensuremath{\widetilde{X}}}

\def\bv{1{v}}
\def\bV{\mathbf{V}}
\def\bf{\mathbf{f}}
\def\bw{\mathbf{w}}
\def\zo{\mathbf{0}}
\def\bX{\mathbf{X}}
\def\bDelta{\mathbf{\Delta}}
\def\bdelta{\boldsymbol{\delta}}
\def\next{\mathcal{X}}
\def\until{\mathcal{U}}

\def\leqI{\ensuremath{\mathcal{SI}(\h)}}
\def\leqIq{\ensuremath{\mathcal{SI}_{\eqsim}(\h)}}
\def\oact{\ensuremath{\alpha}}
\def\oactb{\ensuremath{\beta}}
\def\sact{\ensuremath{\gamma}}
\def\fpi{\ensuremath{\widehat{\pi}}}
\newcommand{\supp}[1]{\ensuremath{\left\lceil{#1}\right\rceil}}
\newcommand{\support}[1]{\lceil{#1}\rceil}

\newcommand{\abis}{\stackrel{\lambda}\approx}
\newcommand{\abisa}[1]{\stackrel{#1}\approx}
\newcommand {\qbit} {\mbox{\bf{new}}}

\renewcommand{\theenumi}{(\arabic{enumi})}
\renewcommand{\labelenumi}{\theenumi}
\newcommand{\tr}{{\rm tr}}
\newcommand{\rto}[1]{\stackrel{#1}\rightarrow}
\newcommand{\orto}[1]{\stackrel{#1}\longrightarrow}
\newcommand{\srto}[1]{\stackrel{#1}\longmapsto}
\newcommand{\sRto}[1]{\stackrel{#1}\Longmapsto}

\newcommand{\ass}[3]{\left\{#1\right\}#2\left\{#3\right\}}
\newcommand{\andor}{\ \&\ }

\newcommand {\true} {\ensuremath{{\mathbf{true}}}}
\newcommand {\false} {\ensuremath{{\mathbf{false}}}}
\newcommand {\abort}{\ensuremath{{\mathbf{abort}}}}
\newcommand {\sskip} {\mathbf{skip}}

\newcommand {\then} {\ensuremath{\mathbf{then}}}
\newcommand {\eelse} {\ensuremath{\mathbf{else}}}
\newcommand {\while} {\ensuremath{\mathbf{while}}}
\newcommand {\ddo} {\ensuremath{\mathbf{do}}}
\newcommand {\pend} {\ensuremath{\mathbf{end}}}
\newcommand {\inv} {\ensuremath{\mathbf{inv}}}

\newcommand {\mymeas} {\mathbf{measure}}

\newcommand {\fail} {\mathbf{fail}}
\newcommand {\iif} {\mathbf{if}}
\newcommand {\fii} {\mathbf{fi}}
\newcommand {\od} {\mathbf{od}}
\def\mstm{\iif\ b\ \then\ S_1\ \eelse\ S_0\ \pend}
\def\wstm{\while\ b\ \ddo\ S\ \pend}

\newcommand\pmeasstm[3]{\iif\ #1\ \then\ #2\ \eelse\ #3\ \pend}
\def\pmstm{\iif\ P[\bar{q}]\ \then\ S_1\ \eelse\ S_0\ \pend}
\def\pwstm{\while\ P[\bar{q}]\ \ddo\ S\ \pend}

\newcommand\measstm[3]{\iif\ #1\ra\ #2\ \square\ \neg (#1)\ra #3\ \fii}

\newcommand\whilestm[1]{\while\ P[\bar{q}]\ \ddo\ #1\ \pend}

\newcommand\alterex{\iif\ B_1\ra S_1 \square\ldots\square B_n\ra S_n\ \fii}

\newcommand\altercom{\iif\ \square_{i=1}^n B_i\ra S_i\ \fii}

\newcommand\repex{\ddo\ B_1\ra S_1 \square\ldots\square B_n\ra S_n\ \od}

\newcommand\repcom{\ddo\ \square_{i=1}^n B_i\ra S_i\ \od}

\newcommand\seqcom{\ddo\ \square_{i=1}^n B_i;\alpha_i\ra S_i\ \od}

\newcommand {\spann} {\mathrm{span}}

\newcommand{\rrto}[1]{\xhookrightarrow{#1}}
\newcommand{\con}[3]{\iif\ {#1}\ \then\ {#2}\ \eelse\ {#3}\ \pend}

\newcommand{\Rto}[1]{\stackrel{#1}\Longrightarrow}
\newcommand{\nrto}[1]{\stackrel{#1}\nrightarrow}

\newcommand{\Rhto}[1]{\stackrel{\widehat{#1}}\Longrightarrow}
\newcommand{\define}{\ensuremath{\triangleq}}
\newcommand{\rsim}{\simeq}
\newcommand{\obis}{\approx_o}
\newcommand{\sbis}{\ \dot\approx\ } 
\newcommand{\stbis}{\ \dot\sim\ } 
\newcommand{\nssbis}{\ \dot\nsim\ } 

\newcommand{\bis}{\sim}
\newcommand{\rat}{\rightarrowtail}
\newcommand{\wbis}{\approx}
\newcommand{\id}{\mathcal{I}}
\newcommand{\stet}[1]{\{ {#1}  \}  } % singleton set
\newcommand{\unw}[1]{\stackrel{{#1}}\sim}
\newcommand{\rma}[1]{\stackrel{{#1}}\approx}

\def\step{\textsf{step}}
\def\obs{\textsf{obs}}
\def\dom{\textsf{dom}}
\def\purge{\textsf{ipurge}}
\def\source{\textsf{sources}}
\def\cnt{\textsf{cnt}}
\def\read{\textsf{read}}
\def\alter{\textsf{alter}}
\def\dirac#1{\delta_{#1}}

\def\tybool{\ensuremath{\mathbf{Boolean}}}
\def\tyint{\ensuremath{\mathbf{Integer}}}
\def\tyqubit{\ensuremath{\mathbf{Qubit}}}
\def\tyqudit{\ensuremath{\mathbf{Qudit}}}
\def\tyqureg{\ensuremath{\mathbf{Qureg}}}
\def\tyunitreg{\ensuremath{\mathbf{Unitreg}}}
\def\type{\ensuremath{\mathit{type}}}

\def\qstate{\rho}
\def\qassert{\Theta}
\def\qassertp{\Psi}
\def\casserts{\a}
\def\cstate{S}
\def\cstates{\prog}
\def\cassert{p}
\def\emptydis{\bot}
\def\qset{Q}
\def\qsetp{R}
\def\Exp{\mathrm{Exp}}

\def\supoprset{\mathbb{E}}

\def\leinf{\preceq_{\mathsf{dem}}}
\def\lesup{\preceq_{\mathsf{ang}}}

\def\geinf{\ge_{\mathit{inf}}}
\def\qstates{\s_V}
\def\qasserts{\a_V}
\def\qstatesh#1{\d(\mathcal{H}_{#1})}

\def\qassertsh#1{\mathcal{A}_{#1}}

\def\qstatesp{\mathcal{S}(\h')}

\newcommand\prog{\mathit{Prog}}
\def\ph{\ensuremath{\mathcal{P}(\h)}}
\def\phv{\ensuremath{\mathcal{P}(\h_V)}}

\def\l{\mathcal{L}}
\def\k{\mathcal{K}}
\def\qmc {\color{red}}
\def\dtmc {\color{black}}
\newcommand{\ysim}[1]{\stackrel{#1}\sim}
\def\z{\mathbf{0}}
\newcommand{\TRANDA}[3]{#1\xrightarrow{#2}_{{\sf D}}#3}
\def\pdist{\mathit{pDist}}

\def\C{\mathbb{C}}

\newcommand{\subs}[2]{{#2}/{#1}}

\def \Rm#1{\mbox{\rm #1}}
\def \lsem      {\raise1pt\hbox{\Rm {[\kern-.12em[}}}
\def \rsem      {\raise1pt\hbox{\Rm {]\kern-.12em]}}}
\def \sem#1{\mbox{\lsem$#1$\rsem}}

\newtheorem{remark}{Remark}

	\section{Introduction}
	
	The increasing complexity and criticality of modern software systems demand rigorous development methodologies to ensure correctness and reliability. Refinement calculus~\cite{dijkstra1976discipline,morgan1994programming,back1998refinement,back1981correct} provides a structured and systematic approach to program construction by enabling stepwise refinement from abstract specifications to concrete implementations while preserving correctness at each stage. This methodology not only enhances software reliability but also promotes important software engineering principles such as modularity, maintainability, and code reuse~\cite{wirth1971program,kourie2012correctness,mciver2020correctness}. Beyond program development, refinement also finds applications in program verification~\cite{dijkstra1968constructive,correll1978proving} and compilation~\cite{hoare1992refinement,fidge1997modelling}.
	
	At the heart of refinement calculus lies the concept of a \emph{refinement order}, which formalizes the idea that one program refines another if it meets all the specifications that the latter satisfies~\cite{back1981correct}. Formally, let \( \prog \) denote the set of programs and \( \spec \) the set of specifications. For any two programs \( S, S' \in \prog \), we say that \( S \) refines \( S' \) (or that \( S' \) is refined by \( S \)) if, for every specification \( X \in \spec \),
	\begin{equation}\label{eq:metadef}
		S' \models X \quad \implies \quad S \models X
	\end{equation}
	where \( \models \) denotes the satisfaction relation. Stepwise development from a specification \( X \) then involves constructing a sequence of programs \( S_1, \ldots, S_n \) such that \( S_1 \models X \) and \( S_{i+1} \) refines \( S_i \) for \( 1 \leq i \leq n-1 \). The defining equation~\eqref{eq:metadef} ensures that the final program \( S_n \), typically a fully executable implementation, necessarily satisfies \( X \).
	
	\subsection*{Need for Refinement of Quantum Programs}
	
	Quantum computing represents a paradigm shift in computation, leveraging the principles of quantum mechanics, such as superposition, entanglement, and the inherently probabilistic and destructive nature of quantum measurement, to solve problems that are infeasible for classical computers~\cite{nielsen2002quantum,shor1994algorithms,grover1996fast,harrow2009quantum,childs2003exponential}. As quantum technologies mature and quantum computers with hundreds of qubits become available, the development of quantum algorithms and applications is transitioning from theoretical exploration to practical implementation. This transition amplifies the need for rigorous methodologies to design, verify, and optimize quantum programs.
	
	The stakes are particularly high in quantum computing: quantum programs are notoriously difficult to debug due to the no-cloning theorem and the destructive nature of measurement. Moreover, quantum resources, both the coherence time of quantum hardware and the number of available qubits, are precious and limited. These factors make correctness-by-construction approaches, such as refinement calculus, especially attractive for quantum program development. By ensuring correctness at each refinement step, we can avoid costly debugging cycles and make efficient use of scarce quantum resources.
	
	While refinement calculus has been well-established in classical programming, its extension to quantum programs presents unique challenges due to the fundamental differences between quantum and classical systems. 
	A central challenge in developing refinement calculus for quantum programs is the lack of consensus on the appropriate choice of quantum predicates. Unlike classical programs, where predicates are typically expressed as Boolean-valued functions over states, quantum mechanics offers multiple reasonable choices for describing quantum state properties. The existing literature has explored various options, including projectors (Hermitian operators with eigenvalues in $\{0,1\}$)~\cite{zhou2019applied, unruh2019quantum, unruh2019quantumr, feng2023abstract}, effects (Hermitian operators with eigenvalues in $[0,1]$)~\cite{d2006quantum, ying2012floyd,ying2024foundations}, and sets of effects~\cite{feng2023verification}. Each choice leads to different expressive power and different notions of program refinement.
	
	Furthermore, as in classical refinement calculus, we must distinguish between total correctness (which accounts for termination) and partial correctness (which does not). The interplay between these different predicate types and correctness criteria gives rise to a rich variety of refinement orders for quantum programs. However, the relationships between these orders and their practical implications remain poorly understood. 
	
	\subsection*{Our Approach and Contributions}
	
	Inspired by the foundational work of Back~\cite{back1981correct} on classical programs, this paper aims to address these challenges by systematically characterizing refinement orders for quantum programs under various combinations of predicate types and correctness criteria. We adopt a semantic approach that abstracts away from specific language syntax, ensuring that our results are language-independent and broadly applicable. 
	
	To be specific, we model deterministic quantum programs as completely positive and trace-nonincreasing (CPTN) super-operators. In contrast to quantum channels, represented by completely positive and trace-preserving (CPTP) super-operators in quantum information theory, we allow trace-nonincreasing behavior to account for possible non-termination in quantum programs. This choice aligns with standard practice in quantum program semantics and is justified by the fact that the quantum while-language (and its extensions) can implement any CPTN super-operator on finite-dimensional Hilbert spaces~\cite{ying2012floyd, chadha2006reasoning, Kakutani:2009, unruh2019quantum, feng2007proof, rand2019verification,feng2023verification}. Correspondingly, nondeterministic quantum programs are modeled as sets of CPTN super-operators.
	
	Specifications in our framework are represented as pairs of predicates \( (X, Y) \), where \( X \) denotes the precondition and \( Y \) denotes the postcondition. We systematically investigate how the choice of predicate type—projectors, effects, or sets of effects—affects the resulting refinement orders, for both total and partial correctness. Table~\ref{tb:defs} summarizes the various program elements and the corresponding mathematical objects we study in this paper.

	\begin{table}[t]
		\centering
		\renewcommand{\arraystretch}{1.5}
		\resizebox{\textwidth}{!}{%
			\begin{tabular}{c|c|c}
				\hline
				Program Elements & Mathematical Objects & Orders \\
				\hline \hline
				\makecell{Projector predicates $P, Q$, etc.\\ Effect predicates $M, N$, etc.} 
				&\makecell{Linear operators\\ in $\mathcal{L}(\mathcal{H})$} 
				& \makecell{L\"{o}wner order (based on positivity):\\ $A \le B$ iff $\ B - A$ is positive} \\
				\hline
				\makecell{Deterministic quantum\\ programs $\mathcal{E}, \mathcal{F}$, etc.} 
				&	\makecell{Super-operators\\ in $\mathcal{L}(\mathcal{H}) \to \mathcal{L}(\mathcal{H})$} 
				& \makecell{Order based on complete positivity:\\ $\mathcal{E} \le \mathcal{F}$ iff $\ \mathcal{F} - \mathcal{E}$ is completely positive} \\
				\hline
				\makecell{Set-of-effect predicates\\ $\qassert, \qassertp$, etc.} 
				&\makecell{Subsets of linear operators\\ in $2^{\mathcal{L}(\mathcal{H})}$} 
				&  \multirow{2}{15em}{\makecell{Hoare and Smyth orders:\smallskip\\
						$X \leq_H Y$ iff $\ \forall x \in X,\ \exists y \in Y:\ x \le y$\\
						$X \leq_S Y\ $ iff $\ \forall y \in Y,\ \exists x \in X:\ x \le y$}} \\
				\cline{1-2}
				\makecell{Nondeterministic quantum\\ programs $\mathbb{E}, \mathbb{F}$, etc.} 
				&\makecell{Subsets of super-operators\\ in $2^{\mathcal{L}(\mathcal{H}) \to \mathcal{L}(\mathcal{H})}$} 
				& \\
				\hline
		\end{tabular}}
		\caption{Main program elements, the corresponding mathematical objects, and the orders investigated in this paper.}\label{tb:defs}
	\end{table}

	Our key contributions include the following (see Table~\ref{tb:results} for a comprehensive summary):
	
	\begin{enumerate}
		\item For \emph{deterministic} programs, we demonstrate that when effects are taken as state predicates, the refinement orders align precisely with intrinsic orders based on complete positivity. Specifically, for CPTN super-operators $\e$ and $\f$: (i) $\f$ refines $\e$ under total correctness iff $\ \e\le \f$, and (ii) $\f$ refines $\e$ under partial correctness iff $\ \f\le \e$, where $\e\le \f$ means that $\f-\e$ is completely positive. Moreover, we show that strengthening state predicates from effects to sets of effects does not alter the refinement order, whereas weakening them to projectors results in a strictly weaker refinement order. 
		
		\item For \emph{nondeterministic} quantum programs, when sets of effects are taken as state predicates, we establish precise connections between refinement orders and the well-known Hoare and Smyth orders from domain theory. Specifically, for sets of CPTN super-operators $\E$ and $\F$: (i) $\F$ refines $\E$ under total correctness iff $\ \E \leq_S \F$ (Smyth order), and (ii) $\F$ refines $\E$ under partial correctness iff $\ \F \leq_H \E$ (Hoare order). Furthermore, we show that weakening state predicates to effects or projectors leads to strictly weaker refinement orders.

		\item By examining all combinations of predicate types (projectors, effects, sets of effects) and correctness criteria (total, partial) for both deterministic and nondeterministic programs, we provide a complete landscape of refinement orders for quantum programs. 
	\end{enumerate}
	
	\begin{table}[t]
		\centering
		\renewcommand{\arraystretch}{1.5}
		\resizebox{\textwidth}{!}{%	
			\begin{tabular}{c|c|c|c|c|c|c}
				\hline
				\multirow{2}{3em}{ } &  \multicolumn{2}{c|}{Projector predicates}  &   \multicolumn{2}{c|}{Effect predicates}  &   \multicolumn{2}{c}{Set-of-effect predicates}\\ \cline{2-7}
				& Total $\le_T^p$  & Partial $\le_P^p$   & Total $\le_T^e$  & Partial $\le_P^e$  &  Total $\le_T^s$  & Partial $\le_P^s$ \\
				\hline\hline
				\makecell{\\Deterministic\\ programs} & \makecell{Weaker\\ than effects\\ $\le_T^e {\subsetneq} \le_T^p $} & \makecell{Weaker\\ than effects\\ $\le_P^e {\subsetneq} \le_P^p $}   & \makecell{Char. by\\ CP-order\\ $\le_T^e {=} \le$} & \makecell{Char. by\\ CP-order\\$\le_P^e {=} \ge$} &  \makecell{Equivalent\\ to effects\\  $\le_T^s {=} \le_T^e$} &  \makecell{Equivalent\\ to effects\\  $\le_P^s {=} \le_P^e$}  
				\\  \cline{2-7}
				&  \multicolumn{2}{c|}{Full char.  (Thm~\ref{thm:detproj})} &  \multicolumn{2}{c|}{Full char.  (Thm~\ref{thm:det})} &  \multicolumn{2}{c}{Cor.~\ref{cor:nondet}}	\\ 
				\hline
				\makecell{\\ Nondeterministic\\ programs} &  \makecell{Weaker\\ than effects\\ $\le_T^e {\subsetneq} \le_T^p $} & \makecell{Weaker\\ than effects\\ $\le_P^e {\subsetneq} \le_P^p $}   & \makecell{Weaker\\ than\\
					effect sets\\ $\le_T^s {\subsetneq} \le_T^e $} &\makecell{Weaker\\ than\\
					effect sets\\ $\le_P^s {\subsetneq} \le_P^e $} &  \makecell{Char. by\\ Smyth order\\  $\le_T^s {=} \leq_S$} &  \makecell{Char. by\\ Hoare order\\  $\le_P^s {=} \geq_H$}   \\  \cline{2-7}
				&  \multicolumn{2}{c|}{Full char. (Thm~\ref{thm:ndetproj_complete})} &  \multicolumn{2}{c|}{Prop~\ref{prop:simpe}} &  \multicolumn{2}{c}{Full char. (Thm.~\ref{thm:nondet})}	\\ 
				\hline
		\end{tabular}}
		\caption{Summary of main results: relationships between refinement orders for different types of quantum programs (rows) under various forms of quantum predicates (columns). Each combination is analyzed with respect to both total and partial correctness.} \label{tb:results}
	\end{table}
	
	These contributions lay the semantic foundations for developing practical refinement calculi for quantum programs and provide theoretical insights that can guide the design of quantum programming languages and verification tools.
	
	\paragraph{Paper Organization}
	The remainder of this paper is organized as follows. After reviewing related work, Section~\ref{sec:preliminaries} provides necessary preliminaries on quantum computing, including key concepts such as partial density operators, quantum measurement, and super-operators. Section~\ref{sec:predicate} introduces various notions of quantum predicates and discusses their use in specifying properties of quantum states. Section~\ref{sec:deterministic} examines refinement orders for deterministic quantum programs, characterizing these orders in terms of intrinsic orders based on complete positivity and linear sum of Kraus operators. We further analyze the impact of different state predicates on the refinement orders. Section~\ref{sec:nondeterministic} extends the analysis to nondeterministic quantum programs, establishing connections with the Hoare and Smyth orders from domain theory. Finally, Section~\ref{sec:conclusion} concludes the paper with a summary of our findings and directions for future research.
	
	\paragraph{Related Work}
	
	The seminal work by Back~\cite{back1981correct} established the theoretical foundations for refinement calculus in classical computing, studying refinement orders for both deterministic and nondeterministic programs with respect to partial and total correctness. Our work extends this framework to the quantum setting. However, the quantum case presents significantly greater complexity due to the variety of possible state predicates. While classical programs typically use Boolean predicates over states, quantum computing admits multiple mathematically valid choices—orthogonal projectors, effects, and sets of effects—each with different expressive power and implications for refinement. This multiplicity of choices makes the landscape of quantum refinement orders considerably more intricate than the classical case.
	
	Our work takes a semantic, language-independent approach to characterizing refinement orders. Complementary to our theoretical investigation, recent studies have developed practical refinement calculi with concrete rules for quantum program development within specific language frameworks. Peduri et al.~\cite{peduri2025qbc} formulated refinement rules for a quantum while-language extended with prescription statements, employing effects as state assertions. Similarly, Feng et al.~\cite{feng2023refinement} developed refinement rules for a similar language but using projectors as state predicates. Our systematic comparison of different predicate choices provides theoretical justification for these practical approaches and clarifies the relationships between them.

	\section{Preliminaries}
	\label{sec:preliminaries}
	
	This section introduces the necessary mathematical foundations from quantum computing and domain theory required for understanding our results. Readers seeking more comprehensive background on quantum computing are referred to~\cite[Chapter 2]{nielsen2002quantum}.
	
	\subsection{Basic Quantum Computing}
	
	\subsubsection{Quantum State}
	In von Neumann's formulation of quantum mechanics~\cite{vN55}, any quantum system with finite degrees of freedom is associated with a finite-dimensional complex Hilbert space $\h$, called its \emph{state space}. When $\dim(\h) = 2$, the quantum system is called a \emph{qubit}, the quantum analogue of a classical bit.
	
	A \emph{pure state} of the system is represented by a unit vector $|\psi\rangle = \sum_{i=0}^{d-1} \alpha_i |i\rangle$ in $\h$, where $\{|i\rangle : 0 \leq i < d\}$ forms an \emph{orthonormal basis}, the coefficients $\alpha_i \in \mathbb{C}$ are complex amplitudes, and the normalization condition $\sum_i |\alpha_i|^2 = 1$ holds.\footnote{We adopt the Dirac notation standard in quantum computing, where vectors in $\h$ are written in ket form as $|\psi\rangle$, $|i\rangle$, etc.} 	
	In practice, quantum systems are often in \emph{mixed states}, which arise from probabilistic ensembles or entanglement with external systems. A mixed state is described by a \emph{density operator} $\rho = \sum_k p_k |\psi_k\rangle\langle\psi_k|$, where each $|\psi_k\rangle$ is a pure state occurring with classical probability $p_k$ (with $\sum_k p_k = 1$). The notation $\langle\psi|$ denotes the \emph{adjoint} (conjugate transpose) of $|\psi\rangle$, and $|\psi\rangle\langle\psi|$ represents the \emph{outer product}: if $|\psi\rangle$ is a column vector, then $\langle\psi|$ is the corresponding row vector, and $|\psi\rangle\langle\psi|$ is their matrix product. Density operators are positive and have unit trace: $\tr(\rho) = 1$.
	
	Following~\cite{selinger2004towards}, we adopt a slightly more general notion of quantum states that proves convenient for modeling non-terminating quantum programs. We work with \emph{partial density operators}—positive operators with trace at most 1—as our representation of (possibly unnormalized) quantum states. 
	The intuition behind this generalization is as follows. When $\tr(\rho) \neq 0$, the partial density operator $\rho$ represents a legitimate quantum state $\rho / \tr(\rho)$ that occurs with probability $\tr(\rho)$. When $\tr(\rho) = 0$, which necessarily implies $\rho = 0$ since $\rho$ is positive, it does not correspond to any physical quantum state; in our framework, this represents non-termination, indicating that no quantum state has been reached by the program.
	
	Let $\l(\h)$, $\d(\h)$, and $\p(\h)$ denote the sets of linear operators, partial density operators, and \emph{effects} (positive operators with eigenvalues in $[0,1]$) on $\h$, respectively. We have the inclusions $\d(\h) \subseteq \p(\h) \subseteq \l(\h)$. 
	The \emph{L\"{o}wner order} $\le$ on $\l(\h)$ is defined as follows: $A \le B$ iff $\ B - A$ is positive. Under this order, both $(\d(\h), \le)$ and $(\p(\h), \le)$ form complete partial orders (CPOs) with the zero operator $0$ as the minimum element. Additionally, $\p(\h)$ has the identity operator $I_\h$ as its maximum element.

	Composite quantum systems, such as multi-qubit systems, are described by tensor product state spaces. If subsystems have state spaces $\h_1$ and $\h_2$, the composite system has state space $\h_1 \otimes \h_2$. 
	
	For any operator $\rho \in \l(\h_1 \otimes \h_2)$, we can compute its \emph{reduced states} on individual subsystems via the \emph{partial trace} operations $\tr_{\h_1}$ and $\tr_{\h_2}$. The partial trace $\tr_{\h_1}: \l(\h_1 \otimes \h_2) \to \l(\h_2)$ is the unique linear map satisfying
	\[
	\tr_{\h_1}(|\psi_1\rangle\langle\phi_1| \otimes |\psi_2\rangle\langle\phi_2|) =
\tr(|\psi_1\rangle\langle\phi_1|)\cdot |\psi_2\rangle\langle\phi_2|= \langle\phi_1|\psi_1\rangle \cdot |\psi_2\rangle\langle\phi_2|,
	\]
	where $\langle\phi_1|\psi_1\rangle$ is the inner product of $|\phi_1\rangle$ and $|\psi_1\rangle$, and $\cdot$ denotes scalar multiplication. The partial trace $\tr_{\h_2}$ is defined analogously. When $\rho \in \d(\h_1 \otimes \h_2)$ represents a joint quantum state, $\tr_{\h_1}(\rho)$ and $\tr_{\h_2}(\rho)$ give the marginal states on subsystems $\h_2$ and $\h_1$, respectively.
	
	\subsubsection{Quantum Operations}
	
	The dynamics of an isolated quantum system are governed by \emph{unitary evolution}. If a system's state at time $t_1$ is $\rho_1$ and at time $t_2$ is $\rho_2$, then $\rho_2 = U \rho_1 U^\dagger$, where $U$ is a unitary operator (satisfying $U^\dagger U = U U^\dagger = I_\h$) and $U^\dagger$ denotes its conjugate transpose. 
	Standard single-qubit unitary operators include the Pauli matrices $X$, $Y$, $Z$, the phase gate $S$, and the Hadamard gate $H$. For two-qubit systems, the controlled-NOT gate $CX$ is fundamental. These are represented in matrix form as
	\[
X\define \begin{bmatrix}
	0 & 1 \\
	1 & 0 
\end{bmatrix},\ Y\define \begin{bmatrix}
	0 & -i \\
	i & 0 
\end{bmatrix},\ Z\define \begin{bmatrix}
	1 & 0 \\
	0 & -1
\end{bmatrix},\ S\define \begin{bmatrix}
	1 & 0 \\
	0 & i
\end{bmatrix},\ H\define \frac{1}{\sqrt{2}} \begin{bmatrix}
	1 & 1 \\
	1 & -1
\end{bmatrix},\ CX\define \begin{bmatrix}
	1 & 0 & 0 & 0 \\
	0 & 1 & 0 & 0 \\
	0 & 0 & 0 & 1 \\
	0 & 0 & 1 & 0
\end{bmatrix}
.\]
	
	To extract classical information from a quantum system, we perform a \emph{measurement}. A projective measurement $\m$ is mathematically described by a set $\{P_i : i \in O\}$ of projectors (Hermitian operators with eigenvalues in $\{0,1\}$), where $O$ is the set of possible measurement outcomes. These projectors must satisfy the \emph{completeness relation} $
	\sum_{i \in O} P_i = I_\h.$
	The measurement process is inherently probabilistic. If the system is initially in state $\rho$, then the probability of observing outcome $i$ is $p_i = \tr(P_i \rho)$, and conditioned on observing outcome $i$ (when $p_i > 0$), the post-measurement state collapses to $\rho_i = P_i \rho P_i / p_i$.

	A quantum measurement can equivalently be specified by a Hermitian operator $M \in \l(\h)$ called an \emph{observable}. Through spectral decomposition, we can write
	$
	M = \sum_{m \in \mathit{spec}(M)} m \cdot P_m,
	$
	where $\mathit{spec}(M)$ denotes the set of eigenvalues of $M$, and $P_m$ is the projector onto the eigenspace associated with eigenvalue $m$. The corresponding projective measurement is then $\{P_m : m \in \mathit{spec}(M)\}$. 
	By linearity of the trace, the expected value of measurement outcomes when measuring observable $M$ on state $\rho$ is simply
	\[
	\sum_{m \in \mathit{spec}(M)} m \cdot \tr(P_m \rho) = \tr(M \rho).
	\]
	This formula provides a convenient way to compute expectation values without explicitly referencing the underlying projective measurement.

	The general evolution of quantum systems, including those that interact with external environments or undergo measurements, is described by \emph{super-operators}. A super-operator is a linear map $\e: \l(\h_1) \to \l(\h_2)$ that transforms operators on one Hilbert space to operators on another. However, not all linear maps between operator spaces correspond to physically realizable quantum operations. To ensure physical consistency, we require super-operators to satisfy certain positivity conditions:
	
	\begin{enumerate}
		\item $\e$ is \emph{positive} if it maps positive operators to positive operators. This requirement is necessary because density operators are positive.
		
		\item More stringently, $\e$ is \emph{completely positive} if the extended map $\mathcal{I}_\h \otimes \e: \l(\h \otimes \h_1) \to \l(\h \otimes \h_2)$ is positive for every finite-dimensional Hilbert space $\h$, where
		\[
		(\mathcal{I}_\h \otimes \e)(A \otimes B) = A \otimes \e(B),
		\]
		and $\mathcal{I}_\h$ denotes the identity super-operator on $\l(\h)$.
	Complete positivity is essential because quantum systems may be entangled with external systems, and the requirement ensures that the operation remains physical even when applied to part of a larger entangled state.

		\item $\e$ is  \emph{trace-preserving} (resp. \emph{trace-nonincreasing}) if 
		$\tr(\e(A)) = \tr(A)$ (resp. $\tr(\e(A)) \leq \tr(A)$) for any positive operator $A\in \l(\h_1)$. Again, this requirement is necessary because density operators have unit trace (or have trace at most 1 for partial density operators).
		
	\end{enumerate}
	
	Completely positive and trace-preserving (CPTP) super-operators are also called \emph{quantum channels} in quantum information theory. Completely positive and trace-nonincreasing (CPTN) super-operators are more general and naturally arise when modeling quantum programs that may not terminate. Standard quantum operations that are CPTP include \emph{unitary transformation} $\e_U(\rho) = U\rho U^\dagger$ and \emph{complete measurement} $\e_\m(\rho) = \sum_{i\in O} P_i \rho P_i$, where $\m = \{P_i : i\in O\}$ is a projective measurement. This represents the post-measurement state when all measurement outcomes are accounted for but the classical outcome is discarded.
	In contrast, operations $
	\e_i(\rho) = P_i \rho P_i,
	$ corresponding to individual measurement outcomes, are typically CPTN but not CPTP,
	where $\e_i(\rho)$ is a partial density operator and the trace $\tr(\e_i(\rho)) = \tr(P_i \rho)$ equals the probability of observing outcome $i$.

	A fundamental result in quantum computing is the \emph{Kraus representation theorem}~\cite{kraus1983states}, which provides an operational characterization of completely positive maps. To be specific, a super-operator $\e: \l(\h_1) \to \l(\h_2)$ is completely positive iff there exist linear operators $\{E_i : i \in I\}$, called \emph{Kraus operators}, mapping $\h_1$ to $\h_2$ such that
	\[
	\e(A) = \sum_{i\in I} E_i A E_i^\dagger \quad \text{for all } A \in \l(\h_1).
	\]
	Furthermore, $\e$ is trace-preserving (resp. {trace-nonincreasing}) iff $\ \sum_{i\in I} E_i^\dag E_i = I_{\h_1}$ (resp. $\sum_{i\in I} E_i^\dag E_i \le I_{\h_1}$). 
	It is worth noting that the Kraus representation is not unique; that is, different sets of Kraus operators can represent the same super-operator. 
	
	Every super-operator $\e: \l(\h_1) \to \l(\h_2)$ has an \emph{adjoint} $\e^\dagger: \l(\h_2) \to \l(\h_1)$ uniquely characterized by the property
	\[
	\tr(\e(A) B) = \tr(A \e^\dagger(B)) \quad \text{for all } A \in \l(\h_1), B \in \l(\h_2).
	\]
	When $\e$ has Kraus representation $\e(A) = \sum_{i\in I} E_i A E_i^\dagger$, its adjoint has Kraus operators $\{E_i^\dagger : i \in I\}$, giving
	$
	\e^\dagger(B) = \sum_{i\in I} E_i^\dagger B E_i
	$.
	It follows that $(\e^\dagger)^\dagger = \e$, and if $\e$ is completely positive, then so is $\e^\dagger$. The adjoint operation will play an important role in our characterization of refinement orders.
	
	\subsection{Quantum Variables and Notation}
	
	To facilitate discussion of quantum programs, we introduce quantum variables. Let $\QVar$ be a countably infinite set of \emph{quantum variables}, denoted by $q, q', q_1, \ldots$, where each variable represents a qubit (i.e., has a 2-dimensional state space, say $\h_q$). We use $V, W, \ldots$ to denote finite subsets of $\QVar$.
	
	The Hilbert space associated with a set $V$ of quantum variables is the tensor product:
	\[
	\h_V = \bigotimes_{q\in V} \h_q.
	\]
	We write $V' \| V$ to indicate that $V'$ is a \emph{disjoint copy} of $V$; that is, $|V'| = |V|$ and $V \cap V' = \emptyset$. This notation will be useful when discussing program transformations applying on a maximally entangled state.
	
	\subsection{Domain-Theoretic Orders on Powersets}
	
	In the study of nondeterministic programs, we model nondeterminism using sets of possible behaviors. To reason about refinement of such sets, we require orderings on powersets that generalize the underlying order on individual elements. Domain theory~\cite{abramsky1994domain} provides three standard such orderings: the Hoare, Smyth, and Egli-Milner orders.
	
	Let $(\mathcal{X}, \leq)$ be a partially ordered set. For subsets $X, Y \subseteq \mathcal{X}$, we define
	\begin{align*}
		X \leq_H Y &\quad \text{if} \quad \forall x \in X,\ \exists y \in Y: x \leq y \quad \text{(Hoare order)}\\
		X \leq_S Y &\quad \text{if} \quad \forall y \in Y,\ \exists x \in X: x \leq y \quad \text{(Smyth order)}\\
		X \leq_{EM} Y &\quad \text{if} \quad X \leq_H Y\ \text{ and }\ X \leq_S Y \quad \text{(Egli-Milner order)}
	\end{align*}
	The Hoare order $X \leq_H Y$ requires that every element of $X$ is upper bounded by  some element of $Y$, while the Smyth order $X \leq_S Y$ requires that every element of $Y$ can be lower bounded by some element of $X$. The Egli-Milner order combines both requirements.
	
	These orders can also be characterized using closure operators. Define the \emph{up-closure} and \emph{down-closure} of $X$ as
	\begin{align*}
		\upcl X &= \{y \in \mathcal{X} : \exists x \in X, x \leq y\}\\
		\downcl X &= \{y \in \mathcal{X} : \exists x \in X, y \leq x\}
	\end{align*}
	Then we have the equivalent characterizations:
	\begin{align*}
		X \leq_H Y &\quad \text{iff} \quad \downcl X \subseteq \downcl Y\\
		X \leq_S Y &\quad \text{iff} \quad \upcl Y \subseteq \upcl X
	\end{align*}
	When $X = \{x\}$ and $Y = \{y\}$ are both singletons, all three powerset orders reduce to the underlying order on $\mathcal{X}$. That is, 
	$
	X \leq_* Y$ iff $x \leq y$  for $* \in \{H, S, EM\}.
	$
	This ensures that the powerset orders are genuine generalizations of the base order, which will be important when connecting refinement orders for deterministic and nondeterministic quantum programs.
	
	\section{State Predicates for Quantum Programs}
	\label{sec:predicate}
	
	In classical program verification, predicates are typically Boolean-valued functions that describe properties of program states. The quantum setting, however, offers multiple mathematically valid approaches to defining state predicates, each with different expressive power and applications. This section systematically introduces three classes of quantum predicates that have been studied in the literature: projectors~\cite{zhou2019applied, unruh2019quantum, unruh2019quantumr, feng2023abstract}, effects~\cite{d2006quantum, ying2012floyd,ying2024foundations},  and sets of effects~\cite{feng2023verification}. We examine their mathematical structures, satisfaction relations, and the ordering relationships between them, laying foundations that will be essential for our analysis of refinement orders in subsequent sections.

	\subsection{Projectors: Qualitative Properties}
	
	Projectors provide the most restrictive form of quantum predicates, capturing qualitative, yes-or-no properties of quantum states. For a set $V \subseteq \QVar$ of quantum variables, let $\s(\h_{V})$ denote the set of \emph{projectors} on the Hilbert space $\h_{V}$. Recall that a projector is a Hermitian operator $P$ satisfying $P^2 = P$, which equivalently has eigenvalues in $\{0, 1\}$.
	
	\paragraph{Lattice structure.} There is a well-known one-to-one correspondence between projectors and closed subspaces of $\h_{V}$: each projector $P$ corresponds to its image (the eigenspace associated with eigenvalue 1). Throughout this paper, we freely identify projectors with their corresponding subspaces for notational convenience. Under this identification, the set of projectors forms a complete lattice
	\[
	\left(\s(\h_{V}), \le, \wedge, \vee, \bot = 0, \top = I_{V}\right),
	\]
	where the L\"{o}wner order $\le$ on projectors corresponds to subspace inclusion: $P \le Q$ as projectors iff $\ P \subseteq Q$ as subspaces. The meet $P \wedge Q$ corresponds to subspace intersection $P \cap Q$, while the join $P \vee Q$ corresponds to the smallest subspace containing both $P$ and $Q$ (the closed linear span $P + Q$).
	Finally, the bottom element $\bot = 0$ is the zero subspace, and the top element $\top = I_V$ is the entire space $\h_V$.
	
	\paragraph{Satisfaction relation.} Following~\cite{zhou2019applied}, we use projectors to express qualitative properties of quantum states. For a state $\rho \in \d(\h_{V})$ and a projector $P \in \s(\h_{V})$, we define
	\[
	\rho \models P \quad \text{if} \quad \supp{\rho} \subseteq P,
	\]
	where $\supp{\rho}$ denotes the \emph{support subspace} of $\rho$—the subspace spanned by eigenvectors of $\rho$ corresponding to nonzero eigenvalues. Equivalently, the support can be characterized as the orthogonal complement of the null subspace; that is,
	\[
	\supp{\rho} = \mathcal{N}(\rho)^\bot, \quad \text{where} \quad \mathcal{N}(\rho) = \left\{|\psi\rangle \in \h_{V} : \langle \psi | \rho | \psi \rangle = 0\right\}.
	\]
	
	The satisfaction relation $\rho \models P$ has a clear physical meaning. Since $\rho \models P$ iff $\ \tr(P\rho) = \tr(\rho)$, this condition is equivalent to the following operational criterion: performing the binary projective measurement $\{P, I_V - P\}$ on state $\rho$ will yield outcome $P$ with certainty (probability 1). This provides an experimentally verifiable interpretation of projector satisfaction.
	
	\begin{example}\label{exm:proj} Consider a single qubit with computational basis states $|0\rangle$ and $|1\rangle$. The projector $P = |0\rangle\langle0|$ represents the property ``the qubit is in the state $|0\rangle$.'' A pure state $\rho = |0\rangle\langle0|$ satisfies $P$ because $\supp{\rho} = \text{span}\{|0\rangle\} = P$. In contrast, the orthogonal state $\rho' = |1\rangle\langle1|$ does not satisfy $P$ since $\supp{\rho'} = \text{span}\{|1\rangle\}$ is not included in $P$. More generally, any superposition $\rho = \alpha|0\rangle\langle0| + \beta|1\rangle\langle1|$ with $\beta \neq 0$ fails to satisfy $P$.
	\end{example}
	
	\subsection{Effects: Quantitative Properties}
	
	While projectors capture binary properties, many quantum phenomena require quantitative descriptions. Effects provide this capability by allowing predicates to express probabilistic and graded properties of quantum states.
	
	\paragraph{CPO structure.}  For any finite $V \subseteq \QVar$, recall that $\p(\h_{V})$ denotes the set of all effects on $\h_V$, where an {effect} is a positive  operator with eigenvalues in the interval $[0, 1]$. Unlike projectors, effects do not form a lattice—the meet and join of two effects generally fail to exist unless they commute. However, $\p(\h_{V})$ does possess important structure as a partially ordered set:
	\[
	\left(\p(\h_{V}), \le, \bot = 0, \top = I_{V}\right)
	\]
	is a complete partial order (CPO) under the L\"{o}wner order $\le$, with minimum element $0$ and maximum element $I_V$. Note that $\s(\h_V) \subseteq \p(\h_{V})$, so projectors form a special subclass of effects corresponding to $\{0, 1\}$-valued observables.
	
	\paragraph{Graded satisfaction.} Given a quantum state $\rho \in \d(\h_{V})$ and an effect $M \in \p(\h_{V})$, the degree to which $\rho$ satisfies $M$ is defined as
	\[
	\Exp(\rho \models M) = \tr(M\rho).
	\]
 Recall from Section~\ref{sec:preliminaries} that $\tr(M\rho)$ is the expected value of measurement outcomes when measuring observable $M$ on state $\rho$. This gives a natural operational meaning of $\Exp(\rho \models M)$.
	
	\paragraph{Connection to projectors.} For any projector $P \in \s(\h_V)$, we have
	\[
	\rho \models P \quad \text{iff} \quad \Exp(\rho \models P) = \tr(\rho).
	\]
	This shows that projector satisfaction is the special case where the expected value equals the maximum possible value, recovering the qualitative notion from the quantitative framework.

	\begin{example} 		
		Take $P = |0\rangle\langle 0|$ as in Example~\ref{exm:proj}. Regarding as an effect, it computes the probability of satisfying 
		the property ``the qubit is in the state $|0\rangle$.'' To be specific, for a mixed state $\rho$, the expression $\tr(P\rho)$ computes precisely the probability that $P$ holds at $\rho$.
		
	\end{example}

	\subsection{Sets of Effects: Nondeterministic Specifications}
	
	To handle nondeterminism in quantum programs and develop expressive verification systems, it proves useful to consider sets of effects as predicates~\cite{feng2023verification}. This generalization enables specifications that capture ranges of possible behaviors and both demonic (adversarial) and angelic (benevolent) nondeterminism.
	
		\paragraph{Power domain structure.} For any finite $V \subseteq \QVar$, we work within the powerset $2^{\p(\h_V)}$ of all subsets of effects. On this collection, the \emph{Hoare order} $\leq_H$ and the \emph{Smyth order} $\leq_S$ from domain theory can be naturally defined based on the base L\"{o}wner order $\le$.
	For $\qassert, \qassertp \in 2^{\p(\h_V)}$,
\begin{align*}
	\qassert \leq_H \qassertp \quad &\text{iff} \quad \forall M \in \qassert, \, \exists N \in \qassertp: M \le N,
	\\
	\qassert \leq_S \qassertp \quad &\text{iff} \quad \forall N \in \qassertp, \, \exists M \in \qassert: M \le N.	
	\end{align*}
	
	\paragraph{Satisfaction Relations.} The meaning of a specification is given by when a quantum state satisfies it. We define two complementary satisfaction relations, reflecting different attitudes towards the nondeterminism within a set $\qassert \subseteq \p(\h_V)$.
	The \emph{demonic (pessimistic) satisfaction} quantifies the \emph{guaranteed} degree of satisfaction, assuming an adversarial environment that chooses the worst possible effect from the set:
	\[
	\Exp_{\mathsf{dem}}(\rho \models \qassert) = \inf_{M \in \qassert} \tr(M\rho).
	\]
	By convention, $\inf_{M \in \emptyset} \tr(M\rho) = \tr(\rho)$, meaning an empty specification is trivially maximally satisfied.	
	Dually, the \emph{angelic (optimistic) satisfaction} quantifies the \emph{possible} degree of satisfaction, assuming a benevolent environment that chooses the best possible effect:
	\[
	\Exp_{\mathsf{ang}}(\rho \models \qassert) = \sup_{M \in \qassert} \tr(M\rho).
	\]
	By convention, $\sup_{M \in \emptyset} \tr(M\rho) = 0$.
	
	These satisfaction relations induce natural refinement orders between sets of effects, comparing their behavioral guarantees across all states.
	The \emph{demonic order} $\preceq_{\mathsf{dem}}$ requires that the guaranteed satisfaction of $\qassert$ is never better than that of $\qassertp$:
	\[
	\qassert \preceq_{\mathsf{dem}} \qassertp \quad \text{if} \quad \forall \rho \in \d(\h_V): \; \Exp_{\mathsf{dem}}(\rho \models \qassert) \leq \Exp_{\mathsf{dem}}(\rho \models \qassertp).
	\]
	In contrast, the \emph{angelic order} $\preceq_{\mathsf{ang}}$ requires that the possible satisfaction of $\qassert$ is never better than that of $\qassertp$:
	\[
	\qassert \preceq_{\mathsf{ang}} \qassertp \quad \text{if} \quad \forall \rho \in \d(\h_V): \; \Exp_{\mathsf{ang}}(\rho \models \qassert) \leq \Exp_{\mathsf{ang}}(\rho \models \qassertp).
	\]
	
	\subsection{Relating Satisfaction-Based and Domain-Theoretic Orders}
	
	We are going to show an intimate connection between the satisfaction-based orders ($\preceq_{\mathsf{dem}}$, $\preceq_{\mathsf{ang}}$) and the domain-theoretic orders ($\leq_S$, $\leq_H$). 
	To this end, we first present the following lemma.
	\begin{lemma}[Duality]\label{lem:duality}
		For any predicates $\qassert, \qassertp \in 2^{\p(\h_V)}$, the following equivalences hold:
		\begin{enumerate}
			\item $\qassert \preceq_{\mathsf{dem}} \qassertp$ iff $\ (I - \qassertp) \preceq_{\mathsf{ang}} (I - \qassert)$;
			\item $\qassert \leq_S \qassertp$ iff $\ (I - \qassertp) \leq_H (I - \qassert)$.
		\end{enumerate}
		Here, $I - \qassert$ denotes the set $\{I - M : M \in \qassert\}$.
	\end{lemma}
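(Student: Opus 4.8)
The plan is to exploit a single structural fact: the complementation map $M \mapsto I - M$ is an order-reversing involution on the effects $\p(\h_V)$, and lifting it to sets via $\qassert \mapsto I - \qassert$ interchanges the demonic/Smyth side of the theory with the angelic/Hoare side. Both equivalences then fall out by unfolding definitions and substituting. First I would record the basic properties of $\phi(M) \define I - M$: it maps $\p(\h_V)$ into itself (since $0 \le M \le I$ gives $0 \le I - M \le I$), it is an involution ($\phi(\phi(M)) = M$), and it is antitone, i.e. $M \le N$ iff $\ I - N \le I - M$ (immediate from $N - M = (I-M) - (I-N)$). Consequently $\phi$ restricts to a bijection between the underlying sets of $\qassert$ and $I - \qassert$.

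For part (2), which is purely order-theoretic, I would unfold $\qassert \leq_S \qassertp$ as $\forall N \in \qassertp,\ \exists M \in \qassert:\ M \le N$ and rewrite each inequality $M \le N$ using antitony as $I - N \le I - M$. Reindexing the two quantifiers along the bijection $\phi$ — writing $N' = I - N$ ranging over $I - \qassertp$ and $M' = I - M$ ranging over $I - \qassert$ — turns the statement into $\forall N' \in I - \qassertp,\ \exists M' \in I - \qassert:\ N' \le M'$, which is exactly $(I - \qassertp) \leq_H (I - \qassert)$. Since every step is a logical equivalence, both directions are obtained simultaneously.

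For part (1) the key lemma is the complementation identity
\[
\Exp_{\mathsf{ang}}(\rho \models I - \qassert) = \tr(\rho) - \Exp_{\mathsf{dem}}(\rho \models \qassert),
\]
valid for all $\rho \in \d(\h_V)$ and $\qassert \in 2^{\p(\h_V)}$. For nonempty $\qassert$ it follows from $\tr((I-M)\rho) = \tr(\rho) - \tr(M\rho)$ together with the elementary fact that $\sup_{M}(\tr(\rho) - \tr(M\rho)) = \tr(\rho) - \inf_{M}\tr(M\rho)$; I would check the case $\qassert = \emptyset$ separately to confirm that the stated conventions $\sup_{\emptyset} = 0$ and $\inf_{\emptyset} = \tr(\rho)$ are precisely what makes the identity continue to hold there. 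Substituting this identity into the definition of $\preceq_{\mathsf{dem}}$ and subtracting both sides from $\tr(\rho)$ (which reverses the inequality) replaces demonic satisfaction of $\qassert, \qassertp$ by angelic satisfaction of $I - \qassert, I - \qassertp$, and the resulting condition $\forall \rho:\ \Exp_{\mathsf{ang}}(\rho \models I - \qassertp) \le \Exp_{\mathsf{ang}}(\rho \models I - \qassert)$ is exactly $(I - \qassertp) \preceq_{\mathsf{ang}} (I - \qassert)$.

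The argument involves no real difficulty; the only points demanding attention are the bookkeeping around the empty-set conventions in part (1), which must be verified to be consistent with the complementation identity, and the observation in part (2) that the quantifier reindexing is legitimate precisely because $\phi$ is a bijection, so that no elements of $I - \qassert$ or $I - \qassertp$ are missed or double-counted.
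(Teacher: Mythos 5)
Your proof is correct and takes essentially the same route as the paper's: part (2) via the order-reversing involution $M \mapsto I - M$ on effects, and part (1) via the identity $\tr((I-M)\rho) = \tr(\rho) - \tr(M\rho)$ turning the comparison of infima into a reversed comparison of suprema. Your explicit check that the empty-set conventions ($\inf_{\emptyset} = \tr(\rho)$, $\sup_{\emptyset} = 0$) keep the complementation identity valid is a small point of extra care that the paper leaves implicit, but it does not change the argument.
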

	\begin{proof}
		For (1), we compute 
		\begin{align*}
			\qassert \preceq_{\mathsf{dem}} \qassertp
			&\quad\mbox{iff}\quad \forall \rho: \inf_{M \in \qassert} \tr(M\rho) \leq \inf_{N \in \qassertp} \tr(N\rho) \quad \text{(definition of $ \preceq_{\mathsf{dem}} $)}\\
			&\quad\mbox{iff}\quad \forall \rho: \sup_{M \in \qassert} \tr((I - M)\rho) \geq \sup_{N \in \qassertp} \tr((I - N)\rho) \\
			&\qquad\qquad\qquad\qquad\qquad \text{(since $\tr((I - M)\rho) = \tr(\rho) - \tr(M\rho)$)}\\
			&\quad\mbox{iff}\quad (I - \qassertp) \preceq_{\mathsf{ang}} (I - \qassert)\quad \text{(definition of $\preceq_{\mathsf{ang}}$)}. 
		\end{align*}
		For (2), by definition of the Smyth order, $\qassert \leq_S \qassertp$ means $\forall N \in \qassertp, \exists M \in \qassert: M \le N$. This is equivalent to $\forall N \in \qassertp, \exists M \in \qassert: I - N \le I - M$ (by properties of the L\"{o}wner order), which precisely states that $(I - \qassertp) \leq_H (I - \qassert)$ by definition of the Hoare order.
	\end{proof}

	For well-behaved (convex and closed) specifications, the satisfaction-based and domain-theoretic orders coincide. This equivalence relies on the Sion minimax theorem.
	
	\begin{theorem}[Sion Minimax Theorem~\cite{sion1958general}]\label{thm:sm}
		Let $X$ and $Y$ be non-empty, convex, compact subsets of two linear topological spaces, and let $f: X \times Y \to \mathbb{R}$ be a function that is linear in the first argument and convex in the second. Then
		\[
		\min_{x \in X} \max_{y \in Y} f(x, y) = \max_{y \in Y} \min_{x \in X} f(x, y).
		\]
	\end{theorem}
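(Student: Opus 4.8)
The plan is to recognize Theorem~\ref{thm:sm} as the classical minimax theorem and to split the claimed equality into its two opposing inequalities. The inequality $\max_{y \in Y}\min_{x \in X} f(x,y) \leq \min_{x \in X}\max_{y \in Y} f(x,y)$ is \emph{weak duality} and holds with no convexity or compactness assumptions whatsoever: for any fixed $\bar{x} \in X$ and $\bar{y} \in Y$ we have $\min_{x} f(x,\bar{y}) \leq f(\bar{x},\bar{y}) \leq \max_{y} f(\bar{x},y)$, and taking the maximum over $\bar{y}$ at the left end and the minimum over $\bar{x}$ at the right end preserves the chain. All the mathematical content is therefore concentrated in the reverse inequality $\min_{x}\max_{y} f(x,y) \leq \max_{y}\min_{x} f(x,y)$, and this is where compactness, convexity, and the linearity of $f$ in its first argument must enter.

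For the reverse inequality I would first reduce to a finite problem using compactness. Writing $v = \max_{y}\min_{x} f$, suppose for contradiction that $\min_{x}\max_{y} f > v$, so that for every $x \in X$ there is some $y$ with $f(x,y) > v$. The sets $U_y = \{x \in X : f(x,y) > v\}$ then cover $X$, and after passing to a finite subcover $U_{y_1},\dots,U_{y_n}$ (justified by compactness together with the appropriate semicontinuity, which linearity and convexity supply automatically in finite dimension) it suffices to defeat a single $x$ against finitely many fixed responses $y_1,\dots,y_n$. Here the linearity of $f$ in $x$ is decisive: the payoff map $x \mapsto (f(x,y_1),\dots,f(x,y_n))$ is affine, so its image over the convex set $X$ is convex, and a separating-hyperplane (Hahn--Banach) argument produces a convex weighting $\lambda = (\lambda_1,\dots,\lambda_n)$ of the responses with $\min_{x}\sum_i \lambda_i f(x,y_i) > v$. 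Concavity in the $y$ variable—which for an affine $f$ coincides with the stated convexity and is what the maximization genuinely needs—then collapses this weighted combination back to a single point $\hat{y} \in Y$ with $\min_x f(x,\hat{y}) > v$, contradicting the definition of $v$. In the fully general quasiconvex setting of Sion~\cite{sion1958general} the separation step is replaced by the KKM lemma (the finite intersection property of closed sets on a simplex), but the skeleton is identical.

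I expect the genuine obstacle to be this combinatorial core—combining finitely many individually-good responses into a single good one—rather than the finite-to-infinite reduction, which is routine once compactness and semicontinuity are in hand. A secondary point worth flagging is that the equality hinges on the correct pairing of convexity with the direction of optimization: the maximized variable must be (quasi)concave and the minimized variable (quasi)convex. The assumption that $f$ is linear in the first argument is precisely what supplies the convexity demanded by the $\min_x$ operation, and in every instance used in this paper $f$ has the bilinear form $f(\rho,M) = \tr(M\rho)$, which is affine—hence simultaneously convex and concave—in each argument separately; this is what makes Theorem~\ref{thm:sm} directly applicable to our convex, compact sets of states and effects.
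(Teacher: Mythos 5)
The first thing to note is that the paper contains no proof of this statement: Theorem~\ref{thm:sm} is quoted from Sion~\cite{sion1958general} and used as a black box in the proofs of Theorem~\ref{thm:equivalence} and Theorem~\ref{thm:nondet}, so your attempt can only be measured against the standard literature proof. Against that benchmark your architecture is the right one: weak duality ($\max_y \min_x \le \min_x \max_y$) holds for free; compactness plus semicontinuity reduces the reverse inequality to a game against finitely many responses $y_1, \dots, y_n$; and a separation (or KKM) argument merges those finitely many responses into a single one.

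The decisive problem is elsewhere, and you touched it but filed it as ``a secondary point worth flagging'': the theorem as printed is false under its literal hypotheses, so no proof of it can exist, yours included. Pairing $\max$ over the second argument with \emph{convexity} in that argument is the wrong direction; Sion's theorem requires quasi-concavity (and upper semicontinuity) in the maximized variable. Your own sketch shows exactly where this bites: the final collapse step replaces the convex combination $\sum_i \lambda_i y_i$ by a single $\hat y$ using $f(x, \sum_i \lambda_i y_i) \ge \sum_i \lambda_i f(x, y_i)$, which is concavity in $y$ and is not implied by the stated convexity. A concrete counterexample to the printed statement: $X = Y = [0,1]$ and $f(x,y) = (2x-1)(2y-1) + (y-\tfrac{1}{2})^2$, which is affine in $x$ and convex in $y$, yet $\min_x \max_y f = \tfrac{1}{4}$ while $\max_y \min_x f = 0$ (one can make $f$ genuinely linear in the first argument by homogenizing with a dummy coordinate). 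So the correct deliverable here is a proof of the corrected statement---concave, or simply affine, in the maximized variable---which is what your skeleton actually establishes once the collapse step is done honestly. You are right, and it is worth saying explicitly, that the paper's applications are untouched by this defect: in Theorems~\ref{thm:equivalence} and~\ref{thm:nondet} the function is $f(\rho, M) = \tr((N-M)\rho)$, affine in $M$, hence simultaneously convex and concave, so the corrected theorem applies verbatim.
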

	
	\begin{theorem}\label{thm:equivalence}
		For any convex and closed predicates $\qassert, \qassertp \in 2^{\p(\h_V)}$,
		\begin{enumerate}
			\item if $\ \qassert \neq \emptyset$, then $\qassert \preceq_{\mathsf{dem}} \qassertp$ iff $\ \qassert \leq_S \qassertp$;
			\item if $\ \qassertp \neq \emptyset$, then $\qassert \preceq_{\mathsf{ang}} \qassertp$ iff $\ \qassert \leq_H \qassertp$.
		\end{enumerate}
	\end{theorem}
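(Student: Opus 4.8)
The plan is to dispatch the monotone (``easy'') implications by hand and to concentrate the real work on a single application of the Sion minimax theorem (Theorem~\ref{thm:sm}), using the Duality Lemma~\ref{lem:duality} to pass between the demonic/Smyth and angelic/Hoare settings so that only one hard direction is ever proved.

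First the monotone directions. For item (1) assume $\qassert \leq_S \qassertp$. Fix $\rho \in \d(\h_V)$ and any $N \in \qassertp$; choosing $M \in \qassert$ with $M \le N$, positivity of $\rho$ gives $\inf_{M' \in \qassert}\tr(M'\rho) \le \tr(M\rho) \le \tr(N\rho)$, and taking the infimum over $N \in \qassertp$ yields $\Exp_{\mathsf{dem}}(\rho\models\qassert) \le \Exp_{\mathsf{dem}}(\rho\models\qassertp)$, i.e.\ $\qassert\preceq_{\mathsf{dem}}\qassertp$. The mirror-image computation proves $\leq_H \Rightarrow \preceq_{\mathsf{ang}}$ for item (2). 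Neither direction uses convexity, closedness, or nonemptiness.

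The substance is the converse of (1), which I would prove by contraposition. Suppose $\qassert \not\leq_S \qassertp$, so there is $N_0 \in \qassertp$ with $M \not\le N_0$ for every $M \in \qassert$; equivalently $\lambda_{\max}(M - N_0) > 0$ for all $M \in \qassert$. It suffices to produce one state $\rho^\ast$ with $\tr(N_0\rho^\ast) < \inf_{M\in\qassert}\tr(M\rho^\ast)$, since such a $\rho^\ast$ gives $\Exp_{\mathsf{dem}}(\rho^\ast\models\qassertp) \le \tr(N_0\rho^\ast) < \Exp_{\mathsf{dem}}(\rho^\ast\models\qassert)$, contradicting $\qassert\preceq_{\mathsf{dem}}\qassertp$. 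To obtain $\rho^\ast$ I set $X = \qassert$ and $Y = \d(\h_V)$, both nonempty, convex, and compact in finite dimension (compactness of $\qassert$ uses closedness together with the bound $0 \le M \le I$; nonemptiness of $X$ is exactly the hypothesis $\qassert \neq \emptyset$), and define $f(M,\rho) = \tr((M-N_0)\rho)$, which is linear in each argument and thus meets the hypotheses of Theorem~\ref{thm:sm}. The inner maximum evaluates to $\max_{\rho\in Y}\tr((M-N_0)\rho) = \max(0,\lambda_{\max}(M - N_0)) = \lambda_{\max}(M - N_0)$, so $\min_{M}\max_{\rho}f = \inf_{M\in\qassert}\lambda_{\max}(M-N_0)$, which is strictly positive as the minimum of a continuous, everywhere-positive function over the compact set $\qassert$. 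Minimax then equates this with $\max_{\rho}\min_{M}f$, delivering $\rho^\ast \in \d(\h_V)$ with $\min_{M\in\qassert}\tr((M-N_0)\rho^\ast) > 0$, which is precisely the separating state required.

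Finally, instead of repeating the argument I would derive item (2) from item (1) via Lemma~\ref{lem:duality}. The involution $X \mapsto I - X$ preserves convexity, closedness, and, since $I-\qassert \neq \emptyset \Leftrightarrow \qassert \neq \emptyset$, the relevant nonemptiness hypothesis; by the two clauses of that lemma it simultaneously turns $\preceq_{\mathsf{dem}}$ into $\preceq_{\mathsf{ang}}$ and $\leq_S$ into $\leq_H$. Applying the already-established item (1) to the pair $(I-\qassertp,\,I-\qassert)$ (whose nonemptiness requirement is exactly $\qassertp \neq \emptyset$) and translating both sides back through the lemma yields $\qassert\preceq_{\mathsf{ang}}\qassertp$ iff $\qassert\leq_H\qassertp$, which is item (2). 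The main obstacle throughout is the converse of (1): the difficulty is not finding a separating state for each individual $M$, but exhibiting a \emph{single} state that works uniformly against the whole family $\qassert$, and it is exactly this uniformity that the minimax exchange provides, once compactness and convexity are in hand to justify it.
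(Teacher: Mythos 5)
Your proof is correct and takes essentially the same route as the paper: the monotone direction is handled identically, part (2) is reduced to part (1) via Lemma~\ref{lem:duality} exactly as in the paper, and the hard direction rests on the same application of the Sion minimax theorem (Theorem~\ref{thm:sm}) to the bilinear pairing $\tr((M-N)\rho)$ over $\qassert \times \d(\h_V)$, with the same use of convexity, closedness, and nonemptiness. The only difference is presentational: the paper argues directly (for each $N \in \qassertp$, minimax turns $\min_\rho \max_M \tr[(N-M)\rho] \ge 0$ into the existence of $M \in \qassert$ with $M \le N$), whereas you run the same exchange contrapositively to extract a single separating state $\rho^\ast$, which is the identical quantifier swap up to a sign.
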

	\begin{proof}
		We prove part (1); part (2) follows by applying part (1) to the complementary sets using Lemma~\ref{lem:duality}.

First, observe that when $\qassert \neq \emptyset$ and both predicates are closed (so that infima are attained), we can rewrite the definitions as:
\begin{align*}
	\qassert \leinf \qassertp
	&\quad\mbox{iff}\quad \forall \rho \in \d(\h_V), \forall N \in \qassertp, \exists M \in \qassert: \tr(M\rho) \leq \tr(N\rho)\\
	\qassert \leq_S \qassertp
	&\quad\mbox{iff}\quad \forall N \in \qassertp, \exists M \in \qassert, \forall \rho \in \d(\h_V): \tr(M\rho) \leq \tr(N\rho).
\end{align*}
The implication $\qassert \leq_S \qassertp \Rightarrow \qassert \leinf \qassertp$ is immediate.
For the reverse implication, assume $\qassert \leinf \qassertp$ and fix any $N \in \qassertp$. We need to find $M \in \qassert$ such that $M \le N$, which is equivalent to $\tr(M\rho) \leq \tr(N\rho)$ for all $\rho \in \d(\h_V)$.

Since $\h_V$ is finite-dimensional, both $\d(\h_V)$ and $\qassert$ are non-empty, convex, and compact subsets of appropriate linear topological spaces. Moreover, the function $f(\rho, M) = \tr((N - M)\rho)$ is linear in $\rho$ and convex in $M$. By the Sion minimax theorem,
\begin{align*}
	\qassert \leinf \qassertp
	&\Rightarrow \forall N \in \qassertp, \min_{\rho \in \d(\h_V)} \max_{M \in \qassert} \tr[(N - M)\rho] \geq 0
	\quad \text{(by assumption)}\\
	&\Rightarrow \forall N \in \qassertp, \max_{M \in \qassert} \min_{\rho \in \d(\h_V)} \tr[(N - M)\rho] \geq 0
	\quad \text{(Sion minimax)}\\
	&\Rightarrow \forall N \in \qassertp, \exists M \in \qassert: \min_{\rho \in \d(\h_V)} \tr[(N - M)\rho] \geq 0\\
	&\Rightarrow \forall N \in \qassertp, \exists M \in \qassert, \forall \rho \in \d(\h_V): \tr(M\rho) \leq \tr(N\rho)\\
	&\Rightarrow \qassert \leq_S \qassertp \quad \text{(definition of $\leq_S$)}. \qedhere
\end{align*}
\end{proof}

\paragraph{Necessity of side conditions.} The non-emptiness conditions in Theorem~\ref{thm:equivalence} are essential. For part (1), note that $\emptyset \leinf \{I\}$ (since the infimum over the empty set is $\tr(\rho)$ by convention), yet $\emptyset \not\leq_S \{I\}$ (the Smyth order requires finding an element in the left-hand set). Similarly, for part (2), we have $\{0\} \lesup \emptyset$ (since both sides equal 0) but $\{0\} \not\leq_H \emptyset$ (the Hoare order cannot be satisfied when the right-hand set is empty).

	\section{Refinement for Deterministic Quantum Programs}
	\label{sec:deterministic}
	
	This section establishes refinement orders for deterministic quantum programs. We begin by introducing our semantic model—completely positive trace-nonincreasing (CPTN) super-operators, and the natural approximation ordering on this space. We then characterize refinement orders under effect-based specifications, showing they coincide precisely with the approximation ordering. Finally, we demonstrate that restricting to projector-based specifications yields strictly weaker refinement orders.
	
	\subsection{Semantic Model and Approximation Ordering}
	
	The most general mathematical representation of a physically realizable quantum process is a completely positive and trace-nonincreasing (CPTN) super-operator. This generality is not merely theoretical: the quantum while-language studied in~\cite{ying2012floyd, chadha2006reasoning, Kakutani:2009, unruh2019quantum, feng2007proof, rand2019verification} can implement any CPTN super-operator on finite-dimensional Hilbert spaces~\cite{ying2024foundations}. This justifies our semantic approach of identifying deterministic quantum programs with CPTN super-operators.
	
	For a finite set $V \subseteq \QVar$ of quantum variables, we define
	\[
	\dprog(V) = \left\{\e: \d(\h_V) \to \d(\h_V) \mid \e \text{ is a CPTN super-operator}\right\}
	\]
	as the semantic space of deterministic quantum programs operating on system $V$. We equip $\dprog(V)$ with the following \emph{approximation order}:
	\[
	\e \le \f \quad \text{if} \quad \f - \e \text{ is completely positive}.
	\]
	Note that we use the same symbol $\le$ as for the L\"{o}wner order on operators, but the semantics differ: for super-operators, we require complete positivity of the difference, not merely positivity. This ordering makes $\dprog(V)$ a complete partial order (CPO) with the zero super-operator $0_V$ as the minimum element.

	\begin{remark} An alternative partial order on $\dprog(V)$ can be defined by pointwise lifting of the L\"{o}wner order as follows: $
	\e \leq^L \f$ if $\forall \rho \in \d(\h_V): \e(\rho) \le \f(\rho).
	$
	This pointwise order, adopted in~\cite{ying2024foundations}, also makes $\dprog(V)$ a CPO and provides fixed-point semantics for while loops. However, we argue that the approximation ordering $\le$ is more natural for program reasoning.
	The key issue is \emph{compositional behavior}. When a program $\e \in \dprog(V)$ is used as a subprogram within a larger system, it must be extended trivially to $\id_{W\backslash V} \otimes \e$ for some $W \supseteq V$. For a compositional order, we require
	\[
	\e \le \f \quad \text{iff} \quad \id_{W\backslash V} \otimes \e \le \id_{W\backslash V} \otimes \f.
	\]
	This property holds for the approximation ordering $\le$ but fails for the pointwise order $\leq^L$. To see this, consider the transpose super-operator $\mathcal{T}$ defined by $\mathcal{T}(A) = A^T$ (matrix transpose). We have $0_V \leq^L \mathcal{T}$ since $0 \le A^T$ for all positive $A$. However, for any disjoint copy $V'$ with $V' \| V$, we have $0_{V' \cup V} \not\leq^L \id_{V'} \otimes \mathcal{T}$~\cite{nielsen2002quantum}. This failure of compositionality demonstrates why $\le$ is preferable to $\leq^L$ for program refinement.
	
	\end{remark}
	\subsection{Choi-Jamiolkowski Representation}
	
	To work effectively with the approximation ordering, we employ the Choi-Jamiolkowski isomorphism, which provides a bridge between properties of super-operators and properties of operators.
	
	 Let $\e$ be a super-operator on $\l(\h_V)$ and let $V'$ be a disjoint copy of $V$. Define the \emph{maximally entangled state} on $\h_{V,V'} = \h_V \otimes \h_{V'}$ by
	\[
	\Omega = |\omega\rangle\langle\omega|, \quad \text{where} \quad |\omega\rangle = \frac{1}{\sqrt{d}}\sum_{i=0}^{d-1} |i\rangle_V |i\rangle_{V'},
	\]
	with $d = \dim(\h_V) = 2^{|V|}$, and $\{|i\rangle_W : 0 \leq i < d\}$ denotes an orthonormal basis of $\h_W$ for $W \in \{V, V'\}$.
	The \emph{Choi-Jamiolkowski matrix} of $\e$ is defined as
	\[
	J_{V'}(\e) = \e(\Omega),
	\]
	where $\e(\Omega)$ is understood as $(\e \otimes \id_{V'})(\Omega)$. Since the properties of $J_{V'}(\e)$ does not depend on the choice of $V'$ as long as $|V'| = |V|$ and $V\cap V' =\emptyset$, we omit the subscript $V'$ and write $J(\e)$ for simplicity.  
	
	A fundamental result states that $\e$ is completely positive iff $\ J(\e)$ is positive~\cite{nielsen2002quantum}. 
	The following lemma establishes several equivalent characterizations of the approximation ordering, which will be essential for our subsequent analysis.
	
	\begin{lemma}[Characterizations of Approximation Order]\label{lem:choi}
		For any two deterministic quantum programs $\e, \f \in \dprog(V)$, the following are equivalent:
		\begin{enumerate}
			\item $\e \le \f$ (approximation order);
			\item $J(\e) \le J(\f)$ (L\"{o}wner order on Choi matrices);
			\item For any $W \supseteq V$ and $A \in \p(\h_W)$, $\e(A) \le \f(A)$;
			\item For any $V' \| V$ and $A \in \p(\h_{V \cup V'})$, $\e(A) \le \f(A)$;
			\item $\e^\dagger \le \f^\dagger$ (approximation order for adjoints).
		\end{enumerate}
	\end{lemma}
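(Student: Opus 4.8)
The plan is to prove the five statements equivalent by establishing the cycle $(1)\Rightarrow(3)\Rightarrow(4)\Rightarrow(2)\Rightarrow(1)$ together with the separate equivalence $(1)\Leftrightarrow(5)$. Throughout I would lean on two facts already available: the linearity of the Choi map, which gives $J(\f-\e)=J(\f)-J(\e)$, and the fundamental characterization that a super-operator is completely positive iff its Choi matrix is positive. I would also fix once and for all the convention by which a super-operator $\e$ on $\h_V$ is silently extended by the identity, so that $\e(A)=(\id_{W\backslash V}\otimes\e)(A)$ for $A\in\p(\h_W)$ and $\e(\Omega)=(\e\otimes\id_{V'})(\Omega)=J(\e)$.

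For $(1)\Leftrightarrow(2)$: by definition $\e\le\f$ means $\f-\e$ is completely positive; by the fundamental result this holds iff $J(\f-\e)\ge 0$; and by linearity of $J$ this is exactly $J(\e)\le J(\f)$ in the L\"owner order. For $(1)\Rightarrow(3)$: writing $\g:=\f-\e$, if $\g$ is completely positive then $\id_{W\backslash V}\otimes\g$ is positive (taking the ambient space $\h_{W\backslash V}$ in the definition of complete positivity), so it sends the effect $A$, being a positive operator, to a positive operator, which is precisely $\e(A)\le\f(A)$. The step $(3)\Rightarrow(4)$ is immediate, since $(4)$ is simply the instance $W=V\cup V'$ of $(3)$. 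For $(4)\Rightarrow(2)$: the maximally entangled state $\Omega=|\omega\rangle\langle\omega|$ is a rank-one projector, hence an effect in $\p(\h_{V\cup V'})$; applying $(4)$ with $A=\Omega$ and recalling $\e(\Omega)=J(\e)$ yields $J(\e)\le J(\f)$, closing the cycle back to $(2)$ and hence to $(1)$.

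For the adjoint equivalence $(1)\Leftrightarrow(5)$, I would use that taking adjoints is linear and involutive, so $\f^\dagger-\e^\dagger=(\f-\e)^\dagger$, together with the fact recorded in the preliminaries that complete positivity is preserved under the adjoint (and that $(\g^\dagger)^\dagger=\g$). Thus $\f-\e$ is completely positive iff $(\f-\e)^\dagger=\f^\dagger-\e^\dagger$ is, i.e.\ iff $\e^\dagger\le\f^\dagger$.

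The conceptual content of the lemma is that probing the difference $\f-\e$ against the single state $\Omega$ already certifies complete positivity in full generality; this is exactly what the Choi characterization buys us, so I expect no genuine analytic difficulty. The step I would watch most carefully is therefore $(4)\Rightarrow(2)$, where the whole argument rests on two small but essential observations: that $\Omega$ genuinely qualifies as an effect, so clause $(4)$ may legitimately be instantiated at it, and that the left/right placement of the tensored identity (identity on $V'$ in the Choi construction versus identity on $W\backslash V$ in clause $(3)$) is immaterial for positivity. Keeping the extension convention uniform across all five clauses is the only real bookkeeping hazard.
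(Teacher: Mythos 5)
Your proposal is correct, and for clauses (1)--(4) it is essentially the paper's proof reorganized: the paper establishes $(1)\Leftrightarrow(2)$ via the Choi--Jamiolkowski isomorphism exactly as you do, proves $(1)\Rightarrow(3),(4)$ from the definition of complete positivity applied to $\f-\e$, and gets $(3),(4)\Rightarrow(2)$ by instantiating $A=\Omega$; your cycle $(1)\Rightarrow(3)\Rightarrow(4)\Rightarrow(2)\Rightarrow(1)$ merely chains these same observations, with the trivial extra link $(3)\Rightarrow(4)$. The genuine divergence is clause (5). The paper proves $(4)\Leftrightarrow(5)$ by a duality chain: it characterizes the L\"{o}wner order through trace pairing ($A \le B$ iff $\tr(NA)\leq\tr(NB)$ for all effects $N$), uses the defining identity $\tr(N\e(A))=\tr(\e^\dagger(N)A)$ to move the adjoint across the pairing, and then invokes the already-proved equivalence of (1) and (4) \emph{applied to the adjoint maps} $\e^\dagger,\f^\dagger$. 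You instead prove $(1)\Leftrightarrow(5)$ directly from three algebraic facts recorded in the preliminaries: the adjoint is linear (so $(\f-\e)^\dagger=\f^\dagger-\e^\dagger$), involutive, and preserves complete positivity. Your route is shorter and more self-contained; it also sidesteps a small technical wrinkle in the paper's final step, namely that applying the $(1)\Leftrightarrow(4)$ equivalence to $\e^\dagger$ and $\f^\dagger$ implicitly requires those equivalences to hold for maps outside $\dprog(V)$ (adjoints of trace-nonincreasing maps are sub-unital, not trace-nonincreasing)---harmless, since the arguments never use the trace condition, but your argument never raises the issue. What the paper's trace-duality computation buys in exchange is methodological: the same Heisenberg-picture manipulation ($wp.\e.N=\e^\dagger(N)$ driving order comparisons) is the engine of the later refinement characterizations (Theorem 4.5 and its nondeterministic analogues), so the paper's proof doubles as a template for those results, whereas your argument is specific to this lemma.
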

	
	\begin{proof}
		The equivalence of (1) and (2) follows from the Choi-Jamiolkowski isomorphism: $\e \le \f$ iff $\f - \e$ is completely positive, which holds iff $J(\f - \e) = J(\f) - J(\e)$ is positive, i.e., $J(\e) \le J(\f)$.
		
		That (3) or (4) implies (2) is immediate by taking $A = \Omega$. That (1) implies (3) and (4) follows from the definition of complete positivity applied to $\f - \e$. Finally, for the equivalence of (4) and (5), we compute
		\begin{align*}
			&\forall A \in \p(\h_{V \cup V'}): \e(A) \le \f(A)\\
			&\quad\mbox{iff}\quad\forall A, N \in \p(\h_{V \cup V'}): \tr(N \e(A)) \leq \tr(N \f(A))
			\quad \text{(L\"{o}wner order characterization)}\\
			&\quad\mbox{iff}\quad\forall N, A \in \p(\h_{V \cup V'}): \tr(\e^\dagger(N) A) \leq \tr(\f^\dagger(N) A)
			\quad \text{(adjoint definition)}\\
			&\quad\mbox{iff}\quad\forall N \in \p(\h_{V \cup V'}): \e^\dagger(N) \le \f^\dagger(N)
			\quad \text{(L\"{o}wner order characterization)}\\
			&\quad\mbox{iff}\quad\e^\dagger \le \f^\dagger
			\quad   \text{(equivalence of (1) and (4))}. \qedhere
		\end{align*}
	\end{proof}
	
	This lemma provides multiple computational tools for verifying the approximation order. Clause (2) is particularly useful as it reduces super-operator comparison to operator comparison. Clause (5) shows that the approximation order is self-dual with respect to adjoints.
	
	\subsection{Refinement Under Effect-Based Specifications}
	
	We now define refinement orders based on effect predicates, following the satisfaction framework introduced in the previous section. For a finite set $W \subseteq \QVar$, define
	\[
	\dspec(W) = \left\{(M, N) : M, N \in \p(\h_W)\right\}
	\]
	as the set of effect-based specifications on system $W$, where $M$ represents the precondition and $N$ the postcondition.
	
	\subsubsection{Correctness definitions}
	
	 Depending on different views for non-termination, there are two different notions of satisfaction of a quantum program $\e$ in $\dprog(V)$ on a specification $(M,N)$ in $\dspec(W)$, one is for total correctness and the other for partial correctness. Specifically, 
	 
	 \begin{enumerate}
	 	\item 
We say $\e$ satisfies $(M, N)$ in the sense of \emph{total correctness}, denoted $\e \dtsat (M, N)$, if for any quantum state $\rho \in \d(\h_X)$ with $V \cup W \subseteq X$,
	\[
	\tr(M\rho) \leq \tr(N\e(\rho)).
	\]
	Note that we do not require $W=V$ in general. However, say, $\tr(N\e(\rho))$ is understood as $\tr[(I_{X\backslash W} \otimes N) \cdot (\id_{X\backslash V} \otimes \e)(\rho)]$, where programs and predicates are extended to the full system $X$ via tensor products with identity operators. We adopt this simplified notation for clarity.
	
	Intuitively, total correctness requires that whenever the precondition $M$ is satisfied by the initial state $\rho$ to degree $\tr(M\rho)$, the postcondition $N$ must be satisfied to at least the same degree by the final state $\e(\rho)$.
	
	\item We say $\e$ satisfies $(M, N)$ in the sense of \emph{partial correctness}, denoted $\e \dpsat (M, N)$, if for any $\rho \in \d(\h_X)$ with $V \cup W \subseteq X$,
	\[
	\tr(M\rho) \leq \tr(N\e(\rho)) + \tr(\rho) - \tr(\e(\rho)).
	\]
	The additional term $\tr(\rho) - \tr(\e(\rho))$ accounts for non-termination probability. 
\end{enumerate}
	
	These satisfaction relations can also be characterized using predicate transformers. As shown in~\cite{d2006quantum}, the \emph{weakest precondition} of $\e$ with respect to postcondition $N$ is $
	wp.\e.N = \e^\dagger(N),$
	and the \emph{weakest liberal precondition} is $
	wlp.\e.N = I - \e^\dagger(I - N).$
	It is easy to check that $$\e \dtsat (M, N)\quad \mbox{iff}\quad M \le wp.\e.N \qquad\mbox{and}\qquad \e \dpsat (M, N)\quad \mbox{iff}\quad M \le wlp.\e.N.$$
	
	\subsubsection{Refinement Orders}
	
	With satisfaction relations defined, we now introduce refinement orders. For any $\e, \f \in \dprog(V)$,
	
	\begin{enumerate}
		\item $\e$ {is refined by $\f$ in total correctness}, denoted $\e \dtrefine \f$, if for every $W \subseteq \QVar$ and $(M, N) \in \dspec(W)$,
		\[
		\e \dtsat (M, N) \quad \text{implies} \quad \f \dtsat (M, N).
		\]
		
		\item $\e$ {is refined by $\f$ in partial correctness}, denoted $\e \dprefine \f$, if for every $W \subseteq \QVar$ and $(M, N) \in \dspec(W)$,
		\[
		\e \dpsat (M, N) \quad \text{implies} \quad \f \dpsat (M, N).
		\]
	\end{enumerate}
	
	Intuitively, $\f$ refines $\e$ if every specification satisfied by $\e$ is also satisfied by $\f$. This means $\f$ can safely replace $\e$ in any context where $\e$ was used.
	Let $\equiv_T^e$ denote the kernel of $\dtrefine$ (i.e., $\equiv_T^e {=} \dtrefine {\cap}\ {(\dtrefine)^{-1}}$), representing equivalence under total correctness refinement. Define $\equiv_P^e$ similarly for partial correctness.
	
	\subsubsection{Characterization Theorem}
	
	Our central result in this section establishes that effect-based refinement orders coincide precisely with the approximation ordering and its dual.
	
	\begin{theorem}[Effect-Based Refinement Characterization]\label{thm:det}
		For any $\e, \f \in \dprog(V)$,
		\begin{enumerate}
			\item $\e \dtrefine \f$ iff $\ \e \le \f$;
			\item $\e \dprefine \f$ iff $\ \e \ge \f$;
			\item $\e \equiv_T^e \f$ iff $\ \e \equiv_P^e \f$ iff $\ \e = \f$;
			\item If both $\e$ and $\f$ are trace-preserving, then $\e \dtrefine \f$ iff $\ \e \dprefine \f$ iff $\ \e = \f$.
		\end{enumerate}
	\end{theorem}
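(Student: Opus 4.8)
The plan is to prove the four clauses of Theorem~\ref{thm:det} by leveraging the equivalent characterizations of the approximation order established in Lemma~\ref{lem:choi}, together with the weakest-precondition reformulation of the satisfaction relations. The central observation is that both $\dtsat$ and $\dpsat$ reduce to L\"{o}wner-order comparisons involving $\e^\dagger$, so refinement between programs should translate into comparisons between their adjoint actions on effects.

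\emph{Clause (1).} I would first establish the easy direction. Suppose $\e \le \f$. By Lemma~\ref{lem:choi}, clause (5), this gives $\e^\dagger \le \f^\dagger$, hence $\e^\dagger(N) \le \f^\dagger(N)$ for every effect $N$. Now if $\e \dtsat (M,N)$, meaning $M \le wp.\e.N = \e^\dagger(N)$, we immediately get $M \le \e^\dagger(N) \le \f^\dagger(N) = wp.\f.N$, so $\f \dtsat (M,N)$; thus $\e \dtrefine \f$. For the converse, I would assume $\e \dtrefine \f$ and aim to show $\e^\dagger \le \f^\dagger$, which by Lemma~\ref{lem:choi} is equivalent to $\e \le \f$. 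The trick is to feed in the \emph{tightest} specifications: for any effect $N \in \p(\h_W)$ on a large enough system $W \supseteq V$, the pair $(\e^\dagger(N), N)$ satisfies $\e \dtsat (\e^\dagger(N), N)$ by definition of $wp$. Refinement then forces $\f \dtsat (\e^\dagger(N), N)$, i.e.\ $\e^\dagger(N) \le \f^\dagger(N)$. Since this holds for all effects $N$ on all sufficiently large $W$, clause (3)/(4) of Lemma~\ref{lem:choi} gives $\e^\dagger \le \f^\dagger$, hence $\e \le \f$.

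\emph{Clause (2).} This is the partial-correctness analogue and I expect it to go through dually using $wlp.\e.N = I - \e^\dagger(I - N)$. The order reversal arises because $M \le wlp.\e.N$ is equivalent to $\e^\dagger(I - N) \le I - M$, so larger $\e^\dagger$ makes the liberal precondition \emph{harder} to satisfy. Concretely, $\e \dprefine \f$ should translate into $\f^\dagger(I-N) \le \e^\dagger(I-N)$ for all $N$, i.e.\ $\f^\dagger \le \e^\dagger$, which by Lemma~\ref{lem:choi}(5) is $\f \le \e$, namely $\e \ge \f$. The witnessing specification here is $(wlp.\e.N,\, N) = (I - \e^\dagger(I-N),\, N)$, which $\e$ satisfies by construction; refinement transfers it to $\f$, yielding the desired adjoint inequality after rearranging the L\"{o}wner inequalities. \emph{Clause (3)} then follows immediately: $\e \equiv_T^e \f$ means $\e \le \f$ and $\f \le \e$, and since $\le$ is a genuine partial order (antisymmetric) this forces $\e = \f$; the same argument applies to $\equiv_P^e$ via $\e \ge \f$ and $\f \ge \e$.

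\emph{Clause (4)} is where I expect the only real subtlety. When $\e, \f$ are trace-preserving, I claim $\e \le \f$ already forces $\e = \f$. The reason is that $\f - \e$ is completely positive \emph{and} trace-annihilating: for every positive $A$, $\tr((\f-\e)(A)) = \tr(\f(A)) - \tr(\e(A)) = \tr(A) - \tr(A) = 0$. A completely positive map whose image has zero trace on all positive operators must be the zero map, since a positive operator with zero trace is itself zero; hence $(\f-\e)(A) = 0$ for all positive $A$, and by linearity $\e = \f$. This is the main obstacle in the sense that it is the one place the argument does not simply quote Lemma~\ref{lem:choi} but instead uses the trace-preservation hypothesis to collapse the order to equality. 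Given this, clause (4) follows by combining clauses (1)--(3): $\e \dtrefine \f \Leftrightarrow \e \le \f \Leftrightarrow \e = \f$, and likewise $\e \dprefine \f \Leftrightarrow \e \ge \f \Leftrightarrow \e = \f$.
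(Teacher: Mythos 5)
Your proposal is correct and follows essentially the same route as the paper: both reduce satisfaction to L\"{o}wner comparisons via $wp.\e.N = \e^\dagger(N)$ and $wlp.\e.N = I - \e^\dagger(I-N)$, extract $\e^\dagger(N) \le \f^\dagger(N)$ by instantiating the tightest precondition, and conclude with Lemma~\ref{lem:choi}. Your explicit argument for clause (4) — that a completely positive, trace-annihilating map must vanish — is a correct and welcome filling-in of a step the paper dismisses as following ``directly.''
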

	
	\begin{proof}
		For (1), we have the following chain of equivalences:
		\begin{align*}
			\e \dtrefine \f
			&\quad\mbox{iff}\quad \forall W \subseteq \QVar, \forall M, N \in \p(\h_W): \e \dtsat (M, N) \Rightarrow \f \dtsat (M, N)
			\quad \text{(definition)}\\
			&\quad\mbox{iff}\quad \forall W \subseteq \QVar, \forall M, N \in \p(\h_W): M \le \e^\dagger(N) \Rightarrow M \le \f^\dagger(N)\quad \text{(as $wp.\e.N = \e^\dagger(N)$)}\\
			&\quad\mbox{iff}\quad \forall W \subseteq \QVar, \forall N \in \p(\h_W): \e^\dagger(N) \le \f^\dagger(N)
			\quad \text{(since $M$ is arbitrary)}\\
			&\quad\mbox{iff}\quad \e^\dagger \le \f^\dagger
			\quad \text{(definition of approximation order)}\\
			&\quad\mbox{iff}\quad \e \le \f
			\quad \text{(Lemma~\ref{lem:choi})}.
		\end{align*}
				The proof of (2) is similar, and clauses (3) and (4) then follow directly.
%		\textbf{Proof of (2):} Similarly:
%		\begin{align*}
%			\e \dprefine \f
%			&\quad\mbox{iff}\quad \forall W \subseteq \QVar, \forall M, N \in \p(\h_W): \e \dpsat (M, N) \Rightarrow \f \dpsat (M, N)\\
%			&\quad\mbox{iff}\quad \forall W \subseteq \QVar, \forall M, N \in \p(\h_W): M \le I - \e^\dagger(I - N) \Rightarrow M \le I - \f^\dagger(I - N)\\
%			&\qquad \text{(using $wlp.\e.N = I - \e^\dagger(I - N)$)}\\
%			&\quad\mbox{iff}\quad \forall W \subseteq \QVar, \forall N \in \p(\h_W): I - \e^\dagger(I - N) \le I - \f^\dagger(I - N)\\
%			&\quad\mbox{iff}\quad \forall W \subseteq \QVar, \forall N \in \p(\h_W): \f^\dagger(I - N) \le \e^\dagger(I - N)\\
%			&\quad\mbox{iff}\quad \f^\dagger \le \e^\dagger
%			\quad \text{(substituting $N' = I - N$)}\\
%			&\quad\mbox{iff}\quad \e \ge \f
%			\quad \text{(Lemma~\ref{lem:choi}(5))}. \qedhere
%		\end{align*}
	\end{proof}
	
	This theorem reveals a fundamental duality: total correctness refinement follows the approximation order ($\e \le \f$), while partial correctness refinement follows the reverse order ($\e \ge \f$). This reflects our treatment of non-termination. For total correctness, non-termination is undesirable. So the zero program $0$ (which never terminates) is the minimum element under $\dtrefine$, meaning $0 \dtrefine \f$ for all $\f$. That is, any program refines the program that always diverges. In contrast, for partial correctness, non-termination is acceptable. So the zero program $0$ is the maximum element under $\dprefine$, meaning $\e \dprefine 0$ for all $\e$.
 	
	\subsubsection{Single Formula Characterization}
	
	As a consequence of Theorem~\ref{thm:det}, refinement can be verified using a single, canonical specification rather than checking all possible specifications.
	
	\begin{theorem}[Single Formula Characterization]\label{thm:dsingleformula}
		For any $\e, \f \in \dprog(V)$ and $V' \| V$,
		\begin{enumerate}
			\item $\e \dtrefine \f$ iff $\ \f \dtsat (wp.\e.\Omega_{V,V'}, \Omega_{V,V'})$;
			\item $\e \dprefine \f$ iff $\ \f \dpsat (wlp.\e.(I - \Omega)_{V,V'}, (I - \Omega)_{V,V'})$,
		\end{enumerate}
		where $\Omega_{V,V'}$ is the maximally entangled state on $\h_V \otimes \h_{V'}$.
	\end{theorem}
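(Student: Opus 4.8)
The plan is to reduce the single-formula criterion to the approximation order via the Choi--Jamiolkowski representation, so that Theorem~\ref{thm:det} closes the argument. For part~(1), recall from Theorem~\ref{thm:det}(1) that $\e \dtrefine \f$ holds iff $\e \le \f$, so it suffices to prove that $\f \dtsat (wp.\e.\Omega_{V,V'}, \Omega_{V,V'})$ is equivalent to $\e \le \f$. First I would unfold the right-hand satisfaction using the weakest-precondition characterization $\f \dtsat (M,N)$ iff $M \le wp.\f.N = \f^\dagger(N)$: taking $M = wp.\e.\Omega_{V,V'} = \e^\dagger(\Omega_{V,V'})$ and $N = \Omega_{V,V'}$, this reads $\e^\dagger(\Omega_{V,V'}) \le \f^\dagger(\Omega_{V,V'})$. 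The crucial observation is that $\e^\dagger(\Omega_{V,V'}) = J(\e^\dagger)$ and $\f^\dagger(\Omega_{V,V'}) = J(\f^\dagger)$ are precisely the Choi matrices of the adjoints, where $\e^\dagger$ is silently extended to $\e^\dagger \otimes \id_{V'}$ in accordance with the convention used to define $J$. Hence the displayed inequality is exactly $J(\e^\dagger) \le J(\f^\dagger)$.

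It then remains to connect this Löwner inequality back to $\e \le \f$. By the equivalence of clauses~(1) and~(2) in Lemma~\ref{lem:choi}, applied to the pair $\e^\dagger, \f^\dagger$, we have $J(\e^\dagger) \le J(\f^\dagger)$ iff $\e^\dagger \le \f^\dagger$; and clause~(5) of the same lemma gives $\e^\dagger \le \f^\dagger$ iff $\e \le \f$. Chaining these equivalences yields $\f \dtsat (wp.\e.\Omega_{V,V'}, \Omega_{V,V'})$ iff $\e \le \f$ iff $\e \dtrefine \f$, establishing part~(1).

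For part~(2), I would run the dual argument. By Theorem~\ref{thm:det}(2), $\e \dprefine \f$ iff $\f \le \e$, so the goal is to show that $\f \dpsat (wlp.\e.(I-\Omega)_{V,V'}, (I-\Omega)_{V,V'})$ is equivalent to $\f \le \e$. Using the characterization $\f \dpsat (M,N)$ iff $M \le wlp.\f.N = I - \f^\dagger(I-N)$ together with $wlp.\e.(I-\Omega) = I - \e^\dagger(\Omega)$ (since $I - (I-\Omega) = \Omega$), the satisfaction unfolds to $I - \e^\dagger(\Omega_{V,V'}) \le I - \f^\dagger(\Omega_{V,V'})$, i.e.\ $\f^\dagger(\Omega_{V,V'}) \le \e^\dagger(\Omega_{V,V'})$, equivalently $J(\f^\dagger) \le J(\e^\dagger)$. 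Applying Lemma~\ref{lem:choi} as before, now in the reversed direction, gives $\f \le \e$, which completes the argument.

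Apart from the routine bookkeeping, the one point demanding care—and the only place the argument could slip—is the direction tracking around the adjoint and the complementation. Concretely, I must verify that $wp.\e.\Omega$ genuinely equals the Choi matrix $J(\e^\dagger)$ under the tacit extension $\e^\dagger \mapsto \e^\dagger \otimes \id_{V'}$ built into the definition of $J$, and that the $I - (\cdot)$ appearing inside $wlp$ correctly reverses the Löwner inequality in part~(2). Once these are pinned down, both parts follow immediately from Theorem~\ref{thm:det} and Lemma~\ref{lem:choi}, with no further analytic input (in particular, no appeal to the minimax machinery) required.
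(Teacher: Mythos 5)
Your proposal is correct and follows essentially the same route as the paper's proof: reduce refinement to the approximation order via Theorem~\ref{thm:det}, identify $\e^\dagger(\Omega_{V,V'})$ with $wp.\e.\Omega_{V,V'}$ (resp.\ handle the $I-\Omega$ complementation for $wlp$), and close the loop with Lemma~\ref{lem:choi}. If anything, you are more explicit than the paper—which compresses the key step into ``Lemma~\ref{lem:choi}(4) with $A=\Omega$'' and dismisses part~(2) as analogous—by spelling out that clauses (1)$\Leftrightarrow$(2) are applied to the adjoints together with clause~(5), and by carrying out the dual argument for partial correctness in full.
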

	
	\begin{proof}
		We prove (1); the proof of (2) is analogous.
		\begin{align*}
			\e \dtrefine \f
			&\quad\mbox{iff}\quad \e \le \f
			\quad \text{(Theorem~\ref{thm:det}(1))}\\
			&\quad\mbox{iff}\quad \e^\dagger(\Omega_{V,V'}) \le \f^\dagger(\Omega_{V,V'})
			\quad \text{(Lemma~\ref{lem:choi}(4) with $A = \Omega$)}\\
			&\quad\mbox{iff}\quad wp.\e.\Omega_{V,V'} \le wp.\f.\Omega_{V,V'}
			\quad \text{(since $wp.\e.\Omega = \e^\dagger(\Omega)$)}\\
			&\quad\mbox{iff}\quad \f \dtsat (wp.\e.\Omega_{V,V'}, \Omega_{V,V'})
			\quad \text{(definition of $\dtsat$)}. \qedhere
		\end{align*}
	\end{proof}
	
	This result provides a practical verification method: to check whether $\f$ refines $\e$, it suffices to verify a single specification determined by $\e$ and the maximally entangled state. This is particularly useful in automated verification tools.
	
	\subsection{Refinement Under Projector-Based Specifications}
	\label{subsec:detproj}
	
	 We have established a close connection between refinement orders and the complete positivity of super-operators when effects are used as state predicates in specifications. This naturally raises the question of whether refinement orders change when the specification language is weakened or strengthened.
	 This subsection addresses the first case by characterizing the refinement orders induced by projector-based specifications. The second case will be considered in the next section, where refinement orders for nondeterministic programs are studied.

	\subsubsection{Correctness Definitions}
	
	Since projectors are special effects, the satisfaction relations $\dtsat$ and $\dpsat$ defined in the previous subsection apply directly. However, as shown in~\cite{zhou2019applied}, for $\e \in \dprog(V)$ and projectors $P, Q \in \s(\h_W)$, these relations have equivalent qualitative formulations:
	
	\begin{itemize}
		\item $\e \atsat (P, Q)$ iff for any $\rho \in \d(\h_X)$ with $V \cup W \subseteq X$,
		\[
		\rho \models P \quad \text{implies} \quad \e(\rho) \models Q \text{ and } \tr(\e(\rho)) = \tr(\rho).
		\]
		
		\item $\e \apsat (P, Q)$ iff for any $\rho \in \d(\h_X)$ with $V \cup W \subseteq X$,
		\[
		\rho \models P \quad \text{implies} \quad \e(\rho) \models Q.
		\]
	\end{itemize}
		
	To work with projector-based refinement, we need predicate transformers that return projectors rather than general effects.
	
	\begin{definition}[Projector Predicate Transformers]\label{def:wlp}
		Given $\e \in \dprog(V)$ and $Q \in \s(\h_W)$,
		
		\begin{enumerate}
			\item the \emph{weakest precondition} $wp^p.\e.Q \in \s(\h_W)$ is the unique projector such that $\e \atsat (wp^p.\e.Q, Q)$, and
 			for any $P \in \s(\h_W)$ with $\e \atsat (P, Q)$, it holds $P \le wp^p.\e.Q$;
			
			\item the \emph{weakest liberal precondition} $wlp^p.\e.Q \in \s(\h_W)$ is similarly defined using $\apsat$ instead of $\atsat$;
			
			\item the \emph{strongest postcondition} $sp^p.\e.Q \in \s(\h_W)$ is the unique projector such that  $\e \apsat (Q, sp^p.\e.Q)$, and
			for any $R \in \s(\h_W)$ with $\e \apsat (Q, R)$, it holds $sp^p.\e.Q \le R$.
		\end{enumerate}
	\end{definition}
	
	Note that $wp^p.\e.Q$ differs from $wp.\e.Q $ defined in the previous subsection because the former is the weakest \emph{projector} precondition, while the latter is the weakest \emph{effect} precondition.
	
	Since $\s(\h_W)$ forms a complete lattice, these predicate transformers always exist. The following technical lemma provides the tools to compute them explicitly.
	
	\begin{lemma}\label{lem:EN}
		For any $A \in \p(\h)$ and $P \in \s(\h)$,
		\begin{enumerate}
			\item $E(A) = \mathcal{N}(I - A)$, where $E(A) = \{|\psi\rangle : A|\psi\rangle = |\psi\rangle\}$ is the eigenspace of $A$ for eigenvalue 1, and $\mathcal{N}(A)= \supp{A}^\bot$ denotes the null subspace of $A$;
			\item $P \le A$ iff $\ P \le E(A)$ iff $\ P \le \mathcal{N}(I - A)$.
		\end{enumerate}
	\end{lemma}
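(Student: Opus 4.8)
The plan is to reduce all three claims to one elementary fact: for any positive operator $B$, the quadratic-form null space $\mathcal{N}(B) = \{|\psi\rangle : \langle\psi|B|\psi\rangle = 0\}$ coincides with the ordinary kernel $\{|\psi\rangle : B|\psi\rangle = 0\}$. This follows by writing $B = C^\dagger C$, so that $\langle\psi|B|\psi\rangle = \|C|\psi\rangle\|^2$ vanishes exactly when $C|\psi\rangle = 0$, equivalently $B|\psi\rangle = 0$.

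For part (1), I would apply this fact to $B = I - A$. Since $A \in \p(\h)$ satisfies $0 \le A \le I$, both $A$ and $I - A$ are positive, so $\mathcal{N}(I - A) = \{|\psi\rangle : (I - A)|\psi\rangle = 0\} = \{|\psi\rangle : A|\psi\rangle = |\psi\rangle\} = E(A)$, which is exactly the claimed identity.

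For part (2), the equivalence $P \le E(A) \Leftrightarrow P \le \mathcal{N}(I - A)$ is immediate from part (1), so the substance is the equivalence between the L\"{o}wner statement $P \le A$ and the subspace inclusion $P \le E(A)$ (i.e.\ every vector fixed by $P$ lies in $E(A)$). For the forward direction, assuming $P \le A$, I would take any unit vector $|\psi\rangle$ with $P|\psi\rangle = |\psi\rangle$; then $1 = \langle\psi|P|\psi\rangle \leq \langle\psi|A|\psi\rangle \leq \langle\psi|I|\psi\rangle = 1$, using $P \le A \le I$. This squeezes $\langle\psi|(I - A)|\psi\rangle = 0$, and the positivity fact yields $|\psi\rangle \in E(A)$. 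For the backward direction, assuming every vector fixed by $P$ lies in $E(A)$, I would decompose an arbitrary $|\psi\rangle = |\psi_1\rangle + |\psi_0\rangle$ with $|\psi_1\rangle$ in the range of $P$ and $|\psi_0\rangle$ orthogonal to it. Because $A$ acts as the identity on $E(A)$, which contains the range of $P$, one has $A|\psi_1\rangle = |\psi_1\rangle$, and Hermiticity of $A$ collapses the cross term $\langle\psi_1|A|\psi_0\rangle = \langle\psi_1|\psi_0\rangle = 0$ together with its conjugate. Hence $\langle\psi|A|\psi\rangle = \|\psi_1\|^2 + \langle\psi_0|A|\psi_0\rangle \geq \|\psi_1\|^2 = \langle\psi|P|\psi\rangle$, where positivity of $A$ supplies the inequality; as $|\psi\rangle$ was arbitrary this is precisely $P \le A$.

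I do not anticipate a serious obstacle, since everything rests on the single positivity lemma. The one point demanding care is bookkeeping around the two meanings of $\le$—the L\"{o}wner order in $P \le A$ versus subspace inclusion in $P \le E(A)$—and, in the backward direction, ensuring the cross-term cancellation is justified by $A$ fixing all of $E(A)$ (and hence the range of $P$), rather than by any property of $P$ alone.
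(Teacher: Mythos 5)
Your proposal is correct and follows essentially the same route as the paper: part (1) identifies $E(A)$ with the kernel of $I-A$, and the forward direction of (2) is the identical squeeze argument $1 = \langle\psi|P|\psi\rangle \le \langle\psi|A|\psi\rangle \le 1$. The only differences are matters of detail: you explicitly justify the positivity fact (quadratic-form null space equals kernel, via $B = C^\dagger C$) that the paper uses tacitly, and you prove the backward direction by a direct computation that amounts to the fact $E(A) \le A$, which the paper simply cites.
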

	
	\begin{proof}
		(1): For any $|\psi\rangle \in \h$,
		\begin{align*}
			|\psi\rangle \in E(A)
			&\quad\mbox{iff}\quad A|\psi\rangle = |\psi\rangle
			\quad \text{(definition of $E(A)$)}\\
			&\quad\mbox{iff}\quad (I - A)|\psi\rangle = 0\\
			&\quad\mbox{iff}\quad |\psi\rangle \in \mathcal{N}(I - A)
			\quad \text{(definition of null space)}.
		\end{align*}
		
		(2): Suppose $P \le A$. For any normalized $|\psi\rangle \in P$, we have $\langle\psi|P|\psi\rangle = 1$. From $P \le A$, we get $\langle\psi|A|\psi\rangle \geq 1$. But since $A \le I$, we also have $\langle\psi|A|\psi\rangle \leq 1$. Therefore $\langle\psi|A|\psi\rangle = 1$, which implies $A|\psi\rangle = |\psi\rangle$. Hence $|\psi\rangle \in E(A)$, proving $P \subseteq E(A)$, i.e., $P \le E(A)$. The converse part follows directly from the fact that $E(A)\le A$.
		
		Finally, the equivalence with $P\le \mathcal{N}(I - A)$ follows from part (1).
	\end{proof}
	
	Using this lemma, we can derive explicit formulas for the projector predicate transformers.
	
	\begin{lemma}[Explicit Formulas for Projector Transformers]\label{lem:wpwlpsp}
		For any $\e \in \dprog(V)$ and $Q \in \s(\h_W)$,
\[wp^p.\e.Q = E(\e^\dag(Q)),\quad wlp^p.\e.Q = \mathcal{N}(\e^\dag(Q^\bot)), \quad sp^p.\e.Q = \supp{\e(Q)}.\]
	\end{lemma}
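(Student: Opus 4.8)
The plan is to prove each of the three formulas by showing that the claimed projector satisfies the defining extremal property from Definition~\ref{def:wlp}. For each transformer, I would establish two things: that the proposed operator satisfies the relevant correctness judgment, and that it is extremal (weakest precondition or strongest postcondition) among all projectors satisfying that judgment. The key technical tool throughout is Lemma~\ref{lem:EN}, together with the weakest-precondition characterizations $wp.\e.Q = \e^\dagger(Q)$ and $wlp.\e.Q = I - \e^\dagger(I - Q)$ from the effect-based setting, and the satisfaction characterizations $\e \atsat (P,Q)$ iff $P \le wp.\e.Q$ and $\e \apsat (P,Q)$ iff $P \le wlp.\e.Q$.

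For $wp^p.\e.Q = E(\e^\dagger(Q))$, I would first observe that by the effect-based characterization, $\e \atsat (P,Q)$ holds iff $P \le \e^\dagger(Q)$. Since $\e^\dagger(Q) \in \p(\h_W)$ is an effect (as $\e^\dagger$ is completely positive and trace-nonincreasing, so $\e^\dagger(Q) \le \e^\dagger(I) \le I$), Lemma~\ref{lem:EN}(2) gives that for any projector $P$, $P \le \e^\dagger(Q)$ iff $P \le E(\e^\dagger(Q))$. This immediately shows $E(\e^\dagger(Q))$ is the largest projector $P$ with $\e \atsat (P,Q)$, which is exactly the defining property. The formula $wlp^p.\e.Q = \mathcal{N}(\e^\dagger(Q^\bot))$ follows analogously: $\e \apsat (P,Q)$ iff $P \le wlp.\e.Q = I - \e^\dagger(I - Q) = I - \e^\dagger(Q^\bot)$, and by Lemma~\ref{lem:EN}(2) this holds iff $P \le E(I - \e^\dagger(Q^\bot))$, which by Lemma~\ref{lem:EN}(1) equals $\mathcal{N}(\e^\dagger(Q^\bot))$.

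For the strongest postcondition $sp^p.\e.Q = \supp{\e(Q)}$, the argument runs through the support semantics of $\apsat$ rather than through effects. I would use the qualitative formulation: $\e \apsat (Q, R)$ iff for every $\rho$ with $\supp{\rho} \subseteq Q$ we have $\supp{\e(\rho)} \subseteq R$. Taking $\rho = Q$ (the maximally mixed state on $Q$, or $Q$ itself normalized) shows that any valid postcondition $R$ must contain $\supp{\e(Q)}$, giving the lower-bound direction. Conversely, I would verify that $R = \supp{\e(Q)}$ is itself a valid postcondition by the monotonicity of support under completely positive maps: if $\supp{\rho} \subseteq Q$ then $\rho \le c\,Q$ for some scalar $c > 0$, whence $\e(\rho) \le c\,\e(Q)$ and therefore $\supp{\e(\rho)} \subseteq \supp{\e(Q)}$.

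The main obstacle I anticipate is the strongest-postcondition case, specifically the claim that $\supp{\rho} \subseteq Q$ implies $\supp{\e(\rho)} \subseteq \supp{\e(Q)}$. This requires the fact that for positive operators, $A \le cB$ implies $\supp{A} \subseteq \supp{B}$, combined with the fact that $\supp{\rho} \subseteq Q$ (as subspaces, with $Q$ a projector) indeed yields $\rho \le c\,Q$ for suitable $c$. Care is needed to handle the support calculus correctly under the completely positive map $\e$, and to ensure the extremality argument for $sp^p$ does not implicitly assume the operators commute or that $\e$ is trace-preserving. The $wp^p$ and $wlp^p$ cases, by contrast, reduce cleanly to Lemma~\ref{lem:EN} and should be routine once the effect-based characterizations are invoked.
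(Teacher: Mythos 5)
Your proposal is correct and follows essentially the same route as the paper's proof: the $wp^p$ and $wlp^p$ cases reduce to the effect-based characterizations $wp.\e.Q = \e^\dagger(Q)$ and $wlp.\e.Q = I - \e^\dagger(I-Q)$ combined with Lemma~\ref{lem:EN}, and the $sp^p$ case uses the qualitative formulation with $\rho$ proportional to $Q$ as the extremal witness. The support-monotonicity step you flag as the main obstacle (via $\rho \le c\,Q \Rightarrow \e(\rho) \le c\,\e(Q) \Rightarrow \supp{\e(\rho)} \subseteq \supp{\e(Q)}$) is exactly the fact the paper asserts parenthetically as $\supp{\e(\rho)} \le \supp{\e(Q/\dim(Q))}$, so your write-up just makes that step explicit.
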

	
	\begin{proof}
		For $wp^p$, we have
		\begin{align*}
			\e \atsat (P, Q)
			&\quad\mbox{iff}\quad P \le \e^\dagger(Q)
			\quad \text{(since $wp.\e.Q = \e^\dag(Q)$)}\\
			&\quad\mbox{iff}\quad P \le E(\e^\dagger(Q))
			\quad \text{(Lemma~\ref{lem:EN}(2))}.
		\end{align*}
		Thus $E(\e^\dagger(Q))$ is the largest projector $P$ satisfying $\e \atsat (P, Q)$.
		
		For $wlp^p$, 
		\begin{align*}
			\e \dpsat (P, Q)
			&\quad\mbox{iff}\quad P \le I - \e^\dagger(I - Q)
			\quad \text{(since $wlp.\e.Q = I - \e^\dagger(I - Q)$)}\\
			&\quad\mbox{iff}\quad P \le \mathcal{N}(\e^\dagger(I - Q))
			\quad \text{(Lemma~\ref{lem:EN}(2))}\\
			&\quad\mbox{iff}\quad P \le \mathcal{N}(\e^\dagger(Q^\bot))
			\quad \text{(since $I - Q = Q^\bot$ for projectors)}.
		\end{align*}
		Thus $\mathcal{N}(\e^\dagger(Q^\bot))$ is the largest projector $P$ satisfying $\e \dpsat (P, Q)$.
				
		For $sp^p$, assume $\dim(Q) > 0$ (the case $\dim(Q) = 0$ is trivial). Then
		\begin{align*}
			\e \apsat (Q, R)
			&\quad\mbox{iff}\quad \forall \rho: \rho \models Q \text{ implies } \e(\rho) \models R\\
			&\quad\mbox{iff}\quad \e(Q/\dim(Q)) \models R
			\quad  \text{(since $\supp{\e(\rho)} \le \supp{\e(Q/\dim(Q))}$ for all $\rho \models Q$)}\\
			&\quad\mbox{iff}\quad \supp{\e(Q)} \le R
			\quad \text{(since $\supp{\e(Q)} = \supp{\e(Q/\dim(Q))}$)}.
		\end{align*}
		Thus $\supp{\e(Q)}$ is the smallest projector $R$ satisfying $\e \apsat (Q, R)$.
	\end{proof}
	
		\subsubsection{Characterization of Projector-Based Refinement}
	
	With predicate transformers established, we can now characterize refinement for projector-based specifications. We write $\e \atrefine \f$ 
	if for every $W \subseteq \QVar$ and $P, Q \in \s(\h_W)$,
\[
\e \atsat (P, Q) \quad \text{implies} \quad \f \atsat (P, Q),
\]
	and define $\e \aprefine \f$ analogously for partial correctness.
	It is straightforward to show that 
	\begin{align}
		\e \atrefine \f &\quad \mbox{iff}\quad \forall Q: wp^p.\e.Q \le wp^p.\f.Q\label{eq:wpt}\\
		\e \aprefine \f & \quad \mbox{iff}\quad  \forall Q: wlp^p.\e.Q \le wlp^p.\f.Q \quad \mbox{iff}\quad \forall P: sp^p.\f.P \le sp^p.\e.P. \label{eq:wlpp}
	\end{align}
 Furthermore, analogous to Theorem~\ref{thm:det}, we can provide a more direct characterization of the refinement orders $\aprefine$ and $\atrefine$. To this end, we introduce two key notions. For a super-operator $\e$ with Kraus operators $\{E_i : i \in K\}$, define
	\[
	\spann(\e) = \spann\{E_i : i \in K\},
	\]
	the linear span of the Kraus operators. This is well-defined as the span is independent of the choice of Kraus representation. Specifically, if $\{E_i: i\in K\}$ and $\{E_i': i\in K'\}$ both correspond to $\e$, then $\spann\{E_i: i\in K\}=\spann\{E_i': i\in K'\}$. 
	
	The \emph{termination space} of $\e \in \dprog(V)$ is defined as~\cite{zhou2019applied}
	\[
	T_\e = \left\{|\psi\rangle \in \h_V : \tr(\e(|\psi\rangle\langle\psi|)) = \tr(|\psi\rangle\langle\psi|)\right\},
	\]
	the subspace of states on which $\e$ preserves probability (terminates). With our notations, $T_\e = wp^p.\e.I_V = E(\e^\dagger(I_V))$.

	\begin{theorem}[Projector-Based Refinement Characterization]\label{thm:detproj}
		For any $\e, \f \in \dprog(V)$,
		\begin{enumerate}
			\item $\e \aprefine \f$ iff $\ \spann(\f) \le \spann(\e)$;
			\item $\e \atrefine \f$ iff $\ T_\e \le T_\f$ and $\spann(\f \circ \p_{T_\e}) \le \spann(\e \circ \p_{T_\e})$, where $\p_{T_\e}$ is the super-operator with $T_\e$ as its unique Kraus operator;
			\item If both $\e$ and $\f$ are trace-preserving, then $\e \atrefine \f$ iff $\ \e \aprefine \f$.
		\end{enumerate}
	\end{theorem}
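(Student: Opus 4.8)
The plan is to reduce everything to the predicate-transformer characterizations \eqref{eq:wpt} and \eqref{eq:wlpp}, fed by the explicit formulas $wp^p.\e.Q = E(\e^\dagger(Q))$, $wlp^p.\e.Q=\mathcal{N}(\e^\dagger(Q^\bot))$, and $sp^p.\e.Q=\supp{\e(Q)}$ of Lemma~\ref{lem:wpwlpsp}. I would prove the three clauses in order, since clause (2) invokes clause (1) and clause (3) is a corollary of (2). Throughout I use the elementary fact that, for a Kraus family $\{E_i\}$ of $\e$ and a projector $P$, one has $\supp{\e(P)}=\spann\{E_i|\psi\rangle : |\psi\rangle\in P\}$, obtained by expanding $\e(P)=\sum_{i}E_iPE_i^\dagger$ over an orthonormal basis of $P$.

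For clause (1), I would use the strongest-postcondition form of \eqref{eq:wlpp}, namely that $\e \aprefine \f$ iff $\supp{\f(P)}\le\supp{\e(P)}$ for every $W\subseteq\QVar$ and every $P\in\s(\h_W)$ (with $\e,\f$ silently extended to $\id_{W\setminus V}\otimes\e$, etc.). For the ($\Leftarrow$) direction, if $\spann(\f)\le\spann(\e)$ then each Kraus operator $F_j$ of $\f$ is a linear combination of the $E_i$, so $F_j|\psi\rangle\in\supp{\e(P)}$ for every $|\psi\rangle\in P$; this inclusion is preserved under tensoring with $\id_{W\setminus V}$ since $\spann\{I\otimes F_j\}\le\spann\{I\otimes E_i\}$, giving $\supp{\f(P)}\le\supp{\e(P)}$ for all $W$. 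For the ($\Rightarrow$) direction it suffices to instantiate a single specification on the doubled system $\h_{V\cup V'}$ with $P=\Omega$: since $\supp{J(\e)}$ corresponds under vectorization exactly to $\spann(\e)$ (and likewise for $\f$), the inclusion $\supp{\f(\Omega)}\le\supp{\e(\Omega)}$ forces $\spann(\f)\le\spann(\e)$.

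For clause (2), I would decompose total correctness into a termination requirement plus a partial-correctness requirement on the terminating subspace. The first step is the observation that, for any projectors $P,Q$, $\e\atsat(P,Q)$ holds iff $P\le T_\e$ and $\e\circ\p_{T_\e}\apsat(P,Q)$: indeed $\e\atsat(P,Q)$ forces $\tr(\e(|\psi\rangle\langle\psi|))=1$ for every $|\psi\rangle\in P$, i.e.\ $P\le T_\e$, and once $P\le T_\e$ one has $\e(\rho)=(\e\circ\p_{T_\e})(\rho)$ for every $\rho\models P$. Taking $Q=I_V$ and using $\e\atsat(T_\e,I_V)$ immediately yields the first conjunct $T_\e\le T_\f$. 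Given $T_\e\le T_\f$, states supported in $T_\e$ are handled identically by $\f$ and $\f\circ\p_{T_\e}$, so the decomposition shows $\e\atrefine\f$ is equivalent to the partial-correctness refinement $\e\circ\p_{T_\e}\aprefine\f\circ\p_{T_\e}$, which clause (1) turns into $\spann(\f\circ\p_{T_\e})\le\spann(\e\circ\p_{T_\e})$. In the extraction direction I would again use one test state, the rank-one projector $P_0$ onto $(T_\e\otimes I_{V'})|\omega\rangle$: it lies below the extended termination space $T_\e\otimes I_{V'}=T_{\e\otimes\id}$, and since $(\p_{T_\e}\otimes\id)$ fixes $(T_\e\otimes I)\Omega(T_\e\otimes I)$, one computes $(\e\circ\p_{T_\e})(P_0)\propto J(\e\circ\p_{T_\e})$, whose support is $\spann(\e\circ\p_{T_\e})$.

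Clause (3) is then immediate: if $\e$ and $\f$ are trace-preserving then $T_\e=T_\f=I_V$ and $\p_{T_\e}=\id$, so the two conditions of clause (2) collapse to $\spann(\f)\le\spann(\e)$, which is exactly the condition of clause (1); hence $\atrefine$ and $\aprefine$ coincide. I expect the main obstacle to be clause (2): one must verify carefully that total-correctness refinement factors as termination containment together with partial-correctness refinement on $T_\e$, in particular that composing $\f$ with $\p_{T_\e}$ rather than $\p_{T_\f}$ is legitimate — this is precisely where the already-established inclusion $T_\e\le T_\f$ is needed — and that the single maximally-entangled test state restricted to $T_\e$ recovers the full Kraus span of $\e\circ\p_{T_\e}$.
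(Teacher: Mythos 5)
Your proposal is correct, but it reaches the theorem by a genuinely different route from the paper's. For clause (1), the paper starts from the same characterization ($\e \aprefine \f$ iff $\supp{\f(P)} \le \supp{\e(P)}$ for all $P$), but then reduces to pointwise spans $\spann\{F_j|\psi\rangle\} \le \spann\{E_i|\psi\rangle\}$ over all single vectors and closes with a separating-functional contradiction (a pair $|x\rangle, |y\rangle$ with $\langle x|E_i|y\rangle = 0$ for all $i$ yet $\langle x|F_{j_0}|y\rangle \neq 0$). Your single test $P = \Omega$, combined with the observation that $\supp{J(\e)}$ is exactly the vectorization of $\spann(\e)$, replaces that argument; it is cleaner and anticipates the paper's own Theorem~\ref{thm:dsingleformula}. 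For clause (2), the paper never leaves the predicate-transformer level: it rewrites $wp^p.\e.Q$ as $T_\e \wedge wlp^p.\e.Q$ via Theorem~\ref{thm:galois}(2) and then invokes an external identity (Lemma 4.2 of \cite{feng2023refinement}, $wlp^p.(\e\circ\p_{T_\e}).Q = T_\e^\bot \vee (T_\e \wedge wlp^p.\e.Q)$) to identify the resulting condition with $\e\circ\p_{T_\e} \aprefine \f\circ\p_{T_\e}$. You instead decompose the satisfaction relation $\atsat$ semantically into termination containment plus $\apsat$ of the composed program, and extract the span condition with the test state $P_0$. Your route needs no external lemma and stays operational; the paper's lattice-algebraic route never has to reason about individual states or Choi matrices.

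One step of your clause (2) is under-justified as stated: the claim that, given $T_\e \le T_\f$, ``the decomposition shows'' $\e \atrefine \f$ is equivalent to $\e\circ\p_{T_\e} \aprefine \f\circ\p_{T_\e}$. The right-to-left direction is fine and is what your backward proof uses. But the left-to-right direction does not follow from ``states supported in $T_\e$ are handled identically'': $\aprefine$ quantifies over \emph{all} preconditions $P$, including those with $P \not\le T_\e$, and for such $P$ the states $\p_{T_\e}(\rho)$ with $\rho \models P$ are supported in the image subspace $\{T_\e|\psi\rangle : |\psi\rangle \in P\}$, which in general is neither $P$ nor $P \wedge T_\e$; one would first have to replace $P$ by this image before the decomposition applies. (The equivalence itself is true --- it is essentially what the paper proves --- but it needs this extra argument.) Fortunately your proof does not depend on that direction: the forward implication of clause (2) already follows from your two tests --- $(T_\e, I_V)$ yields $T_\e \le T_\f$, and $\bigl(P_0, \supp{(\e\otimes\id_{V'})(P_0)}\bigr)$ yields $\spann(\f\circ\p_{T_\e}) \le \spann(\e\circ\p_{T_\e})$ via the Choi correspondence --- while the backward implication uses only the unproblematic direction of the decomposition. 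Assembled that way, your argument is complete.
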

	
	\begin{proof}
		For (1), we establish the following chain of equivalences:
		\begin{align*}
			\e \aprefine \f
			&\quad\mbox{iff}\quad \forall W\subseteq \QVar, \forall P \in \s(\h_W): \supp{\f(P)} \le \supp{\e(P)}
			\quad \text{(by Lemma~\ref{lem:wpwlpsp} and Eq.~\eqref{eq:wlpp})}\\
			&\quad\mbox{iff}\quad \forall W\subseteq \QVar, \forall |\psi\rangle \in \h_W: \spann\{F_j|\psi\rangle : j \in J\} \le \spann\{E_i|\psi\rangle : i \in K\}\\
			&\qquad\qquad \text{(where $\{E_i\}$, $\{F_j\}$ are Kraus operators of $\e$, $\f$)}\\
			&\quad\mbox{iff}\quad \spann\{F_j : j \in J\} \le \spann\{E_i : i \in K\} \quad \text{(see below)}.
		\end{align*}
To justify the last equivalence, the `if' direction is straightforward. For the `only if' direction, assume that $\spann\{F_j|\psi\>: j\in J\}\le \spann\{E_i|\psi\>: i\in K\}$ for any $|\psi\>$. We need to prove $\spann\{F_j: j\in J\}\le \spann\{E_i: i\in K\}$. Otherwise, there exist some $j_0\in J$ and $|x\>,|y\>\in \h_V$ such that $\<x|E_i|y\>=0$ for all $i\in K$, yet $\<x|F_{j_0}|y\>\neq 0$. However, from the assumption, we have
$
F_{j_0}|y\> = \sum_{i \in K} \lambda_{i,j_0}E_i |y\>
$
for some coefficients $\lambda_{i,j_0}$. Thus 
\[
\<x|F_{j_0}|y\> = \sum_{i \in K} \lambda_{i,j_0}\<x|E_i |y\>=0,
\]
a contradiction, and so \( \spann\{F_j: j \in J\} \le \spann\{E_i: i \in K\} \) indeed holds.

For (2), we have on one hand,
\begin{align}
	\e\atrefine \f &\quad \mbox{iff}\quad\forall W\subseteq \QVar,\  \forall Q\in \s(\h_W): wp^p.\e.Q\le wp^p.\f.Q \quad \text{(by Eq.~\eqref{eq:wpt})} \notag \\
	&\quad \mbox{iff}\quad\forall W\subseteq \QVar,\  \forall Q\in \s(\h_W): T_\e \wedge wlp^p.\e.Q\le T_\f \wedge wlp^p.\f.Q \label{eq:tmp1}\\
	&\qquad\qquad\qquad \text{(by Theorem~\ref{thm:galois}(2))}. \notag
\end{align} 
On the other hand, for all $W\subseteq \QVar$ and $Q\in \s(\h_W)$,
\begin{align}
	&\quad wlp^p.\left(\e\circ \p_{T_\e}\right).Q\le wlp^p.\left(\f\circ \p_{T_\e}\right).Q\notag\\
	\mbox{iff}&\quad T_\e^\bot \vee (T_\e \wedge wlp^p.\e.Q)\le T_\e^\bot \vee (T_\e \wedge wlp^p.\f.Q)\notag\\
	& \qquad\qquad \text{(Lemma 4.2 of \cite{feng2023refinement})}\notag\\
	\mbox{iff}&\quad T_\e \wedge wlp^p.\e.Q\le T_\e \wedge wlp^p.\f.Q \quad \text{(since $T_\e\ {\bot}\ T_\e^\bot$)} \label{eq:tmp2}.
\end{align} 

We now show Eq.~\eqref{eq:tmp1} is equivalent to the conjunction of Eq.~\eqref{eq:tmp2} and $T_\e \le T_\f$. First, from Eq.~\eqref{eq:tmp1}, we have $T_\e \le T_\f$ by taking $Q=I_W$ and noting Theorem~\ref{thm:ptransformdet}(1). Furthermore, by taking the meet of $T_\e$ with both sides of Eq.~\eqref{eq:tmp1}, we derive Eq.~\eqref{eq:tmp2}.
Conversely, suppose $T_\e \le T_\f$. Then from Eq.~\eqref{eq:tmp2} we immediately have Eq.~\eqref{eq:tmp1}.

Thus we have $\e\atrefine \f$ iff $T_\e \le T_\f$ and $\e\circ \p_{T_\e}\aprefine \f\circ \p_{T_\e}$. Then (2) follows from (1). Finally, clause (3) follows directly from the fact that total correctness and partial correctness are equivalent for trace-preserving quantum programs. It can also be easily seen from clauses (1) and (2), since when both $\e$ and $\f$ are trace-preserving, $T_\e = T_\f = I_V$, and $\p_{T_\e} = \id_{V}$ is the identity super-operator.
	\end{proof}
	
	\subsubsection{Comparison: Effects vs. Projectors}
	
	We now establish that projector-based refinement is strictly weaker than effect-based refinement.
	
	\begin{proposition}\label{lem:eimpp}
		For deterministic programs $\e, \f \in \dprog(V)$,
		\begin{enumerate}
			\item $\e \dtrefine \f$ implies $\e \atrefine \f$, but the converse is false;
			\item $\e \dprefine \f$ implies $\e \aprefine \f$, but the converse is false.
		\end{enumerate}
	\end{proposition}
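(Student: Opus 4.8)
The plan is to prove both implications using the characterizations already established: effect-based refinement via Theorem~\ref{thm:det} (so $\e \dtrefine \f$ iff $\e \le \f$, and $\e \dprefine \f$ iff $\e \ge \f$) and projector-based refinement via Theorem~\ref{thm:detproj} (clause (1) for partial correctness, clause (2) for total). The forward implications should reduce to showing that the complete-positivity order on super-operators implies the corresponding span/termination-space conditions on Kraus operators.

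For part (2), the forward direction starts from $\e \dprefine \f$, i.e.\ $\f \le \e$, meaning $\e - \f$ is completely positive. I would argue that $\e = \f + \g$ for some completely positive $\g$, so that a Kraus family for $\e$ can be taken as the union of a Kraus family for $\f$ and one for $\g$. This immediately gives $\spann(\f) \le \spann(\e)$ as subspaces of $\l(\h_V)$, which by Theorem~\ref{thm:detproj}(1) yields $\e \aprefine \f$. For part (1), the forward direction starts from $\e \dtrefine \f$, i.e.\ $\e \le \f$, so $\f = \e + \g$ with $\g$ completely positive. Here I first need $T_\e \le T_\f$: since $\e(\rho) \le \f(\rho)$ in the L\"{o}wner order (Lemma~\ref{lem:choi}(3)) and $\tr(\f(\rho)) \le \tr(\rho)$, any state on which $\e$ terminates (preserves trace) forces $\f$ to terminate as well, giving the termination-space inclusion. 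Then, composing with $\p_{T_\e}$ and using that $\e \le \f$ is preserved under post-composition (the difference remains completely positive), the same span argument as in part (2) applied to $\e \circ \p_{T_\e}$ and $\f \circ \p_{T_\e}$ gives the second condition of Theorem~\ref{thm:detproj}(2).

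The more delicate half is exhibiting counterexamples to the converses. The key structural fact is that the span of Kraus operators and the termination space are insensitive to the \emph{magnitudes} of the operators, whereas the complete-positivity order is sensitive to them. So I would look for two programs $\e, \f$ with the \emph{same} Kraus span (hence $\e \aprefine \f$ and $\e \atrefine \f$ hold vacuously or symmetrically) but which are incomparable under $\le$. The simplest candidates are scalar multiples of a single unitary or a single-qubit contraction: for instance take $\e(\rho) = \tfrac{1}{2}\rho$ and $\f(\rho) = \tfrac{1}{4}\rho$ (or two differently-scaled projections). These share the identical one-dimensional Kraus span and identical termination space, so $\e \aprefine \f$ and $\e \atrefine \f$ both hold; but $\f - \e = -\tfrac14\id$ is not completely positive and neither is $\e - \f$, so neither $\e \dtrefine \f$ ($\e \le \f$) nor $\e \dprefine \f$ ($\f \le \e$) holds. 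This breaks both converses simultaneously.

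The main obstacle will be ensuring the counterexample genuinely separates the orders in the intended direction for \emph{each} clause, since $\atrefine$ and $\aprefine$ are one-directional while the span condition is symmetric under scaling. I expect to choose the scalars so that the projector-based refinement holds (trivially, from equal spans and equal termination spaces) in exactly the direction claimed, while verifying by direct computation of $\f - \e$ and $\e - \f$ that neither is completely positive via the Choi-matrix positivity criterion (Lemma~\ref{lem:choi}(2)). A secondary point to handle carefully is that for total correctness the termination spaces must coincide for the span condition to apply cleanly; choosing both $\e$ and $\f$ to be strict contractions of the same operator keeps $T_\e = T_\f$ trivial (both equal $\{0\}$ unless the scalar is $1$), so I may instead use scaled \emph{trace-preserving} pieces or a mixture ensuring the termination spaces agree, which I would verify explicitly.
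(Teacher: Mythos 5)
Your overall route matches the paper's---derive the implications from Theorems~\ref{thm:det} and~\ref{thm:detproj} and refute the converses with scalar multiples of $\id$---but as written it contains two genuine errors, one in each half. In the forward direction of clause (1), your span argument points the wrong way: from $\e \le \f$ you get $\f = \e + \g$ with $\g$ completely positive, hence $\e \circ \p_{T_\e} \le \f \circ \p_{T_\e}$, and the ``same span argument as in part (2)'' then yields $\spann(\e \circ \p_{T_\e}) \le \spann(\f \circ \p_{T_\e})$, whereas Theorem~\ref{thm:detproj}(2) requires the \emph{reverse} inclusion $\spann(\f \circ \p_{T_\e}) \le \spann(\e \circ \p_{T_\e})$. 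The missing observation is that $\g$ vanishes on $T_\e$: for any $\rho$ supported in $T_\e$ we have $\tr(\e(\rho)) = \tr(\rho)$ and $\tr(\f(\rho)) \le \tr(\rho)$, forcing $\tr(\g(\rho)) \le 0$ and hence $\g(\rho) = 0$; therefore $\f \circ \p_{T_\e} = \e \circ \p_{T_\e}$ and the two spans are in fact equal. (Alternatively, both forward implications need no computation at all: projector specifications are a subset of effect specifications, and the satisfaction relations coincide on that subset, so refinement with respect to all effect specifications trivially entails refinement with respect to all projector specifications.)

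Your counterexample is also only half correct. With $\e = \frac{1}{2}\id$ and $\f = \frac{1}{4}\id$, the claim that $\e - \f$ is not completely positive is false: $\e - \f = \frac{1}{4}\id$ is CP (single Kraus operator $\frac{1}{2}I$), so $\f \le \e$ and hence $\e \dprefine \f$ \emph{does} hold; this ordered pair refutes the converse of clause (1) only. Moreover, your stated plan of choosing scalars so that \emph{neither} difference is CP cannot succeed: any CPTN super-operator whose Kraus span equals $\spann\{I\}$ is of the form $p\,\id$, so any two members of this family are comparable under $\le$. The repair is exactly what the paper does (with $\frac{1}{2}\id$ and $\frac{1}{3}\id$): use one ordering for clause (1), where $\e \atrefine \f$ holds by equality of spans and termination spaces but $\f - \e$ is not CP, so $\e \dtrefine \f$ fails; and swap the roles for clause (2), where $\f \aprefine \e$ holds but $\f \dprefine \e$ would require $\e \le \f$, which fails. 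With that reordering and the corrected span step, your argument goes through.
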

	
	\begin{proof}
		Direct from Theorems~\ref{thm:det} and~\ref{thm:detproj}. To be more explicit, we present counterexamples for the reverse parts of both clauses. Consider $\e = \frac{1}{2}\id$ and $\f = \frac{1}{3}\id$, where $\id$ is the identity super-operator on $\h_V$. Then it can be easily checked that both $\e\equiv_T^p \f$ and $\e\equiv_P^p \f$ but neither $\e \dtrefine \f$ nor  $\f \dprefine \e$. Actually, if ${\e} = p\id$ with $0<p<1$, then $\e\equiv^p_T 0$ and $\e\equiv^p_P \id$.
	\end{proof}
	
	\subsection{Properties of Projector Predicate Transformers}
	
	For completeness, we collect several important properties of the projector predicate transformers that may be useful in practical reasoning about quantum programs.
	
	\begin{theorem}\label{thm:galois}
		For any $\e \in \dprog(V)$ and projectors $P, Q \in \s(\h_W)$,
		\begin{enumerate}
			\item $sp^p.\e.(wlp^p.\e.Q) \le Q$ and $P \le wlp^p.\e.(sp^p.\e.P)$;
			\item $wp^p.\e.Q = wlp^p.\e.Q \wedge wp^p.\e.I_W$;
			\item $sp^p.\e.(wp^p.\e.Q) \le Q$, and if $P \le wp^p.\e.I_W$, then $P \le wp^p.\e.(sp^p.\e.P)$.
		\end{enumerate}
	\end{theorem}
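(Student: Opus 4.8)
The plan is to derive all three clauses from the extremal characterizations in Definition~\ref{def:wlp}, treating them as formal consequences of a Galois connection between $sp^p$ and $wlp^p$ together with a decomposition of total correctness into partial correctness plus termination; the explicit formulas of Lemma~\ref{lem:wpwlpsp} are not needed.

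First I would establish the adjunction. Directly from the definition of $\apsat$, the relation $\e \apsat (P, R)$ is antitone in $P$ and monotone in $R$. Combining this with the defining properties that $wlp^p.\e.R$ is the \emph{largest} precondition and $sp^p.\e.P$ the \emph{smallest} postcondition making $\apsat$ hold, I obtain, for all projectors $P, R \in \s(\h_W)$,
\[
sp^p.\e.P \le R \quad\text{iff}\quad \e \apsat (P, R) \quad\text{iff}\quad P \le wlp^p.\e.R.
\]
This is a Galois connection with lower adjoint $sp^p.\e.(-)$ and upper adjoint $wlp^p.\e.(-)$; in particular both maps are monotone. Clause (1) is then exactly the unit and counit of this adjunction: instantiating $R := sp^p.\e.P$ and using reflexivity gives $P \le wlp^p.\e.(sp^p.\e.P)$, while instantiating $P := wlp^p.\e.Q$ gives $sp^p.\e.(wlp^p.\e.Q) \le Q$.

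For clause (2) the engine is the decomposition: $\e \atsat (P, Q)$ iff $\e \apsat (P, Q)$ and $\e \atsat (P, I_W)$. This is immediate from the qualitative reformulations in Section~\ref{subsec:detproj}, since total correctness is partial correctness plus the termination clause $\tr(\e(\rho)) = \tr(\rho)$, and $\e \atsat (P, I_W)$ isolates exactly that clause because $\e(\rho) \models I_W$ holds vacuously. Translating through the extremal characterizations $\e \atsat (P, Q)$ iff $P \le wp^p.\e.Q$, $\e \apsat (P, Q)$ iff $P \le wlp^p.\e.Q$, and $\e \atsat (P, I_W)$ iff $P \le wp^p.\e.I_W$, the decomposition becomes: for every projector $P$, $P \le wp^p.\e.Q$ iff $P \le wlp^p.\e.Q \wedge wp^p.\e.I_W$. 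Reading off the greatest such $P$ on each side yields $wp^p.\e.Q = wlp^p.\e.Q \wedge wp^p.\e.I_W$.

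Clause (3) then follows formally from (1) and (2). By (2) we have $wp^p.\e.Q \le wlp^p.\e.Q$, so monotonicity of $sp^p$ together with the counit of (1) gives $sp^p.\e.(wp^p.\e.Q) \le sp^p.\e.(wlp^p.\e.Q) \le Q$. For the conditional statement, I would apply (2) with postcondition $sp^p.\e.P$ to get $wp^p.\e.(sp^p.\e.P) = wlp^p.\e.(sp^p.\e.P) \wedge wp^p.\e.I_W$; the unit of (1) supplies $P \le wlp^p.\e.(sp^p.\e.P)$ and the hypothesis supplies $P \le wp^p.\e.I_W$, so $P$ lies below the meet. The only steps requiring genuine care are the adjunction displayed above and the total-correctness decomposition underlying (2); once these are in place, the three clauses reduce to routine unit/counit and lattice manipulations, so I expect no serious obstacle.
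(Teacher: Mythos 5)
Your proposal is correct, and for the central clause (2) it takes a genuinely different route from the paper. For clause (1) you and the paper do the same thing: both rest on the Galois connection $sp^p.\e.P \le R$ iff $\e \apsat (P,R)$ iff $P \le wlp^p.\e.R$ (you derive it carefully from the extremal properties in Definition~\ref{def:wlp} plus monotonicity/antitonicity of $\apsat$, which the paper merely asserts), and clause (3) is in both cases the same formal consequence of (1) and (2). The difference is in (2): the paper proves it computationally, using the explicit formulas of Lemma~\ref{lem:wpwlpsp} and verifying the operator identity $E(\e^\dag(Q)) = \mathcal{N}(\e^\dag(Q^\bot)) \wedge E(\e^\dag(I_W))$ pointwise on normalized states via the chain $\tr(Q\e(\psi)) \le \tr(\e(\psi)) \le 1$; you instead prove it axiomatically, by splitting the universally quantified conjunction in the qualitative definition of $\atsat$ into the decomposition $\e \atsat (P,Q)$ iff $\e \apsat (P,Q)$ and $\e \atsat (P,I_W)$, and then reading the equality off through the extremal characterizations $P \le wp^p.\e.Q$ iff $\e \atsat(P,Q)$, etc. Both decompositions encode the same idea (total correctness equals partial correctness plus guaranteed termination), but your realization never touches the explicit operator formulas, so it is more portable: it works verbatim in any setting where the extremal transformers exist and the satisfaction relations admit this splitting, and in particular it lifts immediately to the nondeterministic transformers of Theorem~\ref{thm:nptransform}. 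What the paper's computational route buys in exchange is that it exercises the operator-level machinery ($E(\cdot)$, $\mathcal{N}(\cdot)$, adjoints) that is needed anyway for the refinement characterization in Theorem~\ref{thm:detproj}, and it yields the identity directly in the form in which it is later applied. One small point of care in your write-up: you state the iff characterizations for $\atsat$ (not just $\apsat$) without justifying antitonicity of $\atsat$ in the precondition; the argument is the same one-liner as for $\apsat$, so this is a presentational omission rather than a gap.
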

	
	\begin{proof}
		 (1) follows from the fact that $wlp^p.\e$ and $sp^p.\e$: $P \le wlp^p.\e.Q$ iff $\ sp^p.\e.P \le Q$. For part (2), we need to show $E(\e^\dagger(Q)) = \mathcal{N}(\e^\dagger(Q^\bot)) \wedge E(\e^\dagger(I_W))$. For any normalized state $|\psi\rangle \in \h_V$,
		\begin{align*}
			|\psi\rangle \in E(\e^\dagger(Q))
			&\quad\mbox{iff}\quad \langle\psi|\e^\dagger(Q)|\psi\rangle = 1 \qquad \qquad \text{(definition of $E(\cdot)$)}\\
			&\quad\mbox{iff}\quad \tr(Q\e(\psi)) = 1 \qquad \qquad \text{(definition of $\e^\dag$, where $\psi=|\psi\rangle\langle\psi|$)}\\
			&\quad\mbox{iff}\quad \tr(\e(\psi)) = 1 \text{ and } \tr(Q\e(\psi)) = \tr(\e(\psi))\\
			&\hspace{14em} \text{(since $\tr(Q\e(\psi)) \leq \tr(\e(\psi)) \leq 1$)}\\
			&\quad\mbox{iff}\quad \langle\psi|\e^\dagger(I_W)|\psi\rangle = 1  \text{ and } \tr(Q^\bot\e(\psi)) = 0\qquad \text{(definitions of $\e^\dag$ and $Q^\bot$)}\\
			&\quad\mbox{iff}\quad |\psi\rangle \in E(\e^\dagger(I_W)) \text{ and } |\psi\rangle \in \mathcal{N}(\e^\dagger(Q^\bot))\quad  \text{(definitions of $E(\cdot)$ and $\mathcal{N}(\cdot)$)}\\
			&\quad\mbox{iff}\quad |\psi\rangle \in E(\e^\dagger(I_W)) \wedge \mathcal{N}(\e^\dagger(Q^\bot)) \qquad \text{(definitions of $\wedge$)}.
		\end{align*}
		The first part of (3) follows from (2) and (1). For the second part, we first know from (1) that $P\le wlp^p.\e.(sp^p.\e.P)$. Then $P\le wp^p.\e.(sp^p.\e.P)$ follows from the assumption that $P\le wp^p.\e.I_W$ and (2) by taking $Q = sp^p.\e.P$.
	\end{proof}
	
	The pair $(wlp^p.\e, sp^p.\e)$ thus forms a Galois connection between the lattice $\s(\h_W)$ and itself, providing an adjunction relationship between forward and backward reasoning about quantum programs.
	
	\begin{theorem}\label{thm:ptransformdet}
		For any $\e \in \dprog(V)$, projectors $P, Q \in \s(\h_W)$, and $xp^p \in \{wp^p, wlp^p\}$,
		\begin{enumerate}
			\item $wp^p.\e.0 = sp^p.\e.0 = 0$ and $wlp^p.\e.I_W = I_W$;
			\item (Monotonicity) If $P \le Q$, then $xp^p.\e.P \le xp^p.\e.Q$ and $sp^p.\e.P \le sp^p.\e.Q$;
			\item  $xp^p.\e.(P \wedge Q) = xp^p.\e.P \wedge xp^p.\e.Q$;
			\item  $sp^p.\e.(P \wedge Q) \le sp^p.\e.P \wedge sp^p.\e.Q$;
			\item $xp^p.\e.P \vee xp^p.\e.Q \le xp^p.\e.(P \vee Q)$;
			\item $sp^p.\e.P \vee sp^p.\e.Q = sp^p.\e.(P \vee Q)$;
			\item (Frame axiom) If $R \in \s(\h_{W'})$ with $W' \cap (V \cup W) = \emptyset$, then
			\begin{align*}
				wp^p.\e.(Q \otimes R) &= (wp^p.\e.Q) \otimes R,\\
				sp^p.\e.(P \otimes R) &= (sp^p.\e.P) \otimes R,\\
				wlp^p.\e.(Q \otimes R) &\ge (wlp^p.\e.Q) \otimes R.
			\end{align*}
		\end{enumerate}
	\end{theorem}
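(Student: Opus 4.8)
The plan is to work entirely from the explicit formulas established in Lemma~\ref{lem:wpwlpsp}, namely $wp^p.\e.Q = E(\e^\dagger(Q))$, $wlp^p.\e.Q = \mathcal{N}(\e^\dagger(Q^\bot))$, and $sp^p.\e.Q = \supp{\e(Q)}$, reducing every clause to elementary order-theoretic properties of the three building blocks $E(\cdot)$, $\mathcal{N}(\cdot)$, and $\supp{\cdot}$. First I would record monotonicity: for effects $A \le B$ one has $E(A) \le E(B)$ (exactly as in Lemma~\ref{lem:EN}, since $\langle\psi|A|\psi\rangle = 1$ forces $\langle\psi|B|\psi\rangle = 1$), while $\mathcal{N}$ is antitone and $\supp{\cdot}$ is monotone on positive operators. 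I would also record the two additivity facts for positive $A, B$: $\mathcal{N}(A + B) = \mathcal{N}(A) \wedge \mathcal{N}(B)$ and $\supp{A + B} = \supp{A} \vee \supp{B}$, both immediate from positivity. Since $\e$ and $\e^\dagger$ are completely positive, hence L\"owner-monotone, composing these facts yields monotonicity of all three transformers, which is clause~(2).

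Clauses~(1), (4), and~(5) then fall out quickly: (1) by direct substitution ($\e^\dagger(0) = 0$, $\e(0)=0$, $(I_W)^\bot = 0$); (4) and~(5) by applying clause~(2) to $P \wedge Q \le P, Q$ and $P, Q \le P \vee Q$ and taking the join or meet of the two consequences. For clause~(3) I would unfold the defining condition: a normalized $|\psi\rangle$ lies in $wp^p.\e.(P\wedge Q)$ iff $\tr((P\wedge Q)\e(\psi)) = 1$, which---because $\tr(X\e(\psi))\le\tr(\e(\psi))\le 1$ for $X \le I$---is equivalent to $\e$ terminating on $|\psi\rangle$ with $\supp{\e(\psi)} \le P \wedge Q$, i.e.\ $\supp{\e(\psi)}\le P$ and $\supp{\e(\psi)}\le Q$ simultaneously; this is exactly membership in $wp^p.\e.P \wedge wp^p.\e.Q$. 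The $wlp^p$ case is the same computation with the termination clause dropped, using $\mathcal{N}(\e^\dagger(P^\bot)) \wedge \mathcal{N}(\e^\dagger(Q^\bot)) = \{|\psi\rangle : \supp{\e(\psi)} \le P \wedge Q\}$ via the additivity of $\mathcal{N}$ together with De Morgan $(P\wedge Q)^\bot = P^\bot \vee Q^\bot$.

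The substantive work is in clause~(6) and the $wlp^p$ part of clause~(7). For~(6), the inequality $\le$ is clause~(2), so the content is $sp^p.\e.(P\vee Q)\le sp^p.\e.P\vee sp^p.\e.Q$. Here I would fix Kraus operators $\{E_i\}$ of $\e$ and use the identity $\supp{\e(R)} = \spann\{E_i|\phi\rangle : i,\ |\phi\rangle\in R\}$, which follows from writing $E_i R E_i^\dagger = (E_iR)(E_iR)^\dagger$ and applying additivity of $\supp{\cdot}$ over the sum on $i$. Since every vector of $P \vee Q$ is a sum of a vector of $P$ and a vector of $Q$, linearity of each $E_i$ immediately places $\spann\{E_i|\phi\rangle : |\phi\rangle\in P\vee Q\}$ inside $sp^p.\e.P \vee sp^p.\e.Q$, giving the claim.

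For the frame axioms~(7), I would use that the extension of $\e$ acts as the identity on $\h_{W'}$, so $\e^\dagger(Q\otimes R) = \e^\dagger(Q)\otimes R$ and $\e(P\otimes R) = \e(P)\otimes R$; then $wp^p$ and $sp^p$ split because $E(A\otimes R) = E(A)\otimes R$ and $\supp{B\otimes R} = \supp{B}\otimes R$ for a projector $R$. The $wlp^p$ inequality is the one genuine obstacle: expanding $(Q\otimes R)^\bot = Q^\bot\otimes I_{W'} + Q\otimes R^\bot$ and using $\mathcal{N}(A+B) = \mathcal{N}(A)\wedge\mathcal{N}(B)$ gives $wlp^p.\e.(Q\otimes R) = \mathcal{N}(\e^\dagger(Q^\bot)\otimes I_{W'}) \wedge \mathcal{N}(\e^\dagger(Q)\otimes R^\bot)$. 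I would then show every $|\psi\rangle \in (wlp^p.\e.Q)\otimes R$ lies in both factors: the first because $(wlp^p.\e.Q)\otimes R \le \mathcal{N}(\e^\dagger(Q^\bot))\otimes\h_{W'}$, the second because $(I\otimes R^\bot)|\psi\rangle = 0$ annihilates the quadratic form. This yields only the inequality $\ge$, precisely as stated, since entangled states across $W$ and $W'$ can satisfy the right-hand null-space condition without lying in the tensor product, so equality genuinely fails.
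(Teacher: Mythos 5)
Your proposal is correct. For clauses (1)--(5) and (7) it follows essentially the paper's route: everything is reduced, via the explicit formulas of Lemma~\ref{lem:wpwlpsp}, to order-theoretic properties of $E(\cdot)$, $\mathcal{N}(\cdot)$ and $\supp{\cdot}$ (monotonicity for (2), (4), (5); the trace-saturation argument $\tr((P\wedge Q)\e(\psi))=1$ iff $\tr(\e(\psi))=1$ and $\supp{\e(\psi)}\le P\wedge Q$ for (3); the tensor-factorization identities $E(A\otimes R)=E(A)\otimes R$, $\supp{B\otimes R}=\supp{B}\otimes R$ and the decomposition $(Q\otimes R)^\bot = Q^\bot\otimes I_{W'}+Q\otimes R^\bot$ for (7), where your two-factor null-space argument is the same computation the paper carries out directly on product vectors). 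The genuine divergence is clause (6): the paper proves $\supp{\e(P\vee Q)}\le\supp{\e(P)}\vee\supp{\e(Q)}$ by duality --- take $|\psi\rangle$ orthogonal to the right-hand side, deduce $P,Q\le\mathcal{N}(\e^\dag(\psi))$, hence $P\vee Q\le\mathcal{N}(\e^\dag(\psi))$, hence $|\psi\rangle$ is orthogonal to the left-hand side --- whereas you fix a Kraus representation and use $\supp{\e(R)}=\spann\{E_i|\phi\rangle: i\in I, |\phi\rangle\in R\}$ together with the finite-dimensional fact that every vector of $P\vee Q$ splits as a sum of a vector in $P$ and one in $Q$. Both are sound; the paper's argument never touches Kraus operators, while yours makes the geometric content explicit and reuses exactly the span machinery the paper itself deploys in Theorem~\ref{thm:detproj}, so it integrates well. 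Two cosmetic remarks: in the $wlp^p$ half of (3) you cite ``additivity of $\mathcal{N}$'' across the De Morgan dual $(P\wedge Q)^\bot=P^\bot\vee Q^\bot$, but additivity applies to sums rather than joins; this is harmless because for positive $\rho$ one has $\tr((P^\bot\vee Q^\bot)\rho)=0$ iff $\tr((P^\bot+Q^\bot)\rho)=0$, and your support-based version of the identity is already a complete argument on its own. Finally, you treat the $wlp^p$ case of (3) explicitly, which the paper leaves implicit --- a small but real improvement in completeness.
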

		\begin{proof}
				(1) is easy to check; (2), (4), and (5) are from the fact that $E(\cdot)$ is monotonic with respect to $\le$. For (3),
	it suffices to prove that
	\[
	E(\e^\dag(P)) \wedge E(\e^\dag(Q)) \le E(\e^\dag(P\wedge Q)).
	\]
	Given any normalized state $|\psi\>$ in the left-hand side subspace, we have
	\[
	\<\psi|\e^\dag(P)|\psi\> = \<\psi|\e^\dag(Q)|\psi\> = 1,
	\]
	which implies $
	\tr (P\e(\psi))= 	\tr (Q\e(\psi))=1$ where $\psi = |\psi\>\<\psi|$. Thus the state $\e(\psi)$ lies in both $P$ and $Q$ and $\tr (\e(\psi)) = 1$. Consequently, 
	$\tr ((P\wedge Q)\e(\psi)) =1$, and so $|\psi\>$ is in the right-hand side subspace as well.
	
	For (6),
	it suffices to prove that
	\[
	\supp{\e(P\vee Q)} \le \supp{\e(P)} \vee \supp{\e(Q)}.
	\]
	Given any normalized state $|\psi\>$ orthogonal to the right-hand side subspace, we have
	$
	\<\psi|\e(P)|\psi\> = \<\psi|\e(Q)|\psi\> = 0,
	$
	which implies $
	P\le \mathcal{N}(\e^\dag(\psi))$ and $
	Q\le \mathcal{N}(\e^\dag(\psi))$. Thus $
	P\vee Q\le \mathcal{N}(\e^\dag(\psi))$, or equivalently, $\tr((P\vee Q) \e^\dag(\psi)) = 0$. Hence $
	\<\psi|\e(P\vee Q)|\psi\>  = 0
	$, and so
	$|\psi\>$ is orthogonal to the left-hand side subspace as well.
	
	For (7), note that when $W'\cap (V\cup W) = \emptyset$, 
	\begin{align*}
		E(\e^\dag(P\otimes R)) &= E(\e^\dag(P)\otimes R) = E(\e^\dag(P))\otimes R\\
		\supp{\e(P\otimes R)} &= \supp{\e(P)\otimes R} = \supp{\e(P)}\otimes R.
	\end{align*}
	Thus the first two equalities hold trivially. For the last one, it suffices to prove that $$\mathcal{N}\left(\e^\dag\left[(Q\otimes R)^\bot\right]\right)\ge \mathcal{N}\left(\e^\dag(Q^\bot)\right)\otimes R.$$
	For any $|\phi\>\in R$ and $|\psi\>$ in the null space of $\e^\dag(Q^\bot)$, we compute that
	\begin{align*}
		\tr\left(\e^\dag\left[(Q\otimes R)^\bot\right] \left(\psi\otimes \phi\right)\right) & = \tr\left((Q\otimes R)^\bot \left[\e(\psi)\otimes \phi\right]\right)\qquad \text{(definition of $\e^\dag$)}\\
		& = \tr\left(\left(Q^\bot + Q\otimes R^\bot\right) \left[\e(\psi)\otimes \phi\right]\right) \ \  \text{(since $(Q\otimes R)^\bot=\left(Q^\bot + Q\otimes R^\bot\right)$)}\\
		& = \tr\left(\e^\dag(Q^\bot) \psi\right)+ \tr(Q\e(\psi))\tr \left(R^\bot \phi\right)\quad \text{(properties of $\tr(\cdot)$)}\\
		&=0 \qquad \text{(assumption that $|\psi\>\in \mathcal{N}(\e^\dag(Q^\bot))$ and $|\phi\>\in R$)}.
	\end{align*}
	That completes the proof.
	\end{proof}
	
	Finally, we note that the inequalities in clauses (4), (5), and (7) of the above theorem generally hold strictly.
	
	For (4), let \( \e = \text{Set}_q^0 \), which sets the target qubit \( q \) to the state \( |0\rangle \). Specifically, \( \text{Set}_q^0(\rho) = |0\rangle_q \langle 0|\rho |0\rangle_q \langle 0| + |1\rangle_q \langle 1|\rho |0\rangle_q \langle 0| \). Let \( P = |0\rangle_q \langle 0| \) and \( Q = |+\rangle_q \langle +| \). Then, \( sp^p.\e.(P \wedge Q) = sp^p.\e.0 = 0 \), but \( sp^p.\e.P = sp^p.\e.Q = |0\rangle_q \langle 0| \).
	
	For (5), let \( \e = \text{Set}_q^0 \), \( P = |+\rangle_q \langle +| \), and \( Q = |-\rangle_q \langle -| \). Thus, \( wp^p.\e.P = wp^p.\e.Q = 0 \). However, \( wp^p.\e.(P \vee Q) = wp^p.\e.I = I \). Since \( \e \) is trace-preserving, the same example applies to \( wlp^p.\e \).
	
	For (7), let \( \e = 0 \), the zero super-operator, \( R = 0 \), the zero operator, and \( P \) be arbitrary. Then, \( wlp^p.\e.(P \otimes R) = wlp^p.\e.0 = I \), but \( (wlp^p.\e.P) \otimes R = 0 \).

	\section{Refinement for Nondeterministic Quantum Programs}
	\label{sec:nondeterministic}
	
	Nondeterminism is a fundamental feature of realistic programs, arising from underspecification, abstraction, or genuine randomness in program behavior. It is also useful in the development of refinement calculus, since the semantics of a prescription is typically described by the set of executable programs that satisfy the prescription~\cite{peduri2025qbc,feng2023refinement}. This section extends our refinement analysis from deterministic to nondeterministic quantum programs. We establish that refinement orders for nondeterministic programs, when based on sets of effects, correspond precisely to the classical Hoare and Smyth orders from domain theory. We then investigate how these refinement orders degrade when we restrict to simpler predicate classes (individual effects or projectors), showing that each restriction yields strictly weaker refinement notions.
	
	\subsection{Semantic Model for Nondeterminism}
	
	Following established practice in program semantics~\cite{Morgan:1996,jifeng1997probabilistic,mciver2001partial}, we model nondeterministic quantum programs as sets of deterministic programs. However, not all sets are appropriate; we require additional structure to ensure mathematical well-behavedness. To be specific, for a finite set $V \subseteq \QVar$ of quantum variables, we define
	\[
	\nprog(V) = \{\E \subseteq \dprog(V) : \E \text{ is nonempty, convex, and closed}\}
	\]
	as the semantic space of nondeterministic quantum programs on system $V$.
	Each condition serves a specific purpose:
	\begin{itemize}
		\item \emph{Nonempty}. We exclude $\emptyset$ to avoid degeneracies in our definitions and keep the presentation clean. An empty set of behaviors has no clear operational meaning.
		
		\item \emph{Convex}. If super-operators $\e$ and $\f$ are both possible behaviors, then any probabilistic mixture $p\e + (1-p)\f$ (for $0 \leq p \leq 1$) should also be a possible behavior. Convexity captures the idea that probabilistic composition of nondeterministic choices remains a valid nondeterministic choice. This is standard in probabilistic program semantics~\cite{Morgan:1996,jifeng1997probabilistic,mciver2001partial}.
		
		\item \emph{Closed}. Since $\dprog(V)$ is finite-dimensional as $V$ is a finite set, closedness here is equivalent to being closed in the Euclidean topology. This requirement ensures that limits of sequences in $\E$ remain in $\E$, providing mathematical robustness for analysis and approximation arguments.
	\end{itemize}
	
	Note that unlike some classical approaches, we do \emph{not} require up-closedness, as we wish to distinguish between total and partial correctness perspectives.
	
	\subsection{Refinement Under Set-of-Effects Specifications}
	
	We begin with the richest class of predicates—sets of effects—which, as we shall demonstrate, yields the cleanest characterization of refinement for nondeterministic programs.
	Following~\cite{feng2023verification}, for a finite set $W \subseteq \QVar$, define
	\[
	\nspec(W) = \left\{(\qassert, \qassertp) : \qassert, \qassertp \in 2^{\p(\h_W)}\right\}
	\]
	as the set of specifications for nondeterministic quantum programs on system $W$. Here $\qassert$ represents the precondition and $\qassertp$ the postcondition, both as sets of effects.
	
	\subsubsection{Correctness Definitions and Refinement Orders}
	
	 Similarly to the deterministic case, there are two different notions of satisfaction of a nondeterministic quantum program $\E$ in $\nprog(V)$ on a specification $(\qassert,\qassertp)$ in $\nspec(W)$, one for total correctness, and the other for partial correctness. We say  
	\begin{enumerate}
		\item $\E$ satisfies $(\qassert,\qassertp)$ in the sense of total correctness, denoted $\E\ntsat (\qassert,\qassertp)$, if for any $\rho\in \d(\h_X)$ with $V\cup W\subseteq X$,
		$$	\Exp_{\mathsf{dem}}(\qstate \models \qassert)\leq \inf \left\{	\Exp_{\mathsf{dem}}(\sigma \models \qassertp) : \sigma\in \E(\qstate) \right\}.$$
			This satisfaction relation can be equivalently expressed using predicate transformers. Extending the adjoint operation pointwise to sets of super-operators
		\[
		\E^\dagger(\qassertp) = \left\{\e^\dagger(N) : \e \in \E, N \in \qassertp\right\},
		\]
		we have $\E \ntsat (\qassert, \qassertp)$ iff $\ \qassert \leinf \E^\dagger(\qassertp)$. Thus, the \emph{weakest precondition} $wp^s.\E.\qassertp$ of $\E$ with respect to $\qassertp$ exists and equals $\E^\dagger(\qassertp).
		$
		\item $\E$ satisfies $(\qassert,\qassertp)$ in the sense of partial correctness, denoted $\E\npsat (\qassert,\qassertp)$, if for any $\rho\in \d(\h_X)$ with $V\cup W\subseteq X$,
		$$		\Exp_{\mathsf{dem}}(\qstate \models \qassert)\leq \inf \left\{	\Exp_{\mathsf{dem}}(\sigma \models \qassertp) + \tr(\rho) - \tr(\sigma): \sigma\in \E(\qstate) \right\}.$$
			The additional term $\tr(\rho) - \tr(\sigma)$ accounts for non-termination (trace loss), as in the deterministic case. Equivalently, $\E \npsat (\qassert, \qassertp)$ iff $\qassert \leinf I - \E^\dagger(I - \qassertp)$ iff $\E^\dagger(I - \qassertp) \lesup I - \qassert$. Thus the weakest liberal precondition $wlp^s.\E.\qassertp$ of $\E$ with respect to $\qassertp$ exists and equals $ I-\E^\dag(I-\qassertp)$.
	\end{enumerate}
	For the purpose of refinement, nondeterminism in a program is interpreted in a demonic way. Thus we use $\Exp_{\mathsf{dem}}$ instead of $\Exp_{\mathsf{ang}}$ in both total and partial correctness definitions.
	
	As in the deterministic case, we define refinement based on preservation of satisfied specifications. To be specific, we write $\E\ntrefine \F$ if for any $(\qassert,\qassertp)\in \nspec(W)$, $\E\ntsat (\qassert,\qassertp)$ implies $\F\ntsat (\qassert,\qassertp)$. In contrast, $\E\nprefine \F$ if for any $\qassert$ and $\qassertp$, $\E\npsat (\qassert,\qassertp)$ implies $\F\npsat (\qassert,\qassertp)$. 
	
	\subsubsection{Main Characterization: Connection to Domain Theory}
	
	Our central result for nondeterministic programs establishes a precise correspondence between refinement orders and the Hoare and Smyth orders from classical domain theory. 
	
	\begin{theorem}[Set-of-Effects Refinement Characterization]\label{thm:nondet}
		For any $\E, \F \in \nprog(V)$,
		\begin{enumerate}
			\item $\E \ntrefine \F$ iff $\ \E \leq_S \F$ iff $\upcl \F \subseteq \upcl \E$;
			\item $\E \nprefine \F$ iff $\ \F \leq_H \E$ iff $\downcl \F \subseteq \downcl \E$;
			\item $\E \leq_{EM} \F$ iff both $\E \ntrefine \F$ and $\F \nprefine \E$;
			\item $\E \equiv_T^s \F$ iff $\upcl \E = \upcl \F$, and $\E \equiv_P^s \F$ iff $\downcl \E = \downcl \F$;
			\item If $\ \E$ and $\F$ contain only trace-preserving super-operators, then 
			\[
			\E \ntrefine \F\quad \mbox{iff}\quad \E \nprefine \F \quad \mbox{iff}\quad \F \subseteq \E.
			\]
		\end{enumerate}
	\end{theorem}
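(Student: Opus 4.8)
The plan is to prove (1) and (2) directly and to read off (3)--(5) as formal consequences. I treat (1); part (2) is dual. First I would reduce refinement to a comparison of weakest preconditions. Because $\leinf$ is reflexive, instantiating the precondition with $wp^s.\E.\qassertp=\E^\dagger(\qassertp)$ shows $\E\ntrefine\F$ iff $\E^\dagger(\qassertp)\leinf\F^\dagger(\qassertp)$ for every $\qassertp$; and since $\E^\dagger(\qassertp)=\bigcup_{N\in\qassertp}\E^\dagger(\{N\})$ distributes through the infimum defining $\leinf$, it suffices to treat singletons $\qassertp=\{N\}$. Evaluating the resulting inequality on the maximally entangled state $\rho=\Omega_{V,V'}$ with $N$ ranging over $\p(\h_{V\cup V'})$, and using $\tr(N\e(\Omega))=\tr(NJ(\e))$, I arrive at the compact condition
\[
(\star)\qquad \forall N\in\p(\h_{V\cup V'}):\ \inf_{\e\in\E}\tr(NJ(\e))\le\inf_{\f\in\F}\tr(NJ(\f)).
\]

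The crux is that $(\star)$ is equivalent to $\E\leq_S\F$. The easy direction is monotonicity: if every $\f\in\F$ admits some $\e\in\E$ with $\e\le\f$, then $\f-\e$ is completely positive, so $\tr(N\e(\rho))\le\tr(N\f(\rho))$ for all effects $N$ and all $\rho$ in any context (as $\id\otimes(\f-\e)$ is positive); this yields $(\star)$ and, more importantly, the full inequality $\E^\dagger(\qassertp)\leinf\F^\dagger(\qassertp)$ needed for the $\Leftarrow$ direction of refinement. For the converse I pass to the Choi images $\tilde{\E}=\{J(\e):\e\in\E\}$ and $\tilde{\F}=\{J(\f):\f\in\F\}$, which are nonempty, convex, and compact because $\E,\F$ are nonempty, convex, closed and the Choi map is a linear homeomorphism onto a bounded set of operators. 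Fixing $\f\in\F$ and writing $\tilde f=J(\f)$, condition $(\star)$ gives $\inf_{\e\in\E}\tr(NJ(\e))\le\tr(N\tilde f)$ for every $N$, i.e.\ $\max_{\tilde e\in\tilde{\E}}g(N,\tilde e)\ge 0$ for $g(N,\tilde e)=\tr(N(\tilde f-\tilde e))$, whence $\min_N\max_{\tilde e}g\ge 0$. Applying the Sion minimax theorem (Theorem~\ref{thm:sm}) to $g$, which is linear in $N\in\p(\h_{V\cup V'})$ and linear in $\tilde e\in\tilde{\E}$, I obtain $\max_{\tilde e}\min_N g\ge 0$, i.e.\ some $\e\in\E$ satisfies $\tr(NJ(\e))\le\tr(NJ(\f))$ for all effects $N$. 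As effects positively span the positive cone, this is $J(\e)\le J(\f)$, hence $\e\le\f$ by Lemma~\ref{lem:choi}; since $\f$ was arbitrary, $\E\leq_S\F$. The second equivalence $\E\leq_S\F$ iff $\upcl\F\subseteq\upcl\E$ is immediate from the domain-theoretic characterization recalled in the preliminaries. Part (2) follows by the same argument with suprema, the weakest liberal precondition $wlp^s.\E.\qassertp=I-\E^\dagger(I-\qassertp)$, and down-closures, or more economically by combining part (1) with the duality of Lemma~\ref{lem:duality}.

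The remaining clauses are formal. Part (3) combines (1) and (2): $\E\ntrefine\F$ and $\F\nprefine\E$ are exactly $\E\leq_S\F$ and $\E\leq_H\F$, whose conjunction is $\E\leq_{EM}\F$ by definition. Part (4) is the kernel of each order: $\E\equiv_T^s\F$ unfolds to $\upcl\F\subseteq\upcl\E$ and $\upcl\E\subseteq\upcl\F$, i.e.\ $\upcl\E=\upcl\F$, and dually for $\downcl$ and $\equiv_P^s$. For part (5) I would observe that for trace-preserving $\e,\f$ the relation $\e\le\f$ collapses to $\e=\f$: the difference $\f-\e$ is completely positive and has vanishing output trace on every positive input, forcing the zero map. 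Thus on sets of trace-preserving super-operators the approximation order is discrete, so both $\E\leq_S\F$ and $\F\leq_H\E$ reduce to $\F\subseteq\E$, giving $\E\ntrefine\F$ iff $\E\nprefine\F$ iff $\F\subseteq\E$.

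The main obstacle is the converse of the core equivalence in (1): condition $(\star)$ only guarantees, for each test effect $N$ separately, some $\e\in\E$ dominated by $\f$, whereas $\E\leq_S\F$ demands a single $\e$ below $\f$ in the complete-positivity order uniformly in $N$. This quantifier exchange is precisely where convexity and closedness of $\E$ are indispensable and where Sion's minimax---transported to operators through the Choi isomorphism---does the real work; the rest is bookkeeping.
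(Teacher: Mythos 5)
Your proof is correct, and it rests on the same pillars as the paper's: reduce refinement to the weakest-precondition comparison $\E^\dagger(\qassertp)\leinf\F^\dagger(\qassertp)$, exchange the problematic quantifiers via the Sion minimax theorem, and convert Choi-matrix domination into $\e\le\f$ via Lemma~\ref{lem:choi}; clauses (3)--(5) are handled identically in both. The genuine difference lies in how the minimax is deployed. The paper first applies Theorem~\ref{thm:equivalence} to pass from $\leinf$ to $\leq_S$ on the image sets $\E^\dagger(\qassertp)$, $\F^\dagger(\qassertp)$, and then performs a second quantifier exchange (from $\forall N\,\forall\f\,\exists\e$ to $\forall\f\,\exists\e\,\forall N$) that it only sketches, asserting it goes ``exactly as in Theorem~\ref{thm:equivalence}''. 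You instead reduce to singleton postconditions while still at the $\leinf$ level, evaluate on the maximally entangled state, and run a single minimax on $g(N,\tilde e)=\tr\bigl(N(J(\f)-J(\e))\bigr)$ over the convex compact sets $\p(\h_{V\cup V'})$ and $\{J(\e):\e\in\E\}$. This organization buys two things. First, it makes fully explicit the quantifier exchange that is the paper's tersest step: the scalar function to which Sion is applied is never written down there, and the naive choice involving both a test effect $N$ and a state $\sigma$ would be bilinear in the pair of min variables, falling outside the hypotheses of Theorem~\ref{thm:sm}; your Choi-based $g$ is the natural way to make that step rigorous. Second, it avoids invoking Theorem~\ref{thm:equivalence} on $\E^\dagger(\qassertp)$ for non-singleton $\qassertp$, a set that need not be convex even when $\E$ is. As a bonus, your intermediate condition $(\star)$ is essentially the single-formula characterization that the paper proves separately as Theorem~\ref{thm:nsingleformula}.
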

	
	\begin{proof}
		Let $\mathcal{A} = \bigcup_{W \subseteq \QVar} \p(\h_W)$ denote the collection of all effects over all possible finite subsystems.
		
		For (1), we establish a chain of equivalences connecting refinement to the Smyth order:
		\begin{align*}
			\E \ntrefine \F
			&\quad\mbox{iff}\quad \forall \qassert, \qassertp \in \mathcal{A}: \qassert \leinf \E^\dagger(\qassertp) \Rightarrow \qassert \leinf \F^\dagger(\qassertp)
			\quad \text{(since $wp^s.\E.\qassertp = \E^\dag(\qassertp)$)}\\
			&\quad\mbox{iff}\quad \forall \qassertp \subseteq \mathcal{A}: \E^\dagger(\qassertp) \leinf \F^\dagger(\qassertp)
			\quad \text{(taking $\qassert= \E^\dag(\qassertp)$)}\\
			&\quad\mbox{iff}\quad \forall \qassertp \subseteq \mathcal{A}: \E^\dagger(\qassertp) \leq_S \F^\dagger(\qassertp)
			\quad \text{(Theorem~\ref{thm:equivalence})}\\
			&\quad\mbox{iff}\quad \forall N \in \mathcal{A}: \E^\dagger(N) \leq_S \F^\dagger(N)
			\quad \text{(taking $N\in \qassertp$ arbitrarily)}\\
			&\quad\mbox{iff}\quad \forall N \in \mathcal{A}, \forall \f \in \F, \exists \e \in \E: \e^\dagger(N) \le \f^\dagger(N)
			\quad \text{(definition of $\leq_S$)}\\
			&\quad\mbox{iff}\quad \forall \f \in \F, \exists \e \in \E, \forall N \in \mathcal{A}: \e^\dagger(N) \le \f^\dagger(N)
			\quad \text{(key step; see below)}\\
			&\quad\mbox{iff}\quad \forall \f \in \F, \exists \e \in \E: \e \le \f
			\quad \text{(Lemma~\ref{lem:choi})}\\
			&\quad\mbox{iff}\quad \E \leq_S \F
			\quad \text{(definition of Smyth order)}.
		\end{align*}

		The key step above requires justification. The `if' part is immediate. For the reverse, we use the minimax theorem (Theorem~\ref{thm:sm}). Since $\E$ and $\F$ are convex and closed, and the sets $\d(\h_{V \cup V'})$, $\E^\dagger(N)$, and $\F^\dagger(N)$ are all convex and compact, we can apply the Sion minimax theorem exactly as in Theorem~\ref{thm:equivalence} to exchange the order of quantifiers over predicates and program behaviors.
		
		The equivalence with $\upcl \F \subseteq \upcl \E$ follows from the standard characterization of the Smyth order via up-closures (Section~\ref{sec:preliminaries}).
		
		The argument of (2) is similar. Moreover, (3) and (4) are direct from (1) and (2). Finally, 
%		\begin{align*}
%			\E \nprefine \F
%			&\quad\mbox{iff}\quad \forall \qassert, \qassertp \in \mathcal{A}: \qassert \leinf I - \E^\dagger(I - \qassertp) \Rightarrow \qassert \leinf I - \F^\dagger(I - \qassertp)\\
%			&\quad\mbox{iff}\quad \forall \qassertp \in \mathcal{A}: I - \F^\dagger(I - \qassertp) \leinf I - \E^\dagger(I - \qassertp)
%			\quad \text{($\qassert$ arbitrary)}\\
%			&\quad\mbox{iff}\quad \forall \qassertp \in \mathcal{A}: \F^\dagger(\qassertp) \lesup \E^\dagger(\qassertp)
%			\quad \text{(Lemma~\ref{lem:duality}(1))}\\
%			&\quad\mbox{iff}\quad \forall N \in \mathcal{A}: \F^\dagger(N) \lesup \E^\dagger(N)
%			\quad \text{(singleton sets)}\\
%			&\quad\mbox{iff}\quad \forall N \in \mathcal{A}: \F^\dagger(N) \leq_H \E^\dagger(N)
%			\quad \text{(Theorem~\ref{thm:equivalence}(2))}\\
%			&\quad\mbox{iff}\quad \forall N \in \mathcal{A}, \forall \e \in \E, \exists \f \in \F: \f^\dagger(N) \le \e^\dagger(N)
%			\quad \text{(definition of $\leq_H$)}\\
%			&\quad\mbox{iff}\quad \forall \e \in \E, \exists \f \in \F: \f^\dagger(\Omega_{V,V'}) \le \e^\dagger(\Omega_{V,V'})
%			\quad \text{(minimax, then $N = \Omega$)}\\
%			&\quad\mbox{iff}\quad \forall \e \in \E, \exists \f \in \F: \f \le \e
%			\quad \text{(Lemma~\ref{lem:choi}(4))}\\
%			&\quad\mbox{iff}\quad \F \leq_H \E
%			\quad \text{(definition of Hoare order)}.
%		\end{align*}
%		
%		The equivalence with $\downcl \F \subseteq \downcl \E$ follows from the characterization of the Hoare order.
		for (5), when all super-operators in $\E$ and $\F$ are trace-preserving, non-termination is impossible, so total and partial correctness coincide. Moreover, for trace-preserving programs, the approximation order $\e \le \f$ becomes equality $\e = \f$. Thus, both $\E \leq_S \F$ and $\F \leq_H \E$ reduce to $\F \subseteq \E$.
	\end{proof}
	
	Finally, we prove that using effects or sets of effects as quantum state predicates yields the same refinement orders for deterministic quantum programs. This resolves one of the questions posed in Subsection~\ref{subsec:detproj}.
	
	\begin{corollary}\label{cor:nondet}
		For deterministic programs $\e, \f \in \dprog(V)$,
		\begin{enumerate}
			\item $\{\e\} \ntrefine \{\f\}$ iff $\ \e \dtrefine \f$;
			\item $\{\e\} \nprefine \{\f\}$ iff $\ \e \dprefine \f$.
		\end{enumerate}
	\end{corollary}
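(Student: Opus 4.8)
The plan is to derive this corollary directly from the two main characterization theorems, exploiting the fact that on singletons the domain-theoretic powerset orders collapse to the underlying approximation order. The first step is to check that singletons are legitimate nondeterministic programs: for any $\e \in \dprog(V)$, the set $\{\e\}$ is nonempty by construction, convex because the only convex combination of $\e$ with itself is $\e$ again, and closed since a single point is closed in the finite-dimensional Euclidean topology on $\dprog(V)$. Hence $\{\e\} \in \nprog(V)$, and Theorem~\ref{thm:nondet} is applicable to singleton arguments. I would also recall from Section~\ref{sec:preliminaries} that when both arguments are singletons, $\{x\} \leq_S \{y\}$ and $\{x\} \leq_H \{y\}$ each reduce to the base relation $x \le y$.

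For part~(1), I would chain the established equivalences: by Theorem~\ref{thm:nondet}(1), $\{\e\} \ntrefine \{\f\}$ holds iff $\{\e\} \leq_S \{\f\}$; by the singleton collapse of the Smyth order this is equivalent to $\e \le \f$ (the approximation order); and by Theorem~\ref{thm:det}(1) this in turn is equivalent to $\e \dtrefine \f$. For part~(2), the symmetric argument applies with the roles of the orders swapped: Theorem~\ref{thm:nondet}(2) gives $\{\e\} \nprefine \{\f\}$ iff $\{\f\} \leq_H \{\e\}$; the singleton collapse of the Hoare order reduces this to $\f \le \e$, i.e. $\e \ge \f$; and Theorem~\ref{thm:det}(2) identifies this with $\e \dprefine \f$.

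There is essentially no analytic obstacle here, since all the heavy lifting—in particular the minimax exchange underlying Theorem~\ref{thm:nondet}—has already been carried out. The only point requiring care is the bookkeeping verification that $\{\e\}$ satisfies the nonemptiness, convexity, and closedness conditions defining $\nprog(V)$, so that the set-of-effects refinement theorem may legitimately be invoked on singletons; once this is in place, both claims follow by direct composition of the cited equivalences.
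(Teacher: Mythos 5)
Your proposal is correct and follows essentially the same route as the paper's proof: both reduce the claim to the singleton collapse of the Smyth and Hoare orders to the approximation order and then invoke Theorems~\ref{thm:det} and~\ref{thm:nondet}. Your additional check that $\{\e\}$ is nonempty, convex, and closed (hence lies in $\nprog(V)$) is a sensible bit of bookkeeping the paper leaves implicit, but it does not change the argument.
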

	
	\begin{proof}
		For singleton sets, the Smyth and Hoare orders both reduce to the underlying approximation order: $\{\e\} \leq_S \{\f\}$ iff $\{\e\} \leq_H \{\f\}$ iff $\ \e \le \f$. The result then follows immediately from Theorems~\ref{thm:det} and~\ref{thm:nondet}.
	\end{proof}
	
	\subsubsection{Single-Formula Characterization}
	
	As in the deterministic case, refinement can be verified using a single canonical specification.
	
	\begin{theorem}[Single Formula for Nondeterministic Programs]\label{thm:nsingleformula}
		For any $\E, \F \in \nprog(V)$ and $V' \| V$,
		\begin{enumerate}
			\item $\E \ntrefine \F$ iff $\ \F \ntsat (wp^s.\E.\Omega_{V,V'}, \Omega_{V,V'})$;
			\item $\E \nprefine \F$ iff $\ \F \npsat (wlp^s.\E.(I - \Omega_{V,V'}), I - \Omega_{V,V'})$.
		\end{enumerate}
	\end{theorem}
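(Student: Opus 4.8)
The plan is to mirror the structure of the deterministic single-formula result (Theorem~\ref{thm:dsingleformula}), lifting each step from the satisfaction-based characterization to the set-valued setting via the Smyth/Hoare correspondences established in Theorem~\ref{thm:nondet}. I will prove part~(1) in detail and note that part~(2) follows by the dual argument. The backbone is the chain
\begin{equation*}
\E \ntrefine \F \quad\mbox{iff}\quad \E \leq_S \F \quad\mbox{iff}\quad \forall \f \in \F,\ \exists \e \in \E:\ \e \le \f,
\end{equation*}
which is exactly the content of Theorem~\ref{thm:nondet}(1). So everything reduces to showing that this last condition is equivalent to the single satisfaction $\F \ntsat (wp^s.\E.\Omega_{V,V'}, \Omega_{V,V'})$.

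For the forward direction, I would unfold the definition of the weakest precondition. Recall $wp^s.\E.\qassertp = \E^\dagger(\qassertp)$, so $wp^s.\E.\Omega = \{\e^\dagger(\Omega) : \e \in \E\}$ (a singleton postcondition $\{\Omega\}$ collapses the pointwise adjoint to indexing over $\e \in \E$). By the predicate-transformer form of the satisfaction relation, $\F \ntsat (wp^s.\E.\Omega, \Omega)$ holds iff $wp^s.\E.\Omega \leinf \F^\dagger(\{\Omega\})$, i.e. iff $\{\e^\dagger(\Omega) : \e \in \E\} \leinf \{\f^\dagger(\Omega) : \f \in \F\}$. Since both sets are convex and closed (images of convex closed sets under the continuous linear map $\e \mapsto \e^\dagger(\Omega)$), Theorem~\ref{thm:equivalence}(1) converts $\leinf$ into $\leq_S$, giving $\forall \f \in \F,\ \exists \e \in \E:\ \e^\dagger(\Omega) \le \f^\dagger(\Omega)$. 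Finally, Lemma~\ref{lem:choi}(4) (with $A = \Omega$, noting $\e^\dagger(\Omega) \le \f^\dagger(\Omega)$ is the Choi-matrix inequality for the adjoints, hence $\e \le \f$ by clause~(5)) upgrades the pointwise Choi inequality to the full approximation order $\e \le \f$. This is precisely the Smyth condition $\E \leq_S \F$, which Theorem~\ref{thm:nondet}(1) identifies with $\E \ntrefine \F$. The reverse direction simply reads the same chain of equivalences backwards.

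The step I expect to require the most care is the \emph{convexity and closedness of the image set} $\{\e^\dagger(\Omega) : \e \in \E\}$, since Theorem~\ref{thm:equivalence} is stated only for convex closed predicates. Convexity is clear because $\e \mapsto \e^\dagger(\Omega)$ is affine (indeed linear) and $\E$ is convex; closedness follows from compactness of $\E$ together with continuity of the map, so the image is compact and hence closed in the finite-dimensional operator space. I would remark explicitly that these properties are inherited from the defining conditions on $\nprog(V)$, so no extra hypotheses are needed. For part~(2), the identical argument applies with $wlp^s.\E.(I-\Omega) = I - \E^\dagger(I-(I-\Omega)) = I - \E^\dagger(\Omega)$ in place of $wp^s$, using the angelic-order/Hoare-order half of Theorems~\ref{thm:equivalence} and~\ref{thm:nondet} together with the duality in Lemma~\ref{lem:duality}; the trace-loss bookkeeping in the partial-correctness satisfaction relation matches the extra $\tr(\rho)-\tr(\sigma)$ term exactly as in the deterministic proof of Theorem~\ref{thm:dsingleformula}(2).
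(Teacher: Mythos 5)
Your proposal is correct and follows essentially the same route as the paper's own proof: reduce refinement to the Smyth/Hoare order via Theorem~\ref{thm:nondet}, translate the single-formula satisfaction through the predicate-transformer form and Theorem~\ref{thm:equivalence}, and close the loop with the Choi-matrix characterization in Lemma~\ref{lem:choi} (the paper writes out part~(2) and leaves part~(1) as the dual, whereas you do the opposite, which is immaterial). Your explicit verification that $\E^\dagger(\Omega)$ and $\F^\dagger(\Omega)$ are nonempty, convex, and closed—needed to invoke Theorem~\ref{thm:equivalence}—is a point the paper leaves implicit, and is a welcome bit of extra care rather than a departure.
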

	
	\begin{proof}
		We prove (2); the proof of (1) is analogous.
		\begin{align*}
			\E \nprefine \F
			&\quad\mbox{iff}\quad \F \leq_H \E
			\quad \text{(Theorem~\ref{thm:nondet}(2))}\\
			&\quad\mbox{iff}\quad \forall \e \in \E, \exists \f \in \F: \f \le \e
			\quad \text{(definition of $\leq_H$)}\\
			&\quad\mbox{iff}\quad \forall \e \in \E, \exists \f \in \F: \f^\dagger(\Omega_{V,V'}) \le \e^\dagger(\Omega_{V,V'})
			\quad \text{(Lemma~\ref{lem:choi})}\\
			&\quad\mbox{iff}\quad \F^\dagger(\Omega_{V,V'}) \leq_H \E^\dagger(\Omega_{V,V'})
			\quad \text{(definition of $\leq_H$ for sets)}\\
			&\quad\mbox{iff}\quad \F^\dagger(\Omega_{V,V'}) \lesup \E^\dagger(\Omega_{V,V'})
			\quad \text{(Theorem~\ref{thm:equivalence}(2))}\\
			&\quad\mbox{iff}\quad I - \E^\dagger(\Omega_{V,V'}) \leinf I - \F^\dagger(\Omega_{V,V'})
			\quad \text{(Lemma~\ref{lem:duality}(1))}\\
			&\quad\mbox{iff}\quad \F \npsat (I - \E^\dagger(\Omega_{V,V'}), I - \Omega_{V,V'})
			\quad \text{(definition of $\npsat$)}\\
			&\quad\mbox{iff}\quad \F \npsat (wlp^s.\E.(I - \Omega_{V,V'}), I - \Omega_{V,V'})
			\quad \text{(definition of $wlp^s$)}. \qedhere
		\end{align*}
	\end{proof}
	
	Again, Theorem~\ref{thm:nsingleformula} provides a practical verification method: to check whether $\F$ refines $\E$, it suffices to verify a single specification determined by $\E$ and the maximally entangled state.
	
	\subsection{Refinement Under Effect-Based Specifications}
	
	Having characterized refinement using the most expressive predicates (sets of effects), we now investigate whether restricting to simpler predicates alters the refinement orders. This subsection examines the case of individual effects.
	
	We define refinement for nondeterministic programs under effect-based specifications by treating individual effects as singleton sets of effects. Specifically, $\E \dtrefine \F$ if for every $(M, N) \in \dspec(W)$,
	\[
	\E \dtsat (\{M\}, \{N\}) \quad \text{implies} \quad \F \dtsat (\{M\}, \{N\}).
	\]
	The relation $\E \dprefine \F$ is defined analogously for partial correctness.
		
	While we are not able to provide a complete characterization of $\dtrefine$ and $\dprefine$  as clean as Theorem~\ref{thm:nondet}, we can establish that effect-based refinement is strictly weaker than set-of-effects based refinement.
	
	\begin{proposition}[Effects Weaker Than Effect Sets]\label{prop:simpe}
		For nondeterministic programs $\E, \F \in \nprog(V)$,
		\begin{enumerate}
			\item $\E \ntrefine \F$ implies $\E \dtrefine \F$, but the converse is false;
			\item $\E \nprefine \F$ implies $\E \dprefine \F$, but the converse is false.
		\end{enumerate}
	\end{proposition}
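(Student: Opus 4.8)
The two forward implications are immediate. The effect-based relations $\dtrefine$ and $\dprefine$ quantify only over the singleton specifications $(\{M\},\{N\})$, which form a subclass of the set-of-effects specifications ranged over by $\ntrefine$ and $\nprefine$. Since the satisfaction relations coincide ($\dtsat = \ntsat$ and $\dpsat = \npsat$), a program that preserves every set-of-effects specification preserves the singleton ones in particular; hence $\E \ntrefine \F \Rightarrow \E \dtrefine \F$ and $\E \nprefine \F \Rightarrow \E \dprefine \F$.

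For the strictness, the plan is first to unfold the effect-based relations into \emph{bound-preservation} conditions on the weakest-precondition images $\E^\dagger(N) = \{\e^\dagger(N) : \e \in \E\}$. Reasoning exactly as in the proof of Theorem~\ref{thm:nondet}, but holding the postcondition a single effect, one gets $\E \dtsat (\{M\},\{N\})$ iff $M \le \e^\dagger(N)$ for every $\e \in \E$, so that
\[ \E \dtrefine \F \quad\text{iff}\quad \forall N:\ \mathrm{lb}\,\E^\dagger(N) \subseteq \mathrm{lb}\,\F^\dagger(N), \]
where $\mathrm{lb}$ denotes the set of effect lower bounds and $N$ ranges over effects on all finite systems containing $V$. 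Dually, using $wlp.\e.N = I - \e^\dagger(I-N)$ and complementing, $\E \dprefine \F$ iff $\mathrm{ub}\,\E^\dagger(P) \subseteq \mathrm{ub}\,\F^\dagger(P)$ for every effect $P$, with $\mathrm{ub}$ the effect upper bounds. By contrast, Theorem~\ref{thm:nondet} gives the \emph{element-wise} conditions $\E \ntrefine \F \Leftrightarrow \E \leq_S \F$ (every $\f$ dominates some $\e$) and $\E \nprefine \F \Leftrightarrow \F \leq_H \E$ (every $\e$ dominates some $\f$). The gap we must exhibit is precisely the failure of bound-preservation to force element-wise domination.

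For part (1) this gap is witnessed cleanly. Take $V=\{q\}$, let $\e_i(\rho) = |i\rangle\langle i|\,\rho\,|i\rangle\langle i|$ for $i \in \{0,1\}$, and set $\E = \mathrm{conv}\{\e_0,\e_1\}$ and $\F = \{\id\}$. Since $\e_0^\dagger(N)$ and $\e_1^\dagger(N)$, together with all their ancilla extensions, are supported on the orthogonal subspaces spanned by $|0\rangle$ and $|1\rangle$ in the target register, their only common lower bound is $0$; hence $\mathrm{lb}\,\E^\dagger(N) = \{0\} \subseteq \mathrm{lb}\,\F^\dagger(N)$ for every $N$, giving $\E \dtrefine \F$. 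On the other hand $\id - \e$ fails to be positive (hence is not completely positive) at $\rho=|+\rangle\langle+|$ for every $\e \in \E$, so no element of $\E$ lies below $\id$ in the approximation order; thus $\E \not\leq_S \F$ and $\E \ntrefine \F$ fails.

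For part (2) the same idea must be dualised, and this is the step I expect to be the main obstacle. The difficulty is an asymmetry of the (non-lattice) effect poset: whereas two orthogonally supported effects have only the trivial common lower bound $0$, their common \emph{upper} bounds form a large family that does not degenerate under scaling, so the projection gadget cannot simply be reused—indeed monotonicity of $\e^\dagger$ forces any single witnessing program that must satisfy upper-bound preservation at the rank-one postconditions $|i\rangle\langle i|$ to vanish, contradicting nontriviality at $I$. The plan is instead to exploit directly the strict inclusion $\downcl \mathrm{conv}\{A,B\} \subsetneq \mathrm{lb}(\mathrm{ub}\{A,B\})$, witnessed by $A=|0\rangle\langle0|$ and $B=|1\rangle\langle1|$: their unique common upper bound is $I$, so $|+\rangle\langle+|$ lies below every common upper bound yet below no convex combination of $A$ and $B$. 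Realising this separation at the super-operator level, by sets $\E,\F$ of CPTN maps whose adjoints produce such a configuration \emph{simultaneously for all postconditions $P$ and all ancillas}, should yield a pair with $\E \dprefine \F$ but $\F \not\leq_H \E$. Verifying upper-bound preservation uniformly in $P$ is the crux; I would attempt to reduce it, as in Theorem~\ref{thm:nsingleformula}, to the single maximally entangled postcondition $\Omega$, turning the requirement into a finite-dimensional statement about the Choi matrices $\e^\dagger(\Omega)$ where the above effect-theoretic separation can be checked by hand.
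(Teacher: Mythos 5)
Your forward implications and your part (1) are sound. For (1) you use $\F=\{\id\}$ where the paper uses $\F=\{\e_+\}$, but the argument has the same structure as the paper's: show that any singleton specification satisfied by $\E=\mathrm{conv}\{\e_0,\e_1\}$ forces $M=0$ (the paper does this by evaluating at $\rho=|0\rangle\langle 0|$ and $|1\rangle\langle 1|$, you by an orthogonal-support argument on common lower bounds), so that $\E\dtrefine\F$ holds vacuously, and then refute the Smyth order; both choices of $\F$ work.

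Part (2), however, is a genuine gap: you explicitly stop at a plan (``should yield'', ``I would attempt''), and the plan points in an unworkable direction. Theorem~\ref{thm:nsingleformula} characterizes the set-of-effects order $\nprefine$, and its canonical specification has the set-valued precondition $wlp^s.\E.(I-\Omega)=I-\E^\dagger(\Omega)$, not a singleton; it therefore gives no handle on verifying the effect-based order $\dprefine$, and in your intended example $\nprefine$ must \emph{fail}, so that theorem cannot certify anything you need. What is missing is the actual witness. The paper takes $\E=\mathrm{conv}\{\e_\psi : |\psi\rangle\in\h_q,\ \langle\psi|\psi\rangle=1\}$ (all rank-one projection channels, not just two orthogonal ones) and $\F=\{\id\}$. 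Then $\E\dpsat(\{M\},\{N\})$, instantiated at $\rho=|\psi\rangle\langle\psi|$, yields $\langle\psi|(I-M)|\psi\rangle\ge\langle\psi|(I-N)|\psi\rangle$ for every unit $|\psi\rangle$, hence $M\le N$, hence $\F\dpsat(\{M\},\{N\})$; and $\id\not\le\e$ for any $\e\in\E$, so $\F\not\leq_H\E$ and, by Theorem~\ref{thm:nondet}(2), $\E\not\nprefine\F$. Enlarging $\E$ to the full sphere of projections is exactly what resolves the obstacle you identified: it makes the family $\{\e_\psi^\dagger(N)\}$ ``tomographically complete,'' forcing every common upper bound of $\E^\dagger(N)$ to dominate $N$ itself, whereas your two-element gadget fails here (e.g.\ $\tfrac12 I$ is a common upper bound of $\e_0^\dagger(|+\rangle\langle+|)$ and $\e_1^\dagger(|+\rangle\langle+|)$ but does not dominate $|+\rangle\langle+|$). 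So the fix is to enrich the program set $\E$, not to change or canonicalize the postcondition.
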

	
	\begin{proof}
		The implications are straightforward.
		For the reverse part of (1), let $\E = \text{conv}\{\e_0, \e_1\}$ and $\F = \{\e_+\}$, where $\text{conv}$ denotes convex hull, and each $\e_x$ has a single Kraus operator $|x\rangle\langle x|$ for $x \in \{0, 1, +\}$. That is,
		\[
		\e_x(\rho) = |x\rangle\langle x|\rho|x\rangle\langle x| \quad \text{for all } \rho.
		\]
		Suppose $\E \dtsat (\{M\}, \{N\})$ for some effects $M, N$. Then for any state $\rho$,
		\[
		\tr(M\rho) \leq \min \left\{p \cdot \tr(N\e_0(\rho)) + (1-p) \cdot \tr(N\e_1(\rho)) : 0 \leq p \leq 1\right\}.
		\]
		
		Taking $\rho = |0\rangle\langle 0|$ and $|1\rangle\langle 1|$ respectively, we get
	\begin{align*}
	\<0|M|0\> & \leq \min\left\{p\cdot \<0|N|0\>: 0\leq p\leq 1\right\} = 0,\\
	\<1|M|1\> & \leq \min\left\{(1-p)\cdot \<1|N|1\>: 0\leq p\leq 1\right\} = 0.		
\end{align*}
		Thus $\tr(M) = \langle 0|M|0\rangle + \langle 1|M|1\rangle = 0$, implying $M = 0$ since $M$ is positive. As $\F \dtsat (\{0\}, \{N\})$ trivially holds for any $N$, we conclude $\E \dtrefine \F$.
		
		However, $\E \not\ntrefine \F$ because $\E \not\leq_S \F$. To see this, note that $\e_+ \in \F$, but for no convex combination $p\e_0 + (1-p)\e_1$ do we have $p\e_0 + (1-p)\e_1 \le \e_+$. Indeed, the Kraus operator $|+\rangle\langle +|$ is not in the span of $\{|0\rangle\langle 0|, |1\rangle\langle 1|\}$, so no convex combination can approximate it in the approximation order.
		
		For the reverse part of (2), let $\E = \text{conv}\{\e_\psi : |\psi\rangle \in \h_q, \<\psi|\psi\> =1\}$ (all rank-one projections) and $\F = \{\id\}$ (identity), where each $\e_\psi$ has Kraus operator $|\psi\rangle\langle\psi|$.
		
		Suppose $\E \dpsat (\{M\}, \{N\})$ for effects $M, N$. Then for any $\rho$ and any $|\psi\rangle$, 
		\[
		\tr((I - M)\rho) \geq \tr(|\psi\rangle\langle\psi|(I - N)|\psi\rangle\langle\psi|\rho).
		\]
		Taking $\rho = |\psi\rangle\langle\psi|$, we have
		\[
		\langle\psi|(I - M)|\psi\rangle \geq \langle\psi|(I - N)|\psi\rangle.
		\]
		Since this holds for all $|\psi\rangle$, we must have $I - M \ge I - N$, i.e., $M \le N$. But this means $\F \dpsat (\{M\}, \{N\})$, so $\E \dprefine \F$.	On the other hand, since $\id \not\le \e$ for any $\e\in \E$, we have $\F \not\leq_H \E$, which means
		$\E \not\le_P^s \F$. 	
	\end{proof}
	
	\subsection{Refinement Under Projector-Based Specifications}
	
	Finally, we examine refinement orders when specifications are further restricted to projectors, the most limited predicate class.
	
	\subsubsection{Predicate Transformers for Projectors}
	
	As in the deterministic case (Definition~\ref{def:wlp}), we can define weakest preconditions $wp^a$, weakest liberal preconditions $wlp^a$, and strongest postconditions $sp^a$ for nondeterministic programs when projectors are used as predicates. The following lemma provides explicit formulas.
	
	\begin{lemma}[Projector Transformers for Nondeterministic Programs]\label{lem:nwpwlpsp}
		For $\E \in \nprog(V)$ and projectors $P, Q \in \s(\h_W)$,
\[
wp^a.\E.Q = \bigwedge_{\e\in \E} E(\e^\dag(Q)),
\quad
wlp^a.\E.Q = \bigwedge_{\e\in \E} \mathcal{N}(\e^\dag(Q^\bot)),
\quad
sp^a.\E.P = \bigvee_{\e\in \E} \supp{\e(P)}
\]
	\end{lemma}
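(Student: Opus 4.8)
The plan is to reduce the three nondeterministic transformers to their deterministic counterparts from Lemma~\ref{lem:wpwlpsp}, using the demonic reading of nondeterminism. First I would make the nondeterministic projector satisfaction relations explicit. Under the demonic interpretation adopted throughout---every possible resolution of the nondeterminism must meet the specification---we have $\E \atsat (P,Q)$ iff $\e \atsat (P,Q)$ for every $\e \in \E$, and likewise $\E \apsat (P,Q)$ iff $\e \apsat (P,Q)$ for every $\e \in \E$. This can also be seen by specialising the $\Exp_{\mathsf{dem}}$-based definitions to the singleton predicates $\{P\}$ and $\{Q\}$: the infimum over $\E(\rho)$ collapses the condition into a universal quantification over $\e \in \E$, and for projectors the effect-satisfaction of each $\e$ coincides with its qualitative satisfaction.

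For $wp^a.\E.Q$ I would then chain this reduction with the deterministic characterisation. By Lemma~\ref{lem:wpwlpsp} (via Lemma~\ref{lem:EN}), $\e \atsat (P,Q)$ holds iff $P \le E(\e^\dag(Q))$. Conjoining over all $\e \in \E$ gives $\E \atsat (P,Q)$ iff $P \le \bigwedge_{\e \in \E} E(\e^\dag(Q))$. Because $\s(\h_W)$ is a complete lattice this meet exists, and the equivalence exhibits it as the largest projector satisfying the specification; hence it equals $wp^a.\E.Q$. The case of $wlp^a.\E.Q$ is identical, with $\apsat$ replacing $\atsat$ and $wlp^p.\e.Q = \mathcal{N}(\e^\dag(Q^\bot))$ replacing $wp^p$, producing $\bigwedge_{\e \in \E} \mathcal{N}(\e^\dag(Q^\bot))$.

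For the strongest postcondition I would dualise the reasoning. Here $\E \apsat (Q,R)$ iff $\e \apsat (Q,R)$ for every $\e$, which by Lemma~\ref{lem:wpwlpsp} is $\supp{\e(Q)} \le R$ for every $\e$, equivalently $\bigvee_{\e \in \E} \supp{\e(Q)} \le R$. The smallest such $R$ is this join, so $sp^a.\E.P = \bigvee_{\e \in \E}\supp{\e(P)}$ after renaming $Q$ to $P$.

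The only step needing genuine care is checking that these extremal projectors exist and are actually attained by $\E$, so that the defining extremality clauses are well-posed. I expect this to be the easy part rather than the main obstacle: completeness of $\s(\h_W)$ guarantees that the meets and joins over the possibly infinite set $\E$ exist, and the decoupling of the quantifier over $\e$ makes $\E \atsat (\bigwedge_{\e \in \E} E(\e^\dag(Q)), Q)$ immediate. In particular---and in contrast with the effect-based Theorem~\ref{thm:nondet}---no convexity, closedness, or Sion minimax argument is needed: the qualitative, all-or-nothing character of projector satisfaction lets lattice completeness discharge the quantification over program behaviours directly.
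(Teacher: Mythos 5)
Your proposal is correct and follows essentially the same route as the paper: the paper's proof likewise observes that $\E \models_* (P,Q)$ iff $\e \models_* (P,Q)$ for every $\e \in \E$ (for $* \in \{\mathit{tot}, \mathit{par}\}$) and then invokes Lemma~\ref{lem:wpwlpsp} for individual programs. You merely spell out the details the paper leaves implicit—the collapse of the demonic infimum into a universal quantifier, and the lattice-completeness argument showing the resulting meets and joins are the extremal projectors—which is exactly the intended reading.
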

	\begin{proof}
		Note that for $*\in \{\tot, \pal\}$, \[
		\E\models_* (P,Q) \qquad \mbox{iff}\qquad \forall \e\in \E: \e\models_* (P,Q).
		\]
		Then the results follow directly from Lemma~\ref{lem:wpwlpsp} for individual programs.
	\end{proof}

	The predicate transformers for nondeterministic programs inherit properties from their deterministic counterparts.
	
	\begin{theorem}[Properties of Nondeterministic Projector Transformers]\label{thm:nptransform}
		For any $\E \in \nprog(V)$, projectors $P, Q \in \s(\h_W)$, and $xp^p \in \{wp^p, wlp^p\}$,
		\begin{enumerate}
			\item (Galois connection) $sp^p.\E.(wlp^p.\E.Q) \le Q$ and $P \le wlp^p.\E.(sp^p.\E.P)$;
			\item $wp^p.\E.Q = wlp^p.\E.Q \wedge wp^p.\E.I_W$;
			\item $wp^p.\E.0 = sp^p.\E.0 = 0$ and $wlp^p.\E.I_W = I_W$;
			\item (Monotonicity) If $P \le Q$, then $xp^p.\E.P \le xp^p.\E.Q$ and $sp^p.\E.P \le sp^p.\E.Q$;
			\item (Conjunction) $xp^p.\E.(P \wedge Q) = xp^p.\E.P \wedge xp^p.\E.Q$;
			\item  $sp^p.\E.(P \wedge Q) \le sp^p.\E.P \wedge sp^p.\E.Q$;
			\item $xp^p.\E.P \vee xp^p.\E.Q \le xp^p.\E.(P \vee Q)$;
			\item (Disjunction) $sp^p.\E.P \vee sp^p.\E.Q = sp^p.\E.(P \vee Q)$;
			\item (Frame axiom) If $R \in \s(\h_{W'})$ with $W' \cap (V \cup W) = \emptyset$,
			\begin{align*}
				wp^p.\E.(Q \otimes R) &= (wp^p.\E.Q) \otimes R,\\
				sp^p.\E.(P \otimes R) &= (sp^p.\E.P) \otimes R,\\
				wlp^p.\E.(Q \otimes R) &\ge (wlp^p.\E.Q) \otimes R.
			\end{align*}
		\end{enumerate}
	\end{theorem}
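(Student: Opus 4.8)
The plan is to reduce every clause to its deterministic counterpart (Theorems~\ref{thm:galois} and~\ref{thm:ptransformdet}) through the pointwise representations furnished by Lemma~\ref{lem:nwpwlpsp}. Writing $xp^p \in \{wp^p, wlp^p\}$, these representations say that
\[
xp^p.\E.Q = \bigwedge_{\e \in \E} xp^p.\e.Q \qquad \text{and} \qquad sp^p.\E.P = \bigvee_{\e \in \E} sp^p.\e.P ,
\]
so the nondeterministic transformers are meets (for $wp^p, wlp^p$) and joins (for $sp^p$) of their deterministic versions over $\e \in \E$. The whole proof is then a matter of pushing meets and joins through identities and inequalities already proved for each individual $\e$, using only the complete-lattice structure of $\s(\h_W)$ together with the behaviour of $\otimes$ on subspaces.

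First I would dispatch the routine clauses. The boundary values (3) are immediate, since $\bigwedge_\e 0 = 0$, $\bigvee_\e 0 = 0$, and $\bigwedge_\e I_W = I_W$. Monotonicity (4) follows because an arbitrary meet or join of monotone maps is monotone. For the algebraic identities (2), (5) and (8), I would invoke the corresponding deterministic identity inside the meet/join and then exchange the two lattice operations, which is legitimate because meets commute with meets and joins with joins in a complete lattice; for instance
\[
wp^p.\E.Q = \bigwedge_\e \left(wlp^p.\e.Q \wedge wp^p.\e.I_W\right) = \Big(\bigwedge_\e wlp^p.\e.Q\Big) \wedge \Big(\bigwedge_\e wp^p.\e.I_W\Big) = wlp^p.\E.Q \wedge wp^p.\E.I_W ,
\]
and analogously for the conjunction identity (5) and the join identity (8). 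The inequalities (6) and (7) then need no separate work, as they follow from the monotonicity established in (4): e.g. from $P \wedge Q \le P$ and $P \wedge Q \le Q$ one gets $sp^p.\E.(P\wedge Q) \le sp^p.\E.P \wedge sp^p.\E.Q$, and dually for (7).

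The Galois connection (1) is the first genuinely structural step. Here I would lift the deterministic adjunction ``$P \le wlp^p.\e.Q$ iff $sp^p.\e.P \le Q$'' (Theorem~\ref{thm:galois}(1)) by a single chain of equivalences:
\[
P \le wlp^p.\E.Q \;\text{iff}\; \forall \e\!:\, P \le wlp^p.\e.Q \;\text{iff}\; \forall \e\!:\, sp^p.\e.P \le Q \;\text{iff}\; \bigvee_\e sp^p.\e.P \le Q \;\text{iff}\; sp^p.\E.P \le Q .
\]
Thus $(wlp^p.\E,\, sp^p.\E)$ is again a Galois connection, and the two stated inequalities are precisely its unit and counit.

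The main obstacle I expect is the frame axiom (9), since it is the only clause that does not reduce to the lattice algebra of $\s(\h_W)$ alone. Each deterministic frame identity from Theorem~\ref{thm:ptransformdet}(7) gives $xp^p.\e.(Q\otimes R) = (xp^p.\e.Q)\otimes R$ (with the $\ge$ variant for $wlp^p$), so after taking meets/joins the remaining task is the purely geometric distributivity law
\[
\Big(\bigwedge_\e A_\e\Big) \otimes R = \bigwedge_\e (A_\e \otimes R) \qquad \text{and} \qquad \Big(\bigvee_\e A_\e\Big) \otimes R = \bigvee_\e (A_\e \otimes R)
\]
for subspaces $A_\e \subseteq \h_W$ and a fixed $R \subseteq \h_{W'}$ on a disjoint system. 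I would prove the meet case by a support argument: any $|\psi\rangle \in \bigcap_\e (A_\e \otimes R)$ lies in each projector range $A_\e \otimes R$, so expanding $|\psi\rangle = \sum_k |a_k\rangle \otimes |r_k\rangle$ in an orthonormal basis $\{|r_k\rangle\}$ of $R$ forces $|a_k\rangle \in A_\e$ for every $\e$ and every $k$, whence $|a_k\rangle \in \bigwedge_\e A_\e$ and $|\psi\rangle \in (\bigwedge_\e A_\e)\otimes R$; the reverse inclusion is monotonicity of $\otimes R$. The join case is the dual spanning argument, with finite-dimensionality making all closures automatic. Combining this distributivity with the deterministic frame identities yields the three frame relations, the inequality for $wlp^p$ surviving because both $\otimes R$ and the meet are monotone.
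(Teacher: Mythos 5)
Your proposal is correct and follows essentially the same route as the paper: the paper's own proof is a one-line reduction to the deterministic properties (Theorem~\ref{thm:ptransformdet}) via the meet/join representations of Lemma~\ref{lem:nwpwlpsp}, which is exactly your strategy of pushing meets and joins through the deterministic identities. Your explicit handling of the frame axiom---proving that $\otimes R$ distributes over arbitrary meets and joins of subspaces---supplies a detail the paper leaves implicit, and your argument for it (component extraction against an orthonormal basis of $R$ for the meet, spanning for the join) is sound.
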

	
	\begin{proof}
		All properties follow from the corresponding properties for deterministic programs (Theorem~\ref{thm:ptransformdet}) combined with the meet/join characterizations in Lemma~\ref{lem:nwpwlpsp}. For instance, monotonicity of $wp^p.\E$ follows from monotonicity of each $wp^p.\e$ and the fact that the meet of monotonic functions is monotonic.
	\end{proof}
	
	\subsubsection{Refinement Characterization}
	
	We define refinement orders for projector-based specifications analogously to the effect case: $\E \atrefine \F$ if for all projectors $P, Q \in \s(\h_W)$,
	\[
	\E \atsat (\{P\}, \{Q\}) \quad \text{implies} \quad \F \atsat (\{P\}, \{Q\}),
	\]
	and similarly for $\E \aprefine \F$ with partial correctness.
	It is straightforward to verify that $\E \atrefine \F$ iff $\ \forall Q: wp^p.\E.Q \le wp^p.\F.Q$, and $\E \aprefine \F$ iff $\ \forall Q: wlp^p.\E.Q \le wlp^p.\F.Q$ (equivalently, $\forall P: sp^p.\F.P \le sp^p.\E.P$). 
	
	In the following, we extend Theorem~\ref{thm:detproj} to nondeterministic programs.
	To this end, we first extend the notion of termination space to nondeterministic programs.
		
		\begin{definition}[Termination Space for Nondeterministic Programs]
			For a nondeterministic quantum program $\E \in \nprog(V)$, the \emph{termination space} is defined as
			\[
			T_\E = \bigcap_{\e \in \E} T_\e = \bigcap_{\e \in \E} \left\{|\psi\rangle \in \h_V : \tr(\e(|\psi\rangle\langle\psi|)) = \tr(|\psi\rangle\langle\psi|)\right\}.
			\]
			This is the subspace of states on which \emph{all} possible behaviors of $\E$ preserve trace (i.e., terminate with certainty under demonic choice).
		\end{definition}
		
		Note that from Lemma~\ref{lem:nwpwlpsp}, we have
		\[
		T_\E = wp^p.\E.I_V = \bigwedge_{\e \in \E} wp^p.\e.I_V = \bigwedge_{\e \in \E} E(\e^\dagger(I_V)).
		\]
		Now we can state the complete characterization.
		
		\begin{theorem}[Projector-Based Refinement Characterization]\label{thm:ndetproj_complete}
			For any $\E, \F \in \nprog(V)$,
			\begin{enumerate}
				\item $\E \aprefine \F$ iff
				\[
				\bigvee_{\f \in \F} \spann(\f) \subseteq \bigvee_{\e \in \E} \spann(\e);
				\]
				
				\item $\E \atrefine \F$ iff $\ T_\E \le T_\F$ and
				\[
				\bigvee_{\f \in \F} \spann(\f \circ \p_{T_\E}) \subseteq \bigvee_{\e \in \E} \spann(\e \circ \p_{T_\E});
				\]
				
				\item If both $\E$ and $\F$ contain only trace-preserving super-operators, then $\E \atrefine \F$ iff $\ \E \aprefine \F$.
			\end{enumerate}
		\end{theorem}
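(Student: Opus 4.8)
The plan is to lift the deterministic characterization of Theorem~\ref{thm:detproj} to the nondeterministic setting, exploiting the fact that, by Lemma~\ref{lem:nwpwlpsp}, the projector transformers for $\E$ and $\F$ are just meets (for $wp^a,wlp^a$) and joins (for $sp^a$) of the corresponding single-program transformers. Consequently each inequality in the refinement characterizations $\E\aprefine\F$ iff $\forall P:\ sp^a.\F.P\le sp^a.\E.P$, and $\E\atrefine\F$ iff $\forall Q:\ wp^a.\E.Q\le wp^a.\F.Q$, unfolds into a statement about aggregated Kraus families, after which the linear-algebraic arguments of Theorem~\ref{thm:detproj} apply almost verbatim.

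For part (1), I would substitute $sp^a.\E.P=\bigvee_{\e\in\E}\supp{\e(P)}$ and write $\mathcal{E}_K$ (resp.\ $\mathcal{F}_K$) for the union over $\e\in\E$ (resp.\ $\f\in\F$) of the Kraus operators of each member. Then $\bigvee_{\e\in\E}\supp{\e(P)}=\spann\{E|\psi\rangle:E\in\mathcal{E}_K,\ |\psi\rangle\in P\}$ and $\bigvee_{\e\in\E}\spann(\e)=\spann(\mathcal{E}_K)$, so the refinement condition becomes $\spann\{F|\psi\rangle:F\in\mathcal{F}_K,\ |\psi\rangle\in P\}\le\spann\{E|\psi\rangle:E\in\mathcal{E}_K,\ |\psi\rangle\in P\}$ for all $P$. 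Specializing to rank-one $P=|\psi\rangle\langle\psi|$ and running the witness argument from the proof of Theorem~\ref{thm:detproj}(1)—if operator-span inclusion fails, produce $|x\rangle,|y\rangle$ with $\langle x|E|y\rangle=0$ for all $E\in\mathcal{E}_K$ yet $\langle x|F|y\rangle\neq 0$ for some $F\in\mathcal{F}_K$—yields equivalence with $\spann(\mathcal{F}_K)\subseteq\spann(\mathcal{E}_K)$, i.e.\ the stated condition. That $\mathcal{E}_K$ may be uncountable is irrelevant, since the entire argument lives inside the finite-dimensional space $\l(\h_V)$.

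For part (2), I would follow the deterministic template. The characterization together with Theorem~\ref{thm:nptransform}(2) gives $\E\atrefine\F$ iff $\forall Q:\ T_\E\wedge wlp^a.\E.Q\le T_\F\wedge wlp^a.\F.Q$. Taking $Q=I_V$ and using $wlp^a.\E.I_V=I_V$ isolates $T_\E\le T_\F$; meeting both sides with $T_\E$ then reduces the rest to $\forall Q:\ T_\E\wedge wlp^a.\E.Q\le T_\E\wedge wlp^a.\F.Q$. To identify this with $\E\circ\p_{T_\E}\aprefine\F\circ\p_{T_\E}$, I would establish the nondeterministic identity $wlp^a.(\E\circ\p_{T_\E}).Q=T_\E^\bot\vee(T_\E\wedge wlp^a.\E.Q)$ by applying Lemma~4.2 of~\cite{feng2023refinement} to each $\e\in\E$ and then commuting $\bigwedge_{\e\in\E}$ past the join with the fixed projector $T_\E^\bot$. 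Part (1), applied to $\E\circ\p_{T_\E}$ and $\F\circ\p_{T_\E}$ (again nonempty, convex, and closed, since postcomposition with a fixed super-operator is affine and continuous on the compact set of CPTN maps), then turns the surviving $wlp^a$-inequality into the span condition of the statement. Part (3) is immediate: trace-preservation forces $T_\E=T_\F=I_V$ and $\p_{T_\E}=\id$, so the condition of (2) collapses to that of (1).

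The step I expect to be the main obstacle is the lattice identity used in part (2). Because $\s(\h_V)$ is non-distributive, one cannot in general commute the infinite meet $\bigwedge_{\e\in\E}$ with the join $T_\E^\bot\vee(\cdot)$. Here it is legitimate precisely because each $X_\e:=T_\E\wedge wlp^p.\e.Q$ satisfies $X_\e\le T_\E$ and is therefore orthogonal to $T_\E^\bot$; for a fixed summand orthogonal to every $X_\e$, the orthomodular law yields $\bigwedge_{\e}(T_\E^\bot\vee X_\e)=T_\E^\bot\vee\bigwedge_{\e}X_\e$, which is exactly what is needed. Pinning down this orthogonality-based distributivity—rather than any genuinely new quantum input—is what makes the deterministic proof lift cleanly to sets of programs.
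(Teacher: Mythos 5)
Your proposal is correct and takes essentially the same approach as the paper: the paper's own proof of this theorem merely substitutes the meet/join formulas of Lemma~\ref{lem:nwpwlpsp} and invokes ``the same techniques'' as Theorem~\ref{thm:detproj}, which is precisely the lift you carry out, with parts (1) and (3) matching the paper's argument verbatim. In addition, you make explicit the step the paper leaves implicit for part (2), namely the orthogonality-justified identity $\bigwedge_{\e\in\E}\bigl(T_\E^\bot\vee(T_\E\wedge wlp^p.\e.Q)\bigr)=T_\E^\bot\vee\bigl(T_\E\wedge wlp^p.\E.Q\bigr)$, which is exactly what is needed to push Lemma 4.2 of~\cite{feng2023refinement} through the infinite meet in the non-distributive projector lattice.
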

		
		\begin{proof}
			(1): From Lemma~\ref{lem:nwpwlpsp}, $sp^p.\E.P = \bigvee_{\e \in \E} \supp{\e(P)}$. Thus
			\begin{align*}
				\E \aprefine \F
				&\quad\mbox{iff}\quad \forall P: sp^p.\F.P \le sp^p.\E.P\\
				&\quad\mbox{iff}\quad \forall P: \bigvee_{\f \in \F} \supp{\f(P)} \le \bigvee_{\e \in \E} \supp{\e(P)}.
			\end{align*}
			Following the argument in Theorem~\ref{thm:detproj}(1), this holds for all $P$ iff the inclusion of Kraus operator spans holds.
			
			Clauses (2) and (3) can be proved similarly by applying the same techniques employed in the proof of Theorem~\ref{thm:detproj}(2) and (3).
					\end{proof}

	As a result, weakening the state predicates from effects to projectors leads to a strictly weaker notion of refinement orders for nondeterministic quantum programs.

	\begin{proposition}[Projectors Weaker Than Effects]\label{prop:neimpp}
		For nondeterministic programs $\E, \F \in \nprog(V)$,
		\begin{enumerate}
			\item $\E \dtrefine \F$ implies $\E \atrefine \F$, but the converse is false;
			\item $\E \dprefine \F$ implies $\E \aprefine \F$, but the converse is false.
		\end{enumerate}
	\end{proposition}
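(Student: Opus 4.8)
\emph{Overall plan.} The plan is to derive both forward implications as structural consequences of projectors being a special subclass of effects, and to establish strictness by transporting the deterministic counterexamples of Proposition~\ref{lem:eimpp} to singleton sets, using the fact that singletons embed the deterministic theory faithfully.

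\emph{Forward implications.} The crux is the observation that, for a fixed $\E\in\nprog(V)$ and projectors $P,Q\in\s(\h_W)$, the projector-based satisfaction and the effect-based satisfaction of the specification $(\{P\},\{Q\})$ are literally the same condition: both unfold to
\[
\forall \rho:\quad \tr(P\rho)\ \le\ \inf_{\e\in\E}\tr\!\left(Q\,\e(\rho)\right),
\]
since for a single program the qualitative projector satisfaction coincides with the quantitative effect satisfaction (Subsection~\ref{subsec:detproj}), and both forms reduce to ``$\e\models_*(P,Q)$ for every $\e\in\E$''. Consequently, the family of projector specifications satisfied by a program is exactly the restriction, to projector-valued predicates, of the family of effect specifications it satisfies. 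Hence if $\E\dtrefine\F$ (so $\F$ preserves every effect specification of $\E$), then in particular $\F$ preserves every projector specification of $\E$, which is $\E\atrefine\F$; the identical argument with partial correctness gives $\E\dprefine\F\Rightarrow\E\aprefine\F$.

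\emph{Strictness.} I would first record that singletons embed the deterministic theory: since the demonic (infimum) satisfaction of a singleton specification by $\{\e\}$ collapses to the single-program satisfaction, we have $\{\e\}\dtrefine\{\f\}$ iff $\e\dtrefine\f$ and $\{\e\}\atrefine\{\f\}$ iff $\e\atrefine\f$, together with the partial-correctness analogues. Thus any separating pair from Proposition~\ref{lem:eimpp} lifts verbatim. For (1), take $\E=\{\frac{1}{2}\id\}$ and $\F=\{0\}$: by the remark following Proposition~\ref{lem:eimpp}, $\frac{1}{2}\id\equiv_T^p 0$, whence $\E\atrefine\F$, while $\frac{1}{2}\id\not\le 0$ gives $\E\not\dtrefine\F$ via Theorem~\ref{thm:det}(1). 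For (2), take $\E=\{\frac{1}{2}\id\}$ and $\F=\{\id\}$: here $\frac{1}{2}\id\equiv_P^p\id$ yields $\E\aprefine\F$, whereas $\frac{1}{2}\id\not\ge\id$ gives $\E\not\dprefine\F$ via Theorem~\ref{thm:det}(2). Each witness is a singleton, hence trivially nonempty, convex, and closed, and $\{0\}$ is admissible since the zero super-operator is CPTN; so all four programs indeed lie in $\nprog(V)$.

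\emph{Main obstacle.} There is no deep difficulty: the proposition is a corollary of the earlier characterizations. The one point deserving care is the bridging observation above---verifying that the infimum (demonic) satisfaction of a \emph{singleton} effect specification really does collapse to the single-program projector satisfaction, so that projector refinement is genuinely the restriction of effect refinement. Once this is in hand, both the forward implications and the singleton lifting of the counterexamples are routine.
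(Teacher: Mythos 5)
Your proof is correct and takes essentially the same route as the paper: the forward implications follow because projector-pair specifications are just a subclass of effect-pair specifications (both satisfactions reduce to the same inequality over all $\e\in\E$), and strictness is obtained by viewing the deterministic counterexamples of Proposition~\ref{lem:eimpp} as singleton nondeterministic programs. Your explicit bridging observation, the singleton-embedding check, and the choice of witnesses $\{\tfrac{1}{2}\id\}$ versus $\{0\}$ and $\{\id\}$ (taken from the remark inside the proof of Proposition~\ref{lem:eimpp}) merely spell out details the paper leaves implicit.
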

	
	\begin{proof}
	The implications are straightforward.
	Furthermore, since a deterministic program $\e$ can be viewed as the singleton nondeterministic program $\{\e\}$, the counterexamples from Proposition~\ref{lem:eimpp} apply directly to show that the reverse parts of (1) and (2) are not true.
	\end{proof}

	\section{Conclusion}
	\label{sec:conclusion}
	
	In this paper, we have systematically investigated refinement orders for both deterministic and nondeterministic quantum programs. By analyzing the relationships between different refinement orders induced by various quantum predicate types (projectors, effects, and sets of effects) and correctness criteria (total and partial), we provide a clear and comprehensive map of the quantum refinement landscape. Our results establish precise characterizations of these orders in terms of intrinsic mathematical structures—complete positivity for deterministic programs, and Hoare and Smyth orders for nondeterministic programs, thereby offering a solid semantic foundation for the development of practical refinement calculi.
	
	On the practical side, our findings underscore the critical influence of predicate expressiveness on the granularity of refinement: for deterministic programs, effect-based and set-of-effect-based specifications yield the same orders, while projector-based specifications yield strictly weaker ones. For nondeterministic programs, the choice between effects, sets of effects, and projectors leads to a hierarchy of refinement strengths. These insights guide language designers and verification engineers in selecting appropriate predicate types to match their refinement goals. Moreover, the characterizations enable the transfer of well-established domain-theoretic results to quantum program refinement, facilitating the development of sound and complete proof rules.
	
	Several promising directions for future work emerge from this study:
\begin{enumerate}
	\item \emph{Extension to infinite-dimensional Hilbert spaces.} While our framework focuses on finite-dimensional spaces, quantum field theory and continuous-variable quantum computing require infinite-dimensional models. Extending our results to such settings presents a significant mathematical challenge but is essential for a fully general theory.
	\item \emph{Refinement for hybrid quantum–classical programs.} Hybrid algorithms that intertwine quantum and classical computation are increasingly important. Developing a unified refinement framework that handles both quantum and classical components, including their interplay, is a natural and valuable extension.
	\item \emph{Concrete refinement calculi and tool support.} Our semantic characterizations pave the way for concrete refinement logics and automated tool support. Designing practical refinement rules and integrating them into quantum programming environments would directly impact the rigorous development of quantum software.
\end{enumerate}

	\section*{Acknowledgement}
	
	This research was partially supported by the Innovation Program for Quantum Science and Technology under Grant No. 2024ZD0300500, the National Natural Science Foundation of China under Grant 92465202, and the Tianyan Quantum Computing Program. We also express our gratitude to Minbo Gao for his valuable insights on applying the Sion Minmax Theorem in the proof of Theorem~\ref{thm:equivalence}.

	\bibliography{ref}
	
\end{document}